\newtheorem{theorem}{Theorem}[section]
\newtheorem{corollary}{Corollary}[section]
\theoremstyle{definition}
\newtheorem{remark}{Remark}[section]
\newtheorem{example}{Example}[section]
\newcommand{\braket}[1]{\langle{#1}\rangle}
\newcommand{\lshad}{[\![}
\newcommand{\rshad}{]\!]}
\DeclareMathOperator{\tr}{tr}
\DeclareMathOperator{\Ham}{Ham}
\renewcommand{\d}[1]{\mathrm{d}{#1}\,}
\journal{Annals of Physics}
\begin{document}

\begin{frontmatter}

\title{Phase space quantum mechanics}
\author{Maciej B{\l}aszak}
\ead{blaszakm@amu.edu.pl}
\author{Ziemowit Doma{\'n}ski}
\ead{ziemowit@amu.edu.pl}
\address{Faculty of Physics, Adam Mickiewicz University, Umultowska 85, 61-614 Pozna{\'n}, Poland}

\begin{abstract}
The paper develope the alternative formulation of quantum mechanics known as the phase space quantum mechanics or deformation quantization.
It is shown that the quantization naturally arises as an appropriate deformation of the classical Hamiltonian mechanics. More precisely, the deformation
of the point-wise product of observables to an appropriate noncommutative $\star$-product and the deformation of the Poisson bracket to an appropriate Lie
bracket is the key element in introducing the quantization of classical Hamiltonian systems.

The formalism of the phase space quantum mechanics is presented in a very systematic way for the case of any smooth Hamiltonian function and for a very
wide class of deformations. The considered class of deformations and the corresponding $\star$-products contains as a special cases all deformations which
can be found in the literature devoted to the subject of the phase space quantum mechanics.

Fundamental properties of $\star$-products of observables, associated with the considered deformations are presented as well. Moreover, a space of states
containing all admissible states is introduced, where the admissible states are appropriate pseudo-probability distributions defined on the phase space.
It is proved that the space of states is endowed with a structure of a Hilbert algebra with respect to the $\star$-multiplication.

The most important result of the paper shows that developed formalism is more fundamental then the axiomatic ordinary quantum mechanics which appears in
the presented approach as the intrinsic element of the general formalism.
The equivalence of two formulations of quantum mechanics is proved by observing that the Wigner-Moyal transform has all properties of the tensor product. This observation allows writing many previous results found in
the literature in a transparent way, from which the equivalence of the two formulations of quantum mechanics follows naturally.

In addition, examples of a free particle and a simple harmonic oscillator
illustrating the formalism of the deformation quantization and its classical limit are given.
\end{abstract}

\begin{keyword}
quantum mechanics \sep deformation quantization \sep star product \sep Moyal product \sep Wigner function
\end{keyword}

\end{frontmatter}

%%%%%%%%%%%%%%%%%%%%%%%%%%%%%%%%%%%%%%%%%%%%%%%%%%%%%%%%%%%%%%%%%%%%%%%%%%%%%%%%%%%%%%%%%%%%%%%%%%%%%%%%%%%%%%%%%%%%%%%%%%%%%%%%%%%%%%%%%%%%%%%%%%%%%%%%%%
\section{Introduction}
In the 1970's \citet{Bayen:1978a, Bayen:1978b} laid the foundations of an alternative description of quantum mechanics, referred to as the \emph{phase
space quantum mechanics} or \emph{deformation quantization}, see also \cite{Dito.Sternheimer:2002, Gutt:2000, Weinstein:1994b} for recent reviews. Their
work was based on earlier works of \citet{Weyl:1927, Weyl:1931}, \citet{Wigner:1932}, \citet{Groenewold:1946}, \citet{Moyal:1949} and \citet{Berezin:1975a,
Berezin:1975b, Berezin:1975c} on the physical side and of \citet{Gerstenhaber:1963, Gerstenhaber:1964, Gerstenhaber:1966, Gerstenhaber:1968,
Gerstenhaber:1974}, and \citet{Gerstenhaber:1988} deformation theory of associative algebras on the mathematical side. Since then many efforts have been
made in order to develop the phase space quantum mechanics \cite{Fairlie:1964, Omori:1991, Curtright:1998, Curtright:1999a, Zachos:2002, Zachos:2005,
Hirshfeld:2002a, Hirshfeld:2002b, Bordemann:1998, Waldmann:2005, Dias:2004a, Dias:2004b, Sternheimer:1998, Nasiri:2006, Dias:2001, Gosson:2005, Gosson:2006,
Gosson:2008, Bracken:2010, Smith:2006, Dias:2010, Cohen:1966}. It became one of the most attractive and successful quantization theories, both from the
mathematical and physical point of view, with wide range of applications beyond the original quantization problem, including quantum optics, theory of
dynamical systems, field theory \cite{Curtright:1999b, Antonsen:1997a, Dito:2002, Dito:1990, Dito:1992} and M-theory \cite{Fairlie:1998, Baker:1999,
Pinzul:2001}. Recently the most prominent application come from noncommutative geometry \cite{Connes:1994} and the noncommutative field theory models
arising from it \cite{Bahns:2003, Connes:1998, Jurco.Moller:2001, Jurco:2000, Jurco:2001, Seiberg:1999}.

The ordinary description of quantum mechanics is given by a set of axioms from which the connection with classical mechanics is not evident. The natural
formulation of quantum mechanics seems to be a result of a generalization of classical Hamiltonian mechanics, in such a way that the new formulation of
quantum mechanics should smoothly reduce to the formulation of classical mechanics as the Planck constant $\hbar$ goes to 0. The phase space quantum
mechanics is such a natural formulation of quantum theory.

The idea behind the phase space quantum mechanics relies on a deformation, with respect to some parameter $\hbar$ (the Planck constant), of a geometrical
structure of a classical phase space $M$. Since, for a given manifold $M$ geometrical structures on it (in our case the Poisson tensor) can be defined
in terms of the algebra $C^\infty(M)$ (the algebra of smooth complex valued functions on $M$), the deformation of the algebra $C^\infty(M)$ results in
the deformation of the geometrical structures on the manifold $M$. By the deformation of $C^\infty(M)$ is meant the deformation of the point-wise product
between functions to some noncommutative product, commonly denoted by $\star$, as well as the deformation of the complex conjugation of functions to
an appropriate involution, denoted hereafter by $\dagger$. The deformation of the algebra $C^\infty(M)$ results in the deformation of the Poisson tensor,
and consequently the Poisson bracket, to some deformed Poisson tensor as well as some deformed Poisson bracket, denoted hereafter by
$\lshad \,\cdot\, , \,\cdot\, \rshad$. The elements of $C^\infty(M)$ are, like in a classical case, the observables. The deformed algebra $C^\infty(M)$
will be denoted hereafter by $\mathcal{A}_Q$.

In the majority of papers, see e.g. \cite{Bayen:1978a, Waldmann:2005, Sternheimer:1998}, the deformation of the product between observables is performed
formally by considering as the algebra of observables the $\mathbb{C}[[\hbar]]$-linear module $C^\infty(M)[[\hbar]]$ of formal power series in the
deformation parameter $\hbar$ with coefficients from the space $C^\infty(M)$ of all smooth complex valued functions defined on the phase space $M$, where
$\mathbb{C}[[\hbar]]$ is the ring of formal power series in the parameter $\hbar$ with coefficients from $\mathbb{C}$. It is however possible to perform
strict deformation of the product between observables by endowing the algebra of observables with appropriate topology and requiring the $\star$-product
to continuously depend on $\hbar$. In the framework of $C^*$-algebras this was done by \citet{Rieffel:1993, Natsume:2001, Natsume:2003}, see also
\cite{Weinstein:1994a, Bonneau:2004}, whereas in the framework of Fr{\'e}chet algebras this was studied by \citet{Omori:2000, Omori:2007}. In the
following paper the problem of convergence in the algebra $\mathcal{A}_Q$ will not be considered since, although it is important mathematical problem,
it is not crucial from the physical point of view.

In formal deformation quantization one has very strong existence and classification results for the $\star$-products. For the symplectic manifolds the
general existence was shown first by \citet{DeWilde:1983, DeWilde:1988}, later by \citet{Fedosov:1985, Fedosov:1986, Fedosov:1989, Fedosov:1994}
and \citet{Omori:1991}. The case where the classical phase space is a general Poisson manifold turned out to be much more difficult and was finally
solved by \citet{Kontsevich:2003}, see also \cite{Kontsevich:1997, Kontsevich:1999, Kontsevich:2001}. The classification of $\star$-products was
obtained first for the symplectic case by \citet{Nest:1995a, Nest:1995b}, \citet{Bertelson:1997}, \citet{Weinstein:1998}. The classification for
the Poisson case was given by \citet{Kontsevich:2003}.

In addition to the deformation of the product between observables also the definition of states have to be changed slightly. The admissible states of the
classical Hamiltonian system are defined as probability distributions on the phase space. In the phase space quantum mechanics states have to be defined
as \emph{pseudo-probability distributions} on the phase space (the distributions do not have to be positive define). This reflects the
quantum character of the states. It is possible to introduce a Hilbert space $\mathcal{H}$ which contains all admissible states. Moreover, the
$\star$-product can be extended to a noncommutative product between functions from $\mathcal{H}$ creating from $\mathcal{H}$ a Hilbert algebra. The
action of some observable from $\mathcal{A}_Q$ on some state from $\mathcal{H}$ can also be expressed by the $\star$-product. Furthermore, the expectation
values of observables and the time evolution equation are defined like in the classical case, except the fact that the point-wise product and the Poisson
bracket are replaced by the $\star$-product and the Lie bracket $\lshad \,\cdot\, , \,\cdot\, \rshad$.

In the paper the case of the deformation quantization of the Hamiltonian system which $\star$-product is related to the canonical
Poisson bracket is considered. Even in this case one can introduce infinitely many $\star$-products inducing proper deformations of a classical
Hamiltonian system. In the majority of papers only the special case of the $\star$-product is considered, namely the \emph{Moyal product}. In some papers
also other particular $\star$-products, \emph{gauge equivalent} to the Moyal product, are discussed. Moreover, quite often (especially in quantum optics)
the deformation quantization in \emph{holomorphic coordinates} is also considered. In the paper a very broad family of $\star$-products, equivalent to
the Moyal product, is constructed. This family of $\star$-products contains in particular all examples of $\star$-products which can be found in the
variety of papers devoted to the phase space quantum mechanics.

The formulation of quantum mechanics on the phase space is equivalent to the ordinary formulation of quantum mechanics. In the majority of papers this
is proved with the help of the so called Wigner map and its inverse, see e.g. \cite{Dias:2004a, Dias:2004b}. Actually, it is a morphism between the
algebra $\mathcal{A}_Q$ with noncommutative star multiplication and the algebra of linear operators in a Hilbert space with the multiplication being
simple composition.

Another approach to construction of the phase space quantum mechanics is to extend ordinary Schr{\"o}dinger equation to the so called
\emph{Schr{\"o}dinger equation in phase space} with related eigenfunctions being appropriate distributions \cite{Torres-Vega:1990, Torres-Vega:1993,
Li:2004}. As was shown recently it is equivalent to an eigenvalue problem of classical Hamiltonian and an appropriate star multiplication
\cite{Chruscinski:2005}.

These points of view are however a little bit miss leading since one could thought that both descriptions of quantum mechanics are completely different
and that the phase space quantum mechanics is just some representation of the ``more fundamental'' ordinary quantum mechanics. In this paper is presented
an alternative point of view on the relation between the phase space quantum mechanics and the ordinary description of quantum mechanics. It is shown that
the ordinary description of quantum mechanics appears as a natural consequence of the phase space quantum mechanics. Moreover, from the presented
construction it is evident that the phase space quantum mechanics is the most fundamental formulation of quantum mechanics.

To summarize, the aim of this paper is to present in a systematic way the phase space quantum mechanics in a canonical regime as a natural deformation
of classical Hamiltonian mechanics for a very general class of gauge equivalent deformations. Moreover, it is shown that from the phase space quantum
mechanics naturally appears, at least in the canonical regime, the ordinary description of quantum mechanics. In addition, the physical equivalence of
the presented family of deformations is discussed.

The paper is organized as follows. In Section \ref{sec:7} the classical Hamiltonian mechanics is reviewed. Since the quantization scheme presented in
the paper is closely related to the classical Hamiltonian mechanics the review of classical mechanics is done using the same language as that from quantum
theory. In this section the basic concepts of the Hamiltonian mechanics are given including the definitions of a phase space, observables and Hamiltonian
systems. Moreover, the definitions of pure and mixed states and expectation values of observables are given together with their basic properties. Also the
time evolution of pure and mixed states and observables is presented.

In Section \ref{sec:6} the formulation of the phase space quantum mechanics is presented. This section starts with the review of some basics of the
deformation quantization of general Hamiltonian systems followed by the full description of a one-parameter family of quantization schemes of classical
Hamiltonian systems in canonical regime. The results in this part of the paper are well known in the literature and are given
here for completeness. In particular, there are presented fundamental properties of canonical $\star_\sigma$-products together with the systematic
construction of the space of states.

Next, the one-parameter family of quantization schemes is extended to a very broad family of quantizations sharing the same properties. More general
star-products and quantizations related to them were introduced by \citet{Bayen:1978a, Bayen:1978b} and further studied by \citet{Dias:2001}.
Already \citet{Cohen:1966} introduced a very general class of star-products and quasi-distributions. Then, many authors considered various particular
cases of the Cohen's construction (see review article \cite{Lee:1995}). However, we have to stress that presented in the paper quantizations are much
more general then these considered by \citet{Cohen:1966} and others.

In this section also the definitions of pure and mixed states and expectation values of observables are given together with their basic properties.
Moreover, the time evolution equations of states and observables are presented.

In Section \ref{sec:9} the equivalence of the ordinary formulation and the formulation on the phase space of quantum mechanics is presented. In the
existing literature the equivalence of the phase space formulation of quantum mechanics and the Schr\"odinger, Dirac and Heisenberg formulation of quantum
mechanics (referred hereafter as the ordinary description of quantum mechanics), in which observables and states are defined as operators on the Hilbert
space $L^2(\mathbb{R})$, is derived, using the powerful tools of pseudo-differential operators. See for example \cite{Dias:2004a, Dias:2004b}. In the
paper the equivalence of these two formulations of quantum mechanics will be proved by observing that the Wigner-Moyal transform (which definition and
properties can be found in \cite{Gosson:2005}) have all properties of the tensor product. This observation allows writing many previous results found in
the literature in a lucid and elegant way, from which the equivalence of the two formulations of quantum mechanics is easier to see. Moreover, this
observation will also provide an argument to treat the phase space quantum mechanics as a more fundamental formalism of quantum mechanics, than the one
developed by Schr\"odinger, Dirac and Heisenberg. In fact, it is proved that the observables and states, represented as operators of the form $A \star {}$
and $\rho \star {}$ where $A$ and $\rho$ are functions on the phase space, can be written in the form of a tensor product:
$A \star {} = \hat{1} \otimes A(\hat{q},\hat{p})$ and $\rho \star {} = \hat{1} \otimes \hat{\rho}$, where $A(\hat{q},\hat{p})$ and $\hat{\rho}$ are
operators on the Hilbert space $L^2(\mathbb{R})$. Also it is proved that the action of the observable $A \star {}$ on the state $\rho \star {}$ is
expressed by the formula $A \star \rho \star {} = \hat{1} \otimes A(\hat{q},\hat{p}) \hat{\rho}$. The presented approach is an extension and reformulation
of the idea found in \cite{Gosson:2005}, see also \cite{Gosson:2006, Gosson:2008}, where it was proved, for the Moyal case, that
$A \star {} = U_\varphi A(\hat{q},\hat{p}) U_\varphi^{-1}$. $U_\varphi$ is the Wigner wave-packet transform which maps wave functions on a configuration
space to wave functions on a phase space. These two approaches are equivalent after noting that
$U_\varphi \psi = W(\varphi^*, \psi) = \varphi^* \otimes \psi$, where $W$ is the Wigner-Moyal transform.

Section \ref{sec:9} contains also proofs of some properties from former sections.

In Section \ref{sec:12} examples of a free particle and a simple harmonic oscillator are presented illustrating the formalism of the deformation
quantization. In the example of a free particle the time evolution of a free particle initially in some given state is derived. In the example of a
harmonic oscillator stationary states and coherent states of the harmonic oscillator are derived. Moreover, it is proved that the presented examples of
quantum states converge to appropriate classical states as Planck constant goes to zero.

Section \ref{sec:15} contains a discussion on the physical equivalence of the presented family of deformations, while Section \ref{sec:14} contains final
comments and conclusions. The notation and conventions used in the paper as well as longer technical proofs of some theorems from a main text can be
found in \ref{sec:8} -- \ref{sec:2}.

%%%%%%%%%%%%%%%%%%%%%%%%%%%%%%%%%%%%%%%%%%%%%%%%%%%%%%%%%%%%%%%%%%%%%%%%%%%%%%%%%%%%%%%%%%%%%%%%%%%%%%%%%%%%%%%%%%%%%%%%%%%%%%%%%%%%%%%%%%%%%%%%%%%%%%%%%%
\section{Classical Hamiltonian mechanics}
\label{sec:7}
In this section the basics of the classical Hamiltonian mechanics will be reviewed and adopted to the language of quantum theory. For further use during the quantization procedure, in what follows,
complex tensor fields and functions on a manifold will be considered. Nevertheless, all below considerations can be made using only real tensor fields
and functions. More details on classical Hamiltonian mechanics can be found in \cite{Libermann:1987, Fecko:2006}.

\subsection{Classical algebra of observables}
\label{subsec:7.1}

In the classical Hamiltonian mechanics pure states are represented by points in a \emph{phase space}, which in turn is represented by a \emph{Poisson
manifold}. The \emph{Poisson manifold} is a smooth manifold $M$ endowed with a two times contravariant antisymmetric (real) tensor field $\mathcal{P}$
satisfying the below relation
\begin{equation}
\mathcal{L}_{\zeta_f} \mathcal{P} = 0,
\label{eq:7.1.1}
\end{equation}
($\mathcal{L}_{\zeta_f}$ denotes a Lie derivative in the direction $\zeta_f$) for every vector fields $\zeta_f$ defined as
\begin{equation*}
\zeta_f := \mathcal{P} \d{f}, \quad f \in C^\infty(M).
\end{equation*}
The tensor field $\mathcal{P}$ is called a \emph{Poisson tensor} and the vector fields $\zeta_f$ are called \emph{Hamiltonian fields}. The space of all
Hamiltonian fields on $M$ will be denoted by $\Ham(M)$. In the rest of the paper it will be assumed that the Poisson tensor $\mathcal{P}$ is
non-degenerate. In this case it can be proved that the Poisson manifold $M$ is even-dimensional.

Using the Poisson tensor $\mathcal{P}$ a structure of a Lie algebra can be added to the space $C^\infty(M)$ of all (complex valued) smooth functions on
$M$, namely a Lie bracket can be defined as
\begin{equation}
\{f,g\}_{\mathcal{P}} := \mathcal{P}(\d{f},\d{g}), \quad f,g \in C^\infty(M).
\label{eq:7.1.3}
\end{equation}
It is obviously antisymmetric since $\mathcal{P}$ is antisymmetric and the Jacobi's identity follows from relation (\ref{eq:7.1.1}). In fact there holds
\begin{align*}
\{f,g\} & = -\{g,f\} & \textrm{(antisymmetry)}, \\
\{f,gh\} & = \{f,g\}h + g\{f,h\} & \textrm{(Leibniz's rule)}, \\
0 & = \{f,\{g,h\}\} + \{h,\{f,g\}\}  + \{g,\{h,f\}\} & \textrm{(Jacobi's identity)}.
\end{align*}
The bracket (\ref{eq:7.1.3}) is called a \emph{Poisson bracket} and the algebra $C^\infty(M)$ endowed with a Poisson bracket is called a \emph{Poisson
algebra}.

To describe a physical system besides a phase space also an \emph{algebra of observables} is needed. Lets introduce a notation
$\mathcal{A}_C = C^\infty(M)$. Usually in classical mechanics as the algebra of observables a real subalgebra $\mathcal{O}_C \subset \mathcal{A}_C$ of
all real valued smooth functions on $M$ is taken, but for further use during the introduction of the ordinary description of quantum mechanics it will
be better to define the algebra of observables in a different way. First, lets define an algebra $\hat{\mathcal{A}}_C$ of all operators defined on the
space $C^\infty(M)$ of the form $\hat{A} = A \cdot {}$, where $A \in \mathcal{A}_C$ and $\cdot$ denotes an ordinary point-wise product of functions from
$C^\infty(M)$. The algebra $\hat{\mathcal{A}}_C$ have a structure of a Lie algebra with a Lie bracket defined by the formula
\begin{equation*}
\{ \hat{A}, \hat{B} \} := \{ A,B \} \cdot {}, \quad \hat{A},\hat{B} \in \hat{\mathcal{A}}_C.
\end{equation*}
Now, the algebra of observables can be defined as a real subalgebra $\hat{\mathcal{O}}_C$ of all operators from $\hat{\mathcal{A}}_C$ induced by real
valued smooth functions on $M$. One of the admissible observables from $\hat{\mathcal{O}}_C$ have a special purpose, namely a \emph{Hamiltonian}
$\hat{H}$. This observable corresponds to the total energy of the system and it governs the time evolution of the system. A triple
$(M,\mathcal{P},\hat{H})$ is then called a \emph{classical Hamiltonian system}.

Local coordinates $q^i,p_i$ ($i = 1,\ldots,N$) in which a Poisson tensor $\mathcal{P}$ have (locally) a form
\begin{equation*}
\mathcal{P} = \frac{\partial}{\partial q^i} \wedge \frac{\partial}{\partial p_i}
= \frac{\partial}{\partial q^i} \otimes \frac{\partial}{\partial p_i} - \frac{\partial}{\partial p_i} \otimes \frac{\partial}{\partial q^i}
\quad \textrm{i.e.} \quad
\mathcal{P}^{ij} = \left( \begin{array}{cc}
0_N & \mathbb{I}_N \\
-\mathbb{I}_N & 0_N
\end{array} \right)
\end{equation*}
are called \emph{canonical coordinates}. Furthermore, in the canonical coordinates a Hamiltonian field $\zeta_f$ and a Poisson bracket take a form
\begin{equation}
\zeta_f = \frac{\partial f}{\partial p_i} \frac{\partial}{\partial q^i} - \frac{\partial f}{\partial q^i} \frac{\partial}{\partial p_i},\qquad
\{f,g\} = \frac{\partial f}{\partial q^i} \frac{\partial g}{\partial p_i} - \frac{\partial f}{\partial p_i} \frac{\partial g}{\partial q^i}.
\label{eq:7.1.11}
\end{equation}
It can be proved that for every Poisson manifold canonical coordinates always exist, at least locally.

\subsection{Pure states, mixed states and expectation values of observables}
\label{subsec:7.2}

As mentioned earlier \emph{pure states} of a classical Hamiltonian system are points in a phase space $M$. They represent generalized positions and
momenta of a phase space particle. Values of generalized positions and momenta of the particle can be extracted from a point in $M$ (a pure state)
by writing this point in canonical coordinates $q^i,p_i$. Then, $q^i$ are the values of generalized positions and $p_i$ are the values of generalized
momenta of the particle.

When one does not know the exact positions and momenta of the phase space particle, but only a probability distribution that the system is in some
point of the phase space then there is a need to extend the concept of a state. It is natural to generalize the states to probability distribution
functions defined on the phase space $M$, i.e. to smooth functions $\rho$ on $M$ satisfying
\begin{enumerate}
\item $\rho(\xi) \ge 0$,
\item $\displaystyle{\int_M \rho(\xi) \d{\xi} = 1}$.
\end{enumerate}
Such generalized states are called \emph{mixed states}. The probability distribution functions $\rho$ on $M$ will be also called a \emph{classical
distribution functions}. Pure states $\xi_0 \in M$ can be then identified with Dirac delta distributions $\delta(\xi - \xi_0)$.

An expectation value $\braket{\hat{A}}_\rho$ of an observable $\hat{A} \in \hat{\mathcal{A}}_C$ in a state $\rho$ is defined by
\begin{equation}
\braket{\hat{A}}_\rho := \int_M (\hat{A}\rho)(\xi) \d{\xi} = \int_M A(\xi) \cdot \rho(\xi) \d{\xi}.
\label{eq:7.2.16}
\end{equation}
Note that an expectation value of the observable $\hat{A}$ in a pure state $\delta(\xi - \xi_0)$ is just equal $A(\xi_0)$.

In classical mechanics (to be compared with quantum mechanics) the uncertainty relations take the form
\begin{equation}
\Delta q^i \Delta p_j \ge 0, \qquad i,j = 1,\ldots,N,
\label{eq:7.2.2}
\end{equation}
where $\Delta q^i$ and $\Delta p_j$ are the uncertainties of the $i$-th position observable and the $j$-th momentum observable respectively, in some state.
The above uncertainty relations state that it is possible to simultaneously measure the position and momentum of a particle with an arbitrary precision.
Obviously the equality in (\ref{eq:7.2.2}) takes place for pure states.

\subsection{Time evolution of classical Hamiltonian systems}
\label{subsec:7.3}
For a given Hamiltonian system $(M,\mathcal{P},\hat{H})$ the Hamiltonian $\hat{H}$ governs the time evolution of the system. Namely, the Hamiltonian
$\hat{H}$ generates a Hamiltonian field $\zeta_H$. The flow $\Phi^H_t$ (called the \emph{phase flow} or the \emph{Hamiltonian flow}) of this Hamiltonian
field moves the points of $M$, which is interpreted as the time development of pure states ($\xi(t) = \Phi^H_t(\xi(0))$). A trajectory of a point
$\xi \in M$ (a pure state) can be then calculated from the equation
\begin{equation}
\dot{\xi} = \zeta_H = \mathcal{P} \d{H}.
\label{eq:7.3.17}
\end{equation}
In canonical coordinates $q^i,p_i$, using formulae (\ref{eq:7.1.11}), equation (\ref{eq:7.3.17}) takes a form of the following system of differential
equations called the \emph{canonical Hamilton equations}
\begin{equation}
\dot{q}^i = \frac{\partial H}{\partial p_i}, \quad \dot{p}_i = -\frac{\partial H}{\partial q^i}.
\label{eq:7.3.27}
\end{equation}
The equation of motion of mixed states can be derived from the probability conservation law. From this law and the Hamilton equations the following
equation can be derived
\begin{equation}
L(H,\rho) = \frac{\partial \rho}{\partial t} - \{ H,\rho \} = 0.
\label{eq:7.3.18}
\end{equation}
Equation (\ref{eq:7.3.18}) is called the \emph{Liouville equation} and it describes the time development of an arbitrary state $\rho$. Of course for a
pure state $\delta(\xi - \xi(t))$ the Liouville equation (\ref{eq:7.3.18}) is equivalent to the Hamilton equations (\ref{eq:7.3.27}).

From (\ref{eq:7.3.18}) it follows that a time dependent expectation value of an observable $\hat{A} \in \hat{\mathcal{A}}_C$ in a state $\rho(t)$, i.e.
$\braket{\hat{A}}_{\rho(t)}$, fulfills the following equation of motion
\begin{equation}
\braket{\hat{A}}_{L(H,\rho)} = 0 \iff \frac{\d{}}{\d{t}} \braket{\hat{A}}_{\rho(t)} - \braket{\{\hat{A},\hat{H}\}}_{\rho(t)} = 0.
\label{eq:7.3.20}
\end{equation}

Until now the states undergo the time development whereas the observables do not. This is called a \emph{Schr\"odinger-like picture} of the time
evolution. There is also a dual point of view (which, in turn, is referred to as a \emph{Heisenberg-like picture}), in which states remain still whereas
the observables undergo a time development. A pull-back of the Hamiltonian flow $(\Phi^H_t)^* = e^{t\mathcal{L}_{\zeta_H}}$ induces, for every $t$, an
automorphism $U^H_t$ of the algebra $\hat{\mathcal{A}}_C$ (a one-to-one map preserving the linear structure as well as the dot-product and the Lie
bracket) given by the equation
\begin{equation*}
U^H_t \hat{A} := (\Phi^H_t)^* A \cdot {}, \quad \hat{A} \in \hat{\mathcal{A}}_C.
\end{equation*}
Its action on an arbitrary observable $\hat{A} \in \hat{\mathcal{A}}_C$ is interpreted as the time development of $\hat{A}$
\begin{equation}
\hat{A}(t) = U^H_t \hat{A}(0) = e^{t \mathcal{L}_{\zeta_H}} A(0) \cdot {} = e^{t \zeta_H} A(0) \cdot {}.
\label{eq:7.3.21}
\end{equation}
Differentiating equation (\ref{eq:7.3.21}) with respect to $t$ one receives
\begin{equation*}
\frac{\d{\hat{A}}}{\d{t}}(t) = \zeta_H A(t) \cdot {}.
\end{equation*}
From above equation the following time evolution equation for an observable $\hat{A}$ follows
\begin{equation}
\frac{\d{\hat{A}}}{\d{t}}(t) - \{\hat{A}(t),\hat{H}\} = 0.
\label{eq:7.3.22}
\end{equation}
Both presented approaches to the time development yield equal predictions concerning the results of measurements, since
\begin{equation*}
\braket{\hat{A}(0)}_{\rho(t)} = \braket{\hat{A}(t)}_{\rho(0)}.
\end{equation*}

%%%%%%%%%%%%%%%%%%%%%%%%%%%%%%%%%%%%%%%%%%%%%%%%%%%%%%%%%%%%%%%%%%%%%%%%%%%%%%%%%%%%%%%%%%%%%%%%%%%%%%%%%%%%%%%%%%%%%%%%%%%%%%%%%%%%%%%%%%%%%%%%%%%%%%%%%%
\section{Quantization procedure on a phase space}
\label{sec:6}
\subsection{Basics of deformation quantization}
\label{subsec:6.1}
This section contains basic information about deformation quantization present in the literature. The idea of quantization of the classical mechanics
relays on the postulate that on the microscopic level the classical uncertainty relations (\ref{eq:7.2.2}) violate and have to be modified to the Heisenberg uncertainty relations, i.e. that the
uncertainties $\Delta q^i$ and $\Delta p_j$ of the $i$-th position observable and the $j$-th momentum observable satisfy the following equation
\begin{equation*}
\Delta q^i \Delta p_j \ge \frac{1}{2} \hbar \delta^i_j, \qquad i,j = 1,\ldots,N,
\end{equation*}
instead of (\ref{eq:7.2.2}) one, where $\hbar$ is some fundamental constant (the Planck constant) to be determined in a physical experiment.
There exist two equivalent ways of introducing the deformation (quantization) of classical mechanics such that the Heisenberg uncertainty principle is fulfilled.
The first way is based on an appropriate deformation (coarse graining) of the classical probability distribution functions \cite{Wetterich:2010}.
In this approach the coarse grained classical states can be seen as the states of some subsystem of the classical statistical ensemble which describes
the quantum particle in phase space and contains no longer any information beyond what is available in the usual quantum formalism. In addition, the time evolution of the coarse grained classical states should remain unitary if the quantum system remains isolated and the states in question
should satisfy the Heisenberg uncertainty principle.

In the second way the quantization of classical mechanics is introduced by an appropriate deformation of the phase space resulting from a deformation
of the point-wise product of smooth complex valued functions on the phase space to an appropriate noncommutative product. This two ways of quantizing
classical mechanics are equivalent since the algebraic structure of the algebra of observables introduces the admissible states, and vice verse the
space of admissible states can be used to introduce the algebraic structure on the space of observables. In the paper the second way of quantizing
classical mechanics will be presented. Hence, the quantization procedure will look as follows.

Let $(M,\mathcal{P})$ be a $2N$-dimensional phase space (i.e. a smooth Poisson manifold) and $\mathcal{A}_C = C^\infty(M)$ be the algebra of all smooth
complex valued functions on $M$ with respect to the standard point-wise product and a Poisson bracket associated to $\mathcal{P}$. The quantization of
the classical mechanics relays on a deformation, with respect to some parameter $\hbar$, of the classical phase space $(M,\mathcal{P})$
to a noncommutative phase space \cite{Dias:2004b, Fairlie:1964, Hirshfeld:2002a, Sternheimer:1998, Zachos:2002, Zachos:2005}. The noncommutative phase
space is composed of a noncommutative smooth manifold and a Poisson tensor defined on it. The noncommutative manifold is just an ordinary manifold $M$
whose algebra of smooth functions $C^\infty(M)$ has the point-wise multiplication deformed to a noncommutative multiplication denoted by $\star$. Since,
for a given manifold $M$ geometrical structures on it (in our case the Poisson tensor) can be defined in terms of the algebra $C^\infty(M)$, the
deformation of the algebra $C^\infty(M)$ results in the deformation of the geometrical structures on the manifold $M$.

In the paper only such quantized (deformed) Poisson tensors will be considered which associated deformed Poisson brackets
$\lshad \,\cdot\, , \,\cdot\, \rshad$ are given by the formula
\begin{equation}
\lshad f,g \rshad := \frac{1}{i\hbar} [f,g], \quad f,g \in C^\infty(M),
\label{eq:6.1.2}
\end{equation}
where
\begin{equation*}
[f,g] := f \star g - g \star f, \quad f,g \in C^\infty(M)
\end{equation*}
is the $\star$-commutator.

The algebra, of all smooth complex valued functions on the quantum phase space $M$, with respect to the $\star$-product and the deformed Poisson bracket
$\lshad \,\cdot\, , \,\cdot\, \rshad$ will be denoted by $\mathcal{A}_Q$. Moreover, an involution in the algebra $\mathcal{A}_Q$ will be introduced as
a deformation of the complex conjugation of functions. This involution will be denoted by $\dagger$. Furthermore, by $\mathcal{O}_Q$ a subspace of
$\mathcal{A}_Q$ of all self-adjoint functions, with respect to the involution $\dagger$, will be denoted. Note that in general $\mathcal{O}_Q$ do not
need to constitute an algebra with respect to the $\star$-product. As a space of admissible quantum observables the space $\mathcal{O}_Q$ can be taken,
but for the same reason as in Section \ref{sec:7} it will be better to define the space of quantum observables in a similar way as in Section \ref{sec:7}.
First, lets introduce an algebra $\hat{\mathcal{A}}_Q$ of all operators defined on the space $C^\infty(M)$ of the form $\hat{A} = A \star {}$, where
$A \in \mathcal{A}_Q$. The algebra $\hat{\mathcal{A}}_Q$ have a structure of a Lie algebra with a Lie bracket defined by the formula
\begin{equation*}
\lshad \hat{A},\hat{B} \rshad := \lshad A,B \rshad \star {} = \frac{1}{i\hbar}(\hat{A}\hat{B} - \hat{B}\hat{A}), \quad \hat{A},\hat{B} \in \hat{\mathcal{A}}_Q.
\end{equation*}
All operators from $\hat{\mathcal{A}}_Q$ induced by self-adjoint functions are defined as the admissible quantum observables. The space of all admissible
quantum observables will be denoted by $\hat{\mathcal{O}}_Q$. Note, that the algebra $\hat{\mathcal{A}}_Q$ is a deformation of the algebra
$\hat{\mathcal{A}}_C$.

\begin{remark}
In the paper the problem of convergence in the algebra $\mathcal{A}_Q$ will not be considered. It will be assumed that the algebra $\mathcal{A}_Q$ is
endowed with an appropriate topology in which the $\star$-product and the involution $\dagger$ are continuous, and in which all limits of functions
from $\mathcal{A}_Q$ appearing in the paper are convergent. Although, finding such topology is a perfectly legitimate mathematical problem it is beyond the content of the paper.
\end{remark}

The deformed noncommutative multiplication on $\mathcal{A}_Q$ should satisfy such natural conditions
\begin{enumerate}
\item $f \star (g \star h) = (f \star g) \star h$ (associativity),
\item $f \star g = fg + o(\hbar)$,
\item $\lshad f,g \rshad = \{f,g\} + o(\hbar)$,
\item $f \star 1 = 1 \star f = f$,
\end{enumerate}
for $f,g,h \in \mathcal{A}_Q$. Moreover, it is assumed that the $\star$-product can be expanded in the following infinite series with respect to the
parameter $\hbar$
\begin{equation*}
f \star g = \sum_{k=0}^\infty \hbar^k B_k(f,g), \quad f,g \in \mathcal{A}_Q,
\end{equation*}
where $B_k \colon \mathcal{A}_Q \times \mathcal{A}_Q \to \mathcal{A}_Q$ are bilinear operators. From the construction of the $\star$-product it can be
immediately seen that in the limit $\hbar \to 0$ the quantized algebra $\hat{\mathcal{A}}_Q$ reduces to the classical algebra $\hat{\mathcal{A}}_C$.
From the associativity of the $\star$-product it follows that the bracket (\ref{eq:6.1.2}) is a well-defined Lie bracket. In fact, it satisfies the
following relations
\begin{align*}
\lshad f,g \rshad & = -\lshad g,f \rshad & \textrm{(antisymmetry)}, \\
\lshad f,g \star h \rshad & = \lshad f,g \rshad \star h + g \star \lshad f,h \rshad & \textrm{(Leibniz's rule)}, \\
0 & = \lshad f, \lshad g, h \rshad \rshad + \lshad h, \lshad f, g \rshad \rshad + \lshad g, \lshad h, f \rshad \rshad & \textrm{(Jacobi's identity)}.
\end{align*}

From the definition of the $\star$-product it follows that
\begin{equation*}
B_0(f,g) = fg
\end{equation*}
and
\begin{equation*}
B_1(f,g) - B_1(g,f) = \{f,g\}.
\end{equation*}
The associativity of the $\star$-product implies that the bilinear maps $B_k$ satisfy the equations \cite{Hirshfeld:2002a, Blaszak:2003}
\begin{equation*}
\sum_{s=0}^k (B_s(B_{k-s}(f,g),h) - B_s(f,B_{k-s}(g,h))) = 0
\end{equation*}
for $k = 1,2,\ldots$. Hence, in particular $B_1$ satisfies the equation
\begin{equation*}
B_1(f,g)h - fB_1(g,h) + B_1(fg,h) - B_1(f,gh) = 0.
\end{equation*}

Let $S \colon \mathcal{A}_Q \to \mathcal{A}_Q$ be a vector space automorphism, such that
\begin{equation}
Sf = \sum_{k=0}^\infty \hbar^k S_k f, \quad S_0 = 1,
\label{eq:6.1.5}
\end{equation}
where $S_k$ are linear operators. Such an automorphism produces a new $\star'$ in $\mathcal{A}_Q$ in the following way \cite{Hirshfeld:2002a, Blaszak:2003}
\begin{equation}
f \star' g := S(S^{-1}f \star S^{-1}g).
\label{eq:6.1.6}
\end{equation}
Indeed, the associativity of the new $\star'$ follows from the associativity of the old $\star$-product, as
\begin{align*}
f \star' (g \star' h) & = f \star' S(S^{-1}g \star S^{-1}h) = S(S^{-1}f \star (S^{-1}g \star S^{-1}h)) \nonumber \\
& = S((S^{-1}f \star S^{-1}g) \star S^{-1}h) = S(S^{-1}f \star S^{-1}g) \star' h
= (f \star' g) \star' h.
\end{align*}
Using formula (\ref{eq:6.1.5}) one finds the following expression
\begin{equation*}
B_1'(f,g) = B_1(f,g) - fS_1(g) - S_1(f)g + S_1(fg).
\end{equation*}
Then
\begin{equation*}
B_1'(f,g) - B_1'(g,f) = B_1(f,g) - B_1(g,f) = \{f,g\}
\end{equation*}
and
\begin{equation*}
\lim_{\hbar \to 0} \lshad f,g \rshad' = \lim_{\hbar \to 0} \lshad f,g \rshad = \{f,g\}.
\end{equation*}
Hence, the new $\star'$ is the second well-defined $\star$-product.

Two $\star$-products: $\star$ and $\star'$ are called \emph{gauge equivalent} or simply \emph{equivalent} if there exists a vector space automorphism
$S \colon \mathcal{A}_Q \to \mathcal{A}_Q$ of the form (\ref{eq:6.1.5}) such that (\ref{eq:6.1.6}) holds. Note that such automorphism $S$ is an isomorphism
of the algebra $(\mathcal{A}_Q, \star)$ onto the algebra $(\mathcal{A}_Q, \star')$. It becomes evident that there can be many equivalent quantizations
of the classical mechanics. Even though, all this quantizations are mathematically equivalent they do not need to be physically equivalent.

%%%%%%%%%%%%%%%%%%%%%%%%%%%%%%%%%%%%%%%%%%%%%%%%%%%%%%%%%%%%%%%%%%%%%%%%%%%%%%%%%%%%%%%%%%%%%%%%%%%%%%%%%%%%%%%%%%%%%%%%%%%%%%%%%%%%%%%%%%%%%%%%%%%%%%%%%%
\subsection{Canonical $\star_\sigma$-products}
\label{subsec:6.2}
The following section reviews standard results about a construction of a $\sigma$-family of quantizations together with its basic properties.
In the rest of the paper as the phase space $M$ the manifold $\mathbb{R}^2$ will be chosen. Moreover, for simplicity only the case of a two dimensional
phase space will be considered. Generalisation to the $2N$ dimensions is straightforward. For the Poisson tensor the canonical Poisson tensor
\begin{equation*}
\mathcal{P} = \partial_x \wedge \partial_p = \partial_x \otimes \partial_p - \partial_p \otimes \partial_x
\end{equation*}
will be taken. The corresponding Poisson bracket takes the form
\begin{equation}
\{f,g\}_{\mathcal{P}}  = f \left( \partial_x \otimes \partial_p - \partial_p \otimes \partial_x \right) g
= f \left( \overleftarrow{\partial}_x \overrightarrow{\partial}_p - \overleftarrow{\partial}_p \overrightarrow{\partial}_x \right) g
= \partial_x f \partial_p g - \partial_p f \partial_x g,
\label{eq:6.2.8}
\end{equation}
where the arrows over the vector fields $\partial_x,\partial_p$ denote that a given vector field acts only on a function standing on the left or on
the right side of the vector field.

The simplest natural deformation of the algebra $\mathcal{A}_C$ with the Poisson bracket (\ref{eq:6.2.8}) is given by such deformed $\star$-multiplication
\begin{subequations}
\label{eq:6.2.9}
\begin{align}
f \star_\sigma g & = f \exp \left( i\hbar \sigma \overleftarrow{\partial}_x \overrightarrow{\partial}_p
- i\hbar \bar{\sigma} \overleftarrow{\partial}_p \overrightarrow{\partial}_x \right) g
\label{eq:6.2.9a} \\
& = \sum_{n,m=0}^\infty (-1)^m (i\hbar)^{n+m} \frac{\sigma^n \bar{\sigma}^m}{n!m!} (\partial_x^n \partial_p^m f)(\partial_x^m \partial_p^n g)
\label{eq:6.2.9b} \\
& = \sum_{k=0}^\infty (i\hbar)^k \frac{1}{k!} \sum_{m=0}^k \binom{k}{m} \sigma^{k-m} (-\bar{\sigma})^m
(\partial_x^{k-m} \partial_p^{m} f)(\partial_x^{m} \partial_p^{k-m} g),
\label{eq:6.2.9c}
\end{align}
\end{subequations}
where $\sigma$ is some real parameter and $\bar{\sigma} = 1 - \sigma$. The product (\ref{eq:6.2.9}) is associative. Moreover, it is a well-defined
$\star$-product.

Note that two star-products $\star_\sigma$ and $\star_{\sigma'}$ are equivalent. In fact the automorphism (\ref{eq:6.1.5}) intertwining the
$\star_\sigma$-product with the $\star_{\sigma'}$-product takes the form
\begin{equation*}
S_{\sigma' - \sigma} = \exp \left( i\hbar (\sigma' - \sigma) \partial_x \partial_p \right).
\end{equation*}

The well-known particular cases of the product (\ref{eq:6.2.9}) are: for $\sigma = 0$ the Kupershmidt-Manin product \cite{Kupershmidt:1990},
for $\sigma = \frac{1}{2}$ the Moyal (or Groenewold) product \cite{Moyal:1949, Groenewold:1946}.

The $\star_\sigma$-product (\ref{eq:6.2.9}) can be written in the following integral form (see \cite{Hirshfeld:2002a} for a derivation in the case of
the Moyal product, the general case can be derived analogically)
\begin{align}
(f \star_\sigma g)(x,p) & = \frac{1}{2\pi\hbar} \iint \mathcal{F}f(\xi,\eta) g(x - \bar{\sigma} \eta, p - \sigma \xi)
e^{\frac{i}{\hbar}(\xi x - \eta p)} \d{\xi} \d{\eta} \nonumber \displaybreak[0] \\
& = \frac{1}{2\pi\hbar} \iint f(x + \sigma \eta, p + \bar{\sigma} \xi) \mathcal{F}g(\xi,\eta)
e^{\frac{i}{\hbar}(\xi x - \eta p)} \d{\xi} \d{\eta}.
\label{eq:6.2.10}
\end{align}
For a special cases $\sigma \neq 0,1$ the above formula, after performing an appropriate change of variables under the integral sign, can be written
differently
\begin{equation*}
(f \star_\sigma g)(x,p) = \frac{1}{(2\pi\hbar)^2 |\sigma \bar{\sigma}|} \iiiint f(x',p') g(x'',p'')
e^{\frac{i}{\hbar} \sigma^{-1} (x - x')(p - p'')} e^{-\frac{i}{\hbar} \bar{\sigma}^{-1} (p - p')(x - x'')} \d{x'} \d{p'} \d{x''} \d{p''}.
\end{equation*}

Using equations (\ref{eq:6.2.9b}) and (\ref{eq:6.2.9c}), and the integration by parts a useful property of the $\star_\sigma$-product can be derived.
Namely, there holds
\begin{theorem}
\label{thm:6.2.7}
Let $f,g \in \mathcal{A}_Q$ be such that $f \star_\sigma g$ and $g \star_\sigma f$ are integrable functions. Then there holds
\begin{equation*}
\iint (f \star_\sigma g)(x,p) \d{x} \d{p} = \iint (g \star_\sigma f)(x,p) \d{x} \d{p}.
\end{equation*}
Moreover, for the Moyal $\star$-product (the case of $\sigma = \frac{1}{2}$) there holds
\begin{equation*}
\iint (f \star_{\frac{1}{2}} g)(x,p) \d{x} \d{p} = \iint f(x,p)g(x,p) \d{x} \d{p}.
\end{equation*}
\end{theorem}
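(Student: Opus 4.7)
The plan is to expand $f \star_\sigma g$ as a power series in $\hbar$ using (\ref{eq:6.2.9c}), integrate term by term over $\mathbb{R}^2$, and exploit integration by parts. The key observation is that for every $k \ge 0$ and every $0 \le m \le k$,
\begin{equation*}
\iint (\partial_x^{k-m} \partial_p^m f)(\partial_x^m \partial_p^{k-m} g) \d{x} \d{p}
= (-1)^k \iint f \cdot \partial_x^k \partial_p^k g \d{x} \d{p},
\end{equation*}
which follows by moving the $k-m$ $x$-derivatives and $m$ $p$-derivatives off $f$ onto $g$: the sign accumulated is $(-1)^{k-m}(-1)^m = (-1)^k$, and the transferred derivatives combine with those already present on $g$ to give $\partial_x^k \partial_p^k g$. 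The crucial feature is that the right-hand side is independent of the index $m$.

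Substituting this into the integral of (\ref{eq:6.2.9c}) and summing the $m$-dependence by the binomial theorem, $\sum_{m=0}^k \binom{k}{m} \sigma^{k-m}(-\bar{\sigma})^m = (\sigma - \bar{\sigma})^k$, one obtains
\begin{equation*}
\iint (f \star_\sigma g)(x,p) \d{x} \d{p}
= \sum_{k=0}^\infty \frac{(-i\hbar(\sigma - \bar{\sigma}))^k}{k!} \iint f \cdot \partial_x^k \partial_p^k g \d{x} \d{p}.
\end{equation*}
Swapping $f \leftrightarrow g$ and applying the same reduction to $g \star_\sigma f$ produces the analogous expression with $f$ and $g$ interchanged. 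Since $\partial_x^k \partial_p^k$ is an even-order differential operator, one further integration by parts gives $\iint g \cdot \partial_x^k \partial_p^k f \d{x} \d{p} = \iint f \cdot \partial_x^k \partial_p^k g \d{x} \d{p}$ for every $k$, which establishes the first identity. For the Moyal case $\sigma = \bar{\sigma} = \tfrac{1}{2}$ the factor $(\sigma - \bar{\sigma})^k$ vanishes for all $k \ge 1$, so only the $k = 0$ term survives and the series collapses to $\iint f(x,p)\, g(x,p) \d{x} \d{p}$, yielding the second identity.

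The main obstacle is rigorously justifying the vanishing of boundary terms in each integration by parts and the termwise integration of the series, because the bare hypothesis that $f \star_\sigma g$ and $g \star_\sigma f$ are integrable does not by itself control the decay at infinity of the individual partial derivatives $\partial_x^{k-m} \partial_p^m f$. The natural remedy is to prove the identity first for $f,g$ in the Schwartz class $\mathcal{S}(\mathbb{R}^2)$, where all manipulations are manifestly legitimate, and then extend to the stated hypothesis by a density argument combined with the continuity of the $\star_\sigma$-product. An alternative route that bypasses the formal series entirely is to start from the integral representation (\ref{eq:6.2.10}): after Fubini's theorem and the change of variables $(x,p) \mapsto (x + \bar{\sigma}\eta, p + \sigma\xi)$, the integral $\iint (f \star_\sigma g) \d{x}\d{p}$ becomes a Fourier-type pairing of $f$ and $g$ carrying the symmetric phase $e^{\frac{i}{\hbar}(\bar{\sigma} - \sigma)\xi\eta}$, from which the $f \leftrightarrow g$ symmetry can be read off directly and whose collapse to $\iint fg \d{x}\d{p}$ at $\sigma = \tfrac{1}{2}$ is immediate.
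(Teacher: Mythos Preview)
Your proposal is correct and follows essentially the same approach as the paper, which indicates just before the theorem that it ``can be derived'' using equations (\ref{eq:6.2.9b}) and (\ref{eq:6.2.9c}) together with integration by parts. Your explicit reduction via the binomial identity $\sum_{m=0}^k \binom{k}{m}\sigma^{k-m}(-\bar\sigma)^m = (\sigma-\bar\sigma)^k$ and the observation that $\partial_x^k\partial_p^k$ has even total order make the mechanism transparent, and your remarks on rigor (Schwartz-class first, or the alternative via (\ref{eq:6.2.10})) are appropriate given that the paper treats convergence issues only formally.
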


From equation (\ref{eq:6.2.9b}) immediately follows another two interesting properties of the $\star_\sigma$-product. Namely, there holds
\begin{theorem}
For $f,g \in \mathcal{A}_Q$ there holds
\begin{gather*}
(f \star_\sigma g)^* = g^* \star_{\bar{\sigma}} f^*, \\
\partial_x(f \star_\sigma g) = (\partial_x f) \star_\sigma g + f \star_\sigma (\partial_x g), \qquad
\partial_p(f \star_\sigma g) = (\partial_p f) \star_\sigma g + f \star_\sigma (\partial_p g).
\end{gather*}
In particular, for the case of $\sigma = \frac{1}{2}$ the complex conjugation of functions is an involution of the algebra $\mathcal{A}_Q$.
\end{theorem}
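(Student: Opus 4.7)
The plan is to derive all three identities directly from the series expansion (\ref{eq:6.2.9b}), since every claim is a bilinear statement and $\sigma,\hbar$ are real.

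For the conjugation identity $(f \star_\sigma g)^* = g^* \star_{\bar\sigma} f^*$, I would start from
\begin{equation*}
f \star_\sigma g = \sum_{n,m=0}^\infty (-1)^m (i\hbar)^{n+m} \frac{\sigma^n \bar{\sigma}^m}{n!m!} (\partial_x^n \partial_p^m f)(\partial_x^m \partial_p^n g).
\end{equation*}
Applying complex conjugation term-by-term and using that $\sigma,\bar\sigma \in \mathbb{R}$, each factor $(i\hbar)^{n+m}$ becomes $(-i\hbar)^{n+m} = (-1)^{n+m}(i\hbar)^{n+m}$, so that
\begin{equation*}
(f \star_\sigma g)^* = \sum_{n,m=0}^\infty (-1)^{n} (i\hbar)^{n+m} \frac{\sigma^n \bar{\sigma}^m}{n!m!} (\partial_x^n \partial_p^m f^*)(\partial_x^m \partial_p^n g^*).
\end{equation*}
Separately, writing out $g^* \star_{\bar\sigma} f^*$ from the same formula (with roles of the two arguments and of $\sigma,\bar\sigma$ swapped) and then relabeling the summation indices $n \leftrightarrow m$ produces exactly the same series; ordinary point-wise multiplication commutes, so the two products of partial derivatives can be freely reordered. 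This is the only real bookkeeping step: one must check that the three sign/parameter swaps (conjugation of $i\hbar$, interchange of the two arguments, and $\sigma \leftrightarrow \bar\sigma$) cancel consistently. I expect this index juggling to be the only point where a careless reader could lose a sign.

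For the Leibniz rules with respect to $\partial_x$ and $\partial_p$, the plan is even more direct: apply $\partial_x$ to each term of (\ref{eq:6.2.9b}) and use the ordinary Leibniz rule, which gives
\begin{equation*}
\partial_x\bigl[(\partial_x^n \partial_p^m f)(\partial_x^m \partial_p^n g)\bigr] = (\partial_x^{n+1} \partial_p^m f)(\partial_x^m \partial_p^n g) + (\partial_x^n \partial_p^m f)(\partial_x^{m+1} \partial_p^n g).
\end{equation*}
The first sum is by inspection the series for $(\partial_x f) \star_\sigma g$ and the second is the series for $f \star_\sigma (\partial_x g)$; the $\partial_p$-Leibniz rule follows by the same argument with $p$ in place of $x$. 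Equivalently, using the exponential form (\ref{eq:6.2.9a}), the outer $\partial_x$ commutes with every derivative inside the bidifferential operator and distributes to the two arguments via the classical Leibniz rule.

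Finally, the statement that complex conjugation is an involution when $\sigma = \tfrac12$ is immediate from the first identity: in that case $\bar\sigma = \sigma$, so $(f \star_{1/2} g)^* = g^* \star_{1/2} f^*$, which together with antilinearity and $f^{**} = f$ is precisely the involution property $(f \star_{1/2} g)^{**} = f \star_{1/2} g$ and compatibility with the product. No further computation is needed.
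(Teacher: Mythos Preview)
Your proposal is correct and follows precisely the approach indicated by the paper, which simply states that these properties follow immediately from the series expansion (\ref{eq:6.2.9b}) without spelling out the details. Your term-by-term verification of the conjugation identity (tracking the sign from $(i\hbar)^{n+m}\mapsto(-1)^{n+m}(i\hbar)^{n+m}$ and then relabeling $n\leftrightarrow m$) and your use of the ordinary Leibniz rule on each summand are exactly the computations the paper leaves implicit.
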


The involution $\dagger$ of the algebra $(\mathcal{A}_Q, \star_\sigma)$ will be defined by the equation
\begin{equation*}
A^\dagger := S_{\sigma - \bar{\sigma}} A^*, \quad A \in \mathcal{A}_Q.
\end{equation*}
It is easy to check that the involution defined in such a way is a proper involution of the algebra $\mathcal{A}_Q$. Note that for $\sigma = \frac{1}{2}$
the involution $\dagger$ is the usual complex conjugation of functions.

%%%%%%%%%%%%%%%%%%%%%%%%%%%%%%%%%%%%%%%%%%%%%%%%%%%%%%%%%%%%%%%%%%%%%%%%%%%%%%%%%%%%%%%%%%%%%%%%%%%%%%%%%%%%%%%%%%%%%%%%%%%%%%%%%%%%%%%%%%%%%%%%%%%%%%%%%%
\subsection{Space of states}
\label{subsec:6.3}
In what follows the problem of defining a space of states will be discussed. In analogy with the classical Hamiltonian mechanics one could try to
define admissible states of the quantum Hamiltonian system as probabilistic distributions on the phase space. After doing this one would quickly realize
that it is necessary to extend the space of admissible states to \emph{pseudo-probabilistic distributions}, i.e. complex valued functions on the phase
space which are normalized but need not to be non-negative. Hence, it is postulated that the space, which contains all admissible states is the space
$L^2(M)$ of all square integrable functions on the phase space $M = \mathbb{R}^2$ with respect to the Lebesgue measure.

It is possible to introduce the $\star_\sigma$-product between functions from $L^2(M)$, as to make from $L^2(M)$ an algebra with respect to the
$\star_\sigma$-multiplication \cite{Gracia-Bondia:1988}. First, note that, by using the integral form (\ref{eq:6.2.10}) of the
$\star_\sigma$-product, the $\star_\sigma$-product of two Schwartz functions can be defined. Moreover, the $\star_\sigma$-product of two
Schwartz functions is again a Schwartz function, hence the Schwartz space $\mathcal{S}(M)$ is an algebra with respect to the
$\star_\sigma$-multiplication. Indeed, the Schwartz space $\mathcal{S}(M)$ is the space of all smooth functions $f \in C^\infty(M)$ such that
$\| x^n p^m \partial_x^r \partial_p^s f \|_\infty = \displaystyle{\sup_{(x,p) \in M}} | x^n p^m \partial_x^r \partial_p^s f(x,p) | < \infty$ for every
$n,m,r,s \in \mathbb{N}$. For $f,g \in \mathcal{S}(M)$ from (\ref{eq:6.2.10}) it immediately follows that
\begin{align*}
\partial_x(f \star_\sigma g) & = (\partial_x f) \star_\sigma g + f \star_\sigma (\partial_x g), \displaybreak[0] \\
\partial_p(f \star_\sigma g) & = (\partial_p f) \star_\sigma g + f \star_\sigma (\partial_p g), \displaybreak[0] \\
x (f \star_\sigma g) & = f \star_\sigma (x g) + i\hbar \bar{\sigma} (\partial_p f) \star_\sigma g
= (x f) \star_\sigma g - i\hbar \sigma f \star_\sigma (\partial_p g), \displaybreak[0] \\
p (f \star_\sigma g) & = f \star_\sigma (p g) - i\hbar \sigma (\partial_x f) \star_\sigma g
= (p f) \star_\sigma g + i\hbar \bar{\sigma} f \star_\sigma (\partial_x g).
\end{align*}
By induction on these formulas, $f \star_\sigma g \in C^\infty(M)$ and $\| x^n p^m \partial_x^r \partial_p^s (f \star_\sigma g) \|_\infty < \infty$ for
every $n,m,r,s \in \mathbb{N}$.

The below theorem says about the possibility of extension of the $\star_\sigma$-product to the whole space $L^2(M)$.
\begin{theorem}
For $\Psi,\Phi \in \mathcal{S}(M)$ there holds
\begin{equation}
\| \Psi \star_\sigma \Phi \|_{L^2} \le \frac{1}{(2\pi\hbar)^{1/2}} \| \Psi \|_{L^2} \| \Phi \|_{L^2}.
\label{eq:6.3.24}
\end{equation}
There exists an unique extension of the $\star_\sigma$-product from the space $\mathcal{S}(M)$ to the whole space $L^2(M)$, such that relation
(\ref{eq:6.3.24}) holds.
\end{theorem}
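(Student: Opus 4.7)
The plan is to reduce the general case to the Moyal case ($\sigma = \tfrac12$), prove the bound there via a Fourier/Plancherel calculation, transfer back to arbitrary $\sigma$ using the gauge equivalence from Subsection~\ref{subsec:6.2}, and finally extend from $\mathcal{S}(M)$ to $L^2(M)$ by the bilinear Bounded Linear Transformation theorem, which also gives uniqueness.

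For the Moyal case I would work directly with the integral representation (\ref{eq:6.2.10}) at $\sigma=\tfrac12$. After a linear change of variables one can put $\Psi \star_{1/2} \Phi$ into the form of a \emph{twisted convolution} (equivalently, one may pass through the symplectic Fourier transform and check that the Moyal product becomes an ordinary twisted convolution modulated by a phase of modulus one). The crucial point is that the twisting phase never affects absolute values, so it can be majorized and controlled by Plancherel. Concretely, for fixed $(\xi,\eta)$ the map $\Phi \mapsto \Phi(x-\bar\sigma\eta,\,p-\sigma\xi)\,e^{\frac{i}{\hbar}(\xi x-\eta p)}$ is an $L^2$-isometry in $(x,p)$; combined with Plancherel applied in the $(\xi,\eta)$-integration, a Minkowski-type manipulation yields
\begin{equation*}
\|\Psi \star_{1/2} \Phi\|_{L^2} \le \frac{1}{(2\pi\hbar)^{1/2}}\,\|\Psi\|_{L^2}\|\Phi\|_{L^2}.
\end{equation*}
An alternative (and arguably cleaner) route is to invoke the Weyl correspondence $f\mapsto W_f$, which on $\mathcal{S}(M)$ is, up to the constant $(2\pi\hbar)^{-1/2}$, an isometry from $L^2(M)$ onto the Hilbert--Schmidt class on $L^2(\mathbb{R})$ and intertwines $\star_{1/2}$ with operator composition; the bound is then the standard estimate $\|AB\|_{\mathrm{HS}} \le \|A\|_{\mathrm{op}}\|B\|_{\mathrm{HS}} \le \|A\|_{\mathrm{HS}}\|B\|_{\mathrm{HS}}$.

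To pass from $\sigma=\tfrac12$ to arbitrary $\sigma$, observe that the intertwiner $S_{\sigma-1/2}=\exp\bigl(i\hbar(\sigma-\tfrac12)\partial_x\partial_p\bigr)$ is a Fourier multiplier by $\exp\bigl(-i\hbar(\sigma-\tfrac12)\xi\eta\bigr)$, a function of modulus $1$, hence \emph{unitary} on $L^2(M)$. The gauge-equivalence relation (\ref{eq:6.1.6}) gives
\begin{equation*}
\Psi \star_\sigma \Phi = S_{\sigma-1/2}\bigl(S_{1/2-\sigma}\Psi \,\star_{1/2}\, S_{1/2-\sigma}\Phi\bigr),
\end{equation*}
so unitarity of $S_{\pm(\sigma-1/2)}$ transports the Moyal bound, with the same constant, to $\star_\sigma$ on Schwartz functions.

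With (\ref{eq:6.3.24}) established on $\mathcal{S}(M)$, the bilinear map $\star_\sigma\colon \mathcal{S}(M)\times\mathcal{S}(M)\to L^2(M)$ is continuous for the $L^2$ topology. Since $\mathcal{S}(M)$ is dense in $L^2(M)$, the bilinear BLT theorem provides a unique continuous bilinear extension to $L^2(M)\times L^2(M)\to L^2(M)$ satisfying the same inequality; uniqueness is immediate from density. The main obstacle, as I see it, is not the density argument but pinning down the \emph{sharp} constant $(2\pi\hbar)^{-1/2}$ in the Moyal step: it is very sensitive to the normalization of the Fourier transform and of the Weyl quantization, and the prefactors in (\ref{eq:6.2.10}) must be tracked carefully so that no stray factor of $2\pi$ or $\hbar$ appears.
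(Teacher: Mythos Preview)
Your proof is correct provided you rely on the Weyl/Hilbert--Schmidt route for the Moyal case; the first route you sketch (``a Minkowski-type manipulation'') is the weak point. Applying Minkowski's integral inequality to the representation~(\ref{eq:6.2.10}) gives only
\[
\|\Psi \star_{1/2} \Phi\|_{L^2} \le \frac{1}{2\pi\hbar}\,\|\mathcal{F}\Psi\|_{L^1}\,\|\Phi\|_{L^2},
\]
which is \emph{separate} continuity with the wrong constant, not the sharp bound~(\ref{eq:6.3.24}); the ``Plancherel in $(\xi,\eta)$'' step does not straighten this out because the shifts in $\Phi$ couple the two sets of variables. Your alternative argument---the Weyl map is, up to the factor $(2\pi\hbar)^{-1/2}$, an $L^2\to\mathcal{S}^2$ isometry intertwining $\star_{1/2}$ with operator composition, whence $\|AB\|_{\mathcal{S}^2}\le\|A\|_{\mathcal{S}^2}\|B\|_{\mathcal{S}^2}$ gives the bound---is the one that actually delivers the sharp constant, and together with the unitarity of $S_{\sigma-1/2}$ and the bilinear BLT extension it constitutes a complete proof.

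This is a genuinely different and more streamlined argument than the paper's. The paper first proves exactly the $L^1$--$L^2$ separate-continuity estimate above, then invokes an orthonormal basis $\{\Psi_{ij}\}\subset\mathcal{S}(M)$ satisfying $\Psi_{ij}\star_\sigma\Psi_{kl}=(2\pi\hbar)^{-1/2}\delta_{il}\Psi_{kj}$ (whose existence is deferred to Section~\ref{sec:9}), uses separate continuity to justify expanding $\star_\sigma$-products in that basis, \emph{defines} the extension on $L^2(M)$ by the resulting coefficient formula, and finally obtains~(\ref{eq:6.3.24}) from Cauchy--Schwarz on the coefficient sums. In effect the paper is reproving the Hilbert--Schmidt inequality in matrix-unit coordinates, whereas you invoke it directly. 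Your approach avoids the forward reference and the two-step construction; the paper's approach has the minor advantage of giving an explicit formula for the extended product in terms of the basis.
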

\begin{proof}
First, lets prove that the $\star_\sigma$-product on $\mathcal{S}(M)$ is separately continuous in the $L^2$-norm (i.e. the $\star_\sigma$-product as a
map $\mathcal{S}(M) \times \mathcal{S}(M) \to \mathcal{S}(M)$ is continuous with respect to the first and second argument separately). Note that from
Jensen's inequality for $f,g \in \mathcal{S}(M)$ there holds (see \ref{sec:3})
\begin{equation*}
\left| \iint f(x,p) g(x,p) \d{x} \d{p} \right|^2 \le \iint |g(x,p)| \d{x} \d{p} \iint |f(x,p)|^2 |g(x,p)| \d{x} \d{p}.
\end{equation*}
For $\Psi,\Phi \in \mathcal{S}(M)$ from the above equation and the integral form (\ref{eq:6.2.10}) of the $\star_\sigma$-product there holds
\begin{align*}
\| \Psi \star_\sigma \Phi \|_{L^2}^2 & = \iint | \Psi \star_\sigma \Phi |^2 \d{x} \d{p}
= \frac{1}{(2\pi\hbar)^{2}} \iint \left| \iint \mathcal{F}\Psi(\xi,\eta) \Phi(x - \bar{\sigma} \eta,p - \sigma \xi)
e^{\frac{i}{\hbar}(\xi x - \eta p)} \d{\xi} \d{\eta} \right|^2 \d{x} \d{p} \displaybreak[0] \\
& \le \frac{1}{(2\pi\hbar)^{2}} \iint \| \mathcal{F}\Psi \|_{L^1} \iint |\Phi(x - \bar{\sigma} \eta,p - \sigma \xi)|^2 |\mathcal{F}\Psi(\xi,\eta)|
\d{\xi} \d{\eta} \d{x} \d{p}
= \frac{1}{(2\pi\hbar)^{2}} \| \mathcal{F}\Psi \|_{L^1}^2 \| \Phi \|_{L^2}^2.
\end{align*}
Hence
\begin{equation*}
\| \Psi \star_\sigma \Phi \|_{L^2} \le \frac{1}{2\pi\hbar} \| \mathcal{F}\Psi \|_{L^1} \| \Phi \|_{L^2}.
\end{equation*}
Analogically, one proves that
\begin{equation*}
\| \Psi \star_\sigma \Phi \|_{L^2} \le \frac{1}{2\pi\hbar} \| \mathcal{F}\Phi \|_{L^1} \| \Psi \|_{L^2}.
\end{equation*}
The above equations show that the $\star_\sigma$-product on $\mathcal{S}(M)$ is separately continuous in the $L^2$-norm.

Now, assume that $\Psi_{ij} \in \mathcal{S}(M)$ is an orthonormal basis in $L^2(M)$ satisfying
\begin{equation}
\Psi_{ij} \star_\sigma \Psi_{kl} = \frac{1}{(2\pi\hbar)^{1/2}} \delta_{il} \Psi_{kj}.
\label{eq:6.3.26}
\end{equation}
Such basis always exists (see Section \ref{sec:9}). Every $\Psi \in L^2(M)$ can be expanded in this basis
\begin{equation*}
\Psi = \sum_{i,j=0}^\infty c_{ij} \Psi_{ij},
\end{equation*}
where the convergence is in the $L^2$-norm. First note that for $\Psi = \sum_{i,j} c_{ij} \Psi_{ij} \in \mathcal{S}(M)$ and
$\Phi = \sum_{k,l} b_{kl} \Psi_{kl} \in \mathcal{S}(M)$, using (\ref{eq:6.3.26}) and the continuity of the $\star_\sigma$-product in the
$L^2$-norm, the $\star_\sigma$-product of two Schwartz functions can be written in a form
\begin{equation}
\Psi \star_\sigma \Phi = \left( \sum_{i,j=0}^\infty c_{ij} \Psi_{ij} \right) \star_\sigma \left( \sum_{k,l=0}^\infty b_{kl} \Psi_{kl} \right)
= \frac{1}{(2\pi\hbar)^{1/2}} \sum_{i,j,k=0}^\infty c_{ij} b_{ki} \Psi_{kj}.
\label{eq:6.3.27}
\end{equation}
Now, for $\Psi = \sum_{i,j} c_{ij} \Psi_{ij} \in L^2(M)$ and $\Phi = \sum_{k,l} b_{kl} \Psi_{kl} \in L^2(M)$ lets define the
$\star_\sigma$-product of functions $\Psi$ and $\Phi$ by the formula
\begin{equation*}
\Psi \star_\sigma \Phi = \frac{1}{(2\pi\hbar)^{1/2}} \sum_{k,j=0}^\infty \left( \sum_{i=0}^\infty c_{ij} b_{ki} \right) \Psi_{kj}.
\end{equation*}
From (\ref{eq:6.3.27}) above definition of the $\star_\sigma$-product, for Schwartz functions, is consistent with the previous one.

From Schwartz inequality it follows that
\begin{equation*}
\| \Psi \star_\sigma \Phi \|_{L^2}^2 \le \frac{1}{2\pi\hbar} \sum_{k,j=0}^\infty \left( \sum_{i=0}^\infty |c_{ij}| |b_{ki}| \right)^2
\le \frac{1}{2\pi\hbar} \sum_{i,j=0}^\infty |c_{ij}|^2 \sum_{k,l=0}^\infty |b_{kl}|^2
 = \frac{1}{2\pi\hbar} \| \Psi \|_{L^2}^2 \| \Phi \|_{L^2}^2.
\end{equation*}
Hence, the $\star_\sigma$-product in $L^2(M)$ satisfies the relation (\ref{eq:6.3.24}). The uniqueness of the presented extension of the
$\star_\sigma$-product is evident from the fact that $\mathcal{S}(M)$ is dense in $L^2(M)$.
\end{proof}

%%%%%%%%%%%%%%%%%%%%%%%%%%%%%%%%%%%%%%%%%%%%%%%%%%%%%%%%%%%%%%%%%%%%%%%%%%%%%%%%%%%%%%%%%%%%%%%%%%%%%%%%%%%%%%%%%%%%%%%%%%%%%%%%%%%%%%%%%%%%%%%%%%%%%%%%%%
\subsection{Weyl operator calculus for $\star_\sigma$-family of quantizations}
\label{subsec:6.4}
During a quantization procedure it is needed to assign to a function of commuting variables respective operators, i.e. admissible functions of
noncommuting variables. This can be achieved be means of \emph{Weyl operator (pseudo-differential) calculus}. The Weyl correspondence rule, associating
with a symbol (a function defined on a phase space) an operator defined on a Hilbert space, can be expressed by a unitary irreducible representation of
the Heisenberg group. For more details refer to \cite{Agarwal:1970, Gosson:2005}. In this section the equivalent (and slightly simpler) approach to the
Weyl calculus will be considered, following the results of \citet{Cohen:1966}.

From definition, an \emph{operator function} assigned to the function $A$ is given by the equation
\begin{equation}
A_{\sigma}(\hat{q},\hat{p}) := \frac{1}{2\pi\hbar} \iint \mathcal{F}A(\xi,\eta) e^{\frac{i}{\hbar}(\xi \hat{q} - \eta \hat{p})}
e^{\frac{i}{\hbar}(\frac{1}{2} - \sigma) \xi \eta} \d{\xi} \d{\eta},
\label{eq:6.4.1}
\end{equation}
where $\sigma \in \mathbb{R}$ is a parameter describing different orderings. Examples of most common orderings are: for the case $\sigma = 0$ the
\emph{standard ordering} \cite{Mehta:1964} (operator function will have all position operators on the left and all momentum operators on the right),
for the case $\sigma = 1$ the \emph{anti-standard ordering} (operator function will have all position operators on the right and all momentum
operators on the left), for the case $\sigma = \frac{1}{2}$ the \emph{symmetric ordering} (\emph{Weyl ordering}) \cite{Wigner:1932} (operator function
will be arithmetic average of all possible permutations of position and momentum operators).

Equation (\ref{eq:6.4.1}), using the Baker-Campbell-Hausdorff formula (see \ref{sec:1}), can be rewritten in a form
\begin{equation}
A_{\sigma}(\hat{q},\hat{p}) := \frac{1}{2\pi\hbar} \iint \mathcal{F}A(\xi,\eta) e^{\frac{i}{\hbar}\xi \hat{q}}
e^{-\frac{i}{\hbar} \eta \hat{p}} e^{-\frac{i}{\hbar} \sigma \xi \eta} \d{\xi} \d{\eta}.
\label{eq:6.4.15}
\end{equation}

To illustrate different orderings lets consider an examples of the operator function. Using (\ref{eq:6.4.1}) it can be easily calculated that for the
function $A(x,p) = xp^2$ the operator function is equal
\begin{align*}
A_{\sigma}(\hat{q},\hat{p}) & = (\hat{q} \hat{p}^2)_{\sigma} = \hat{q} \hat{p}^2 - 2i\hbar \sigma \hat{p} \\
& = \hat{q} \hat{p}^2 - 2\sigma \lambda [\hat{q},\hat{p}] \hat{p} - 2\sigma(1 - \lambda) \hat{p} [\hat{q},\hat{p}] \\
& = (1 - 2\sigma \lambda)\hat{q} \hat{p}^2 + 2\sigma (2\lambda - 1) \hat{p} \hat{q} \hat{p} + 2\sigma(1 - \lambda) \hat{p}^2 \hat{q}.
\end{align*}
In this case the $\lambda$-family of $\sigma$-orderings was received for $\lambda \in \mathbb{R}$.
The case when $\sigma = \frac{1}{2}$ (the Weyl ordering) gives
\begin{equation*}
(\hat{q} \hat{p}^2)_{\sigma=\frac{1}{2}} = (1 - \lambda)\hat{q} \hat{p}^2 + (2\lambda - 1) \hat{p} \hat{q} \hat{p}
+ (1 - \lambda) \hat{p}^2 \hat{q}.
\end{equation*}
In particular
\begin{equation*}
(\hat{q} \hat{p}^2)_{\sigma=\frac{1}{2}} = \frac{1}{3} \hat{q} \hat{p}^2 + \frac{1}{3} \hat{p} \hat{q} \hat{p}
+ \frac{1}{3} \hat{p}^2 \hat{q}
= \frac{1}{2} \hat{q} \hat{p}^2 + \frac{1}{2} \hat{p}^2 \hat{q} = \hat{p} \hat{q} \hat{p},
\end{equation*}
for $\lambda = \frac{2}{3},\frac{1}{2},1$.

There is a useful property of operator functions. Namely, there holds
\begin{equation}
A_{\sigma}^\dagger(\hat{q},\hat{p}) = A^*_{\bar{\sigma}}(\hat{q},\hat{p})
\label{eq:6.4.12}
\end{equation}
Indeed
\begin{equation*}
A_{\sigma}^\dagger(\hat{q},\hat{p}) = \frac{1}{2\pi\hbar} \iint \mathcal{F}A^*(-\xi,-\eta) e^{-\frac{i}{\hbar}(\xi \hat{q}
- \eta \hat{p} + (\frac{1}{2} - \sigma) \xi \eta)} \d{\xi} \d{\eta}
\end{equation*}
and above equation after the change of coordinates: $\xi \to -\xi$, $\eta \to -\eta$ can be written in a form
\begin{equation*}
A_{\sigma}^\dagger(\hat{q},\hat{p}) = \frac{1}{2\pi\hbar} \iint \mathcal{F}A^*(\xi,\eta) e^{\frac{i}{\hbar}(\xi \hat{q}
- \eta \hat{p} + (\frac{1}{2} - \bar{\sigma}) \xi \eta)} \d{\xi} \d{\eta}
= A^*_{\bar{\sigma}}(\hat{q},\hat{p}).
\end{equation*}

For further use it will be useful to introduce an operator function from Hermitian operators $\underline{\hat{q}}$, $\underline{\hat{p}}$ satisfying the
following commutation relation
\begin{equation*}
[\underline{\hat{q}}, \underline{\hat{p}}] = -i\hbar.
\end{equation*}
This will be needed below where it will be shown that operators of the form ${} \star A$ ($A \in \mathcal{A}_Q$) can be written as
operator functions of appropriate operators $\underline{\hat{q}}$, $\underline{\hat{p}}$.
For this purpose the same defining equation (equation~(\ref{eq:6.4.1})) can be used as in the previous definition of operator functions. Using the
Baker-Campbell-Hausdorff formula equation~(\ref{eq:6.4.1}) can be now rewritten in the form
\begin{equation}
A_{\sigma}(\underline{\hat{q}},\underline{\hat{p}}) := \frac{1}{2\pi\hbar} \iint \mathcal{F}A(\xi,\eta)
e^{\frac{i}{\hbar}\xi \underline{\hat{q}}} e^{-\frac{i}{\hbar} \eta \underline{\hat{p}}}
e^{\frac{i}{\hbar} \bar{\sigma} \xi \eta} \d{\xi} \d{\eta}.
\label{eq:6.4.16}
\end{equation}
From above equation it can be seen that in this case the roles of $\sigma$ and $\bar{\sigma}$ are reversed. For example an operator function is standard
ordered for $\sigma = 1$ and anti-standard ordered for $\sigma = 0$. All properties and equations derived for the previous case can be rederived in
a similar way for this case.

Using the integral representation (\ref{eq:6.2.10}) of the $\star_\sigma$-product it is possible to define a left and right $\star_\sigma$-product of
a function $A \in \mathcal{A}_Q$ with functions from some subspace of the space of states $L^2(M)$ receiving again a function from $L^2(M)$. Note that
in general the function $A \in \mathcal{A}_Q$ cannot be multiplied by every function from $L^2(M)$ in such a way, as to receive again a function from
$L^2(M)$. Above arguments state that operators from the algebra $\hat{\mathcal{A}}_Q$, hence in particular observables, can be treated as operators
defined on the Hilbert space $L^2(M)$.

It can be shown that a left action of some function $A \in \mathcal{A}_Q$ through the $\star_\sigma$-product on some function
$\Psi \in L^2(M)$ can be treated as an action of an operator function $A_{\sigma}(\hat{q}_{\sigma}, \hat{p}_{\sigma})$ on
function $\Psi$ where $\hat{q}_{\sigma},\hat{p}_{\sigma}$ are some operators defined on the Hilbert space $L^2(M)$. Similarly, a right
action of $A \in \mathcal{A}_Q$ through the $\star_{\sigma}$-product on $\Psi \in L^2(M)$ can be treated as an action of an operator
function $A_{\sigma}(\hat{q}^*_{\bar{\sigma}}, \hat{p}^*_{\bar{\sigma}})$ on function $\Psi$. First, note that by using
equation~(\ref{eq:6.2.9a}) and the identity
\begin{equation}
e^{a \partial_x} f(x) = f(x + a), \quad a \in \mathbb{R}
\label{eq:6.4.23}
\end{equation}
valid for any smooth function $f \colon \mathbb{R} \to \mathbb{C}$, the $\star_\sigma$-product of functions $A \in \mathcal{A}_Q$ and
$\Psi \in L^2(M)$ can be formally written in a form
\begin{align*}
A_L \star_\sigma \Psi := A \star_\sigma \Psi & = A(x + i\hbar \sigma \overrightarrow{\partial}_p, p - i \hbar \bar{\sigma} \overrightarrow{\partial}_x)
\Psi, \\
A_R \star_\sigma \Psi := \Psi \star_\sigma A & = A(x - i\hbar \bar{\sigma} \overrightarrow{\partial}_p, p + i \hbar \sigma \overrightarrow{\partial}_x)
\Psi.
\end{align*}
Lets define the operators $\hat{q}_{\sigma},\hat{p}_{\sigma}$ \cite{Chruscinski:2005, Gosson:2005} by the equations
\begin{equation*}
\hat{q}_{\sigma} := x + i\hbar \sigma \partial_p = x \star_\sigma {}, \qquad
\hat{p}_{\sigma} := p - i\hbar \bar{\sigma} \partial_x = p \star_\sigma {}.
\end{equation*}
It can be easily checked that the operators $\hat{q}_{\sigma},\hat{p}_{\sigma}$ satisfy the commutation relation $[\hat{q}_{\sigma}, \hat{p}_{\sigma}]
= i\hbar$. The operators $\hat{q}^*_{\bar{\sigma}},\hat{p}^*_{\bar{\sigma}}$ take then the form
\begin{equation*}
\hat{q}^*_{\bar{\sigma}} := x - i\hbar \bar{\sigma} \partial_p = {} \star_\sigma x, \qquad
\hat{p}^*_{\bar{\sigma}} := p + i\hbar \sigma \partial_x = {} \star_\sigma p,
\end{equation*}
and they satisfy the commutation relation $[\hat{q}^*_{\bar{\sigma}}, \hat{p}^*_{\bar{\sigma}}] = -i\hbar$. There holds
\begin{theorem}
\label{thm:6.4.1}
For any function $A \in \mathcal{A}_Q$
\begin{equation*}
A_L \star_\sigma {} = A(\overrightarrow{\hat{q}}_{\sigma},\overrightarrow{\hat{p}}_{\sigma})
= A_{\sigma}(\hat{q}_{\sigma},\hat{p}_{\sigma}), \qquad
A_R \star_\sigma {} = A(\overrightarrow{\hat{q}}^*_{\bar{\sigma}},\overrightarrow{\hat{p}}^*_{\bar{\sigma}})
= A_{\sigma}(\hat{q}^*_{\bar{\sigma}},\hat{p}^*_{\bar{\sigma}}).
\end{equation*}
\end{theorem}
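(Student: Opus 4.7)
The plan is to prove the two identities by comparing Fourier-integral representations of both sides. The first equality $A_L \star_\sigma {} = A(\overrightarrow{\hat{q}}_{\sigma},\overrightarrow{\hat{p}}_{\sigma})$ is essentially a rewriting: formula (\ref{eq:6.2.9a}) together with the translation identity (\ref{eq:6.4.23}) lets one read $f \star_\sigma g$ as $f$ evaluated with each $\overleftarrow{\partial}_x$ and $\overleftarrow{\partial}_p$ replaced by the corresponding right-arrow derivative acting on $g$. Since $\hat{q}_\sigma = x + i\hbar\sigma\partial_p$ and $\hat{p}_\sigma = p - i\hbar\bar\sigma\partial_x$ by definition, formally substituting into the Taylor series for $A$ gives exactly $A_L \star_\sigma \Psi$. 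The substantive content of the theorem is the second equality, which asserts that this formal substitution coincides with the $\sigma$-ordered operator function defined by (\ref{eq:6.4.1}).

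For the second equality the plan is to compute directly the action of $A_\sigma(\hat{q}_\sigma,\hat{p}_\sigma)$ on a function $\Psi$ via the BCH-rearranged form (\ref{eq:6.4.15}) and match it against the integral form (\ref{eq:6.2.10}) of $A \star_\sigma \Psi$. The crucial observation is that multiplication operators $x,p$ commute with $\partial_x,\partial_p$ respectively, which lets one factor
\begin{equation*}
e^{\frac{i}{\hbar}\xi\hat{q}_\sigma} = e^{\frac{i}{\hbar}\xi x}\, e^{-\sigma\xi \partial_p}, \qquad
e^{-\frac{i}{\hbar}\eta\hat{p}_\sigma} = e^{-\frac{i}{\hbar}\eta p}\, e^{-\bar\sigma\eta \partial_x}.
\end{equation*}
Applying these to $\Psi$ in the order dictated by (\ref{eq:6.4.15}) and invoking the translation identity (\ref{eq:6.4.23}) twice, I obtain
\begin{equation*}
e^{\frac{i}{\hbar}\xi\hat{q}_\sigma}\,e^{-\frac{i}{\hbar}\eta\hat{p}_\sigma}\Psi(x,p)
= e^{\frac{i}{\hbar}(\xi x - \eta p + \sigma\xi\eta)}\,\Psi(x - \bar\sigma\eta,\, p - \sigma\xi).
\end{equation*}
The extra phase $e^{\frac{i}{\hbar}\sigma\xi\eta}$ is precisely cancelled by the BCH factor $e^{-\frac{i}{\hbar}\sigma\xi\eta}$ appearing in (\ref{eq:6.4.15}), so integrating against $\mathcal{F}A(\xi,\eta)$ reproduces term-by-term the first line of (\ref{eq:6.2.10}). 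This proves $A_\sigma(\hat{q}_\sigma,\hat{p}_\sigma)\Psi = A \star_\sigma \Psi = A_L \star_\sigma \Psi$.

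The right-action case is handled by the same method with two changes. First, one uses the second integral representation in (\ref{eq:6.2.10}), in which the $\sigma$ and $\bar\sigma$ shifts act on the arguments of $A$. Second, one uses formula (\ref{eq:6.4.16}), which supplies the BCH factor $e^{\frac{i}{\hbar}\bar\sigma\xi\eta}$ appropriate to the operators $\hat{q}^*_{\bar\sigma} = x - i\hbar\bar\sigma\partial_p$ and $\hat{p}^*_{\bar\sigma} = p + i\hbar\sigma\partial_x$, which satisfy $[\hat{q}^*_{\bar\sigma},\hat{p}^*_{\bar\sigma}] = -i\hbar$ (with $\sigma$ and $\bar\sigma$ playing interchanged roles). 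Factoring the exponentials as above and applying (\ref{eq:6.4.23}) to translate $\Psi$ yields precisely the second line of (\ref{eq:6.2.10}).

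The main obstacle is not algebraic but analytic: giving rigorous meaning to the formal manipulations for a general $A \in \mathcal{A}_Q$. When $A$ is a polynomial or a Schwartz function, $\mathcal{F}A$ exists as a tempered distribution and all integrals are unambiguously convergent on $\Psi \in \mathcal{S}(M)$, and the factorization of exponentials is justified by the commutation $[x,\partial_p]=[p,\partial_x]=0$. For more general smooth $A$, the operator $A_\sigma(\hat{q}_\sigma,\hat{p}_\sigma)$ must be interpreted as densely defined and (typically) unbounded on $L^2(M)$, and the identity extends only on a suitable invariant domain; as the paper already flags in the remark preceding this subsection, such convergence and topology questions are set aside rather than resolved.
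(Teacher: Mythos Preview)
Your proof is correct and follows essentially the same route as the paper: factor the exponentials of $\hat{q}_\sigma$ and $\hat{p}_\sigma$ using $[x,\partial_p]=[p,\partial_x]=0$, apply the translation identity (\ref{eq:6.4.23}), observe that the resulting phase $e^{\frac{i}{\hbar}\sigma\xi\eta}$ cancels the ordering factor in (\ref{eq:6.4.15}), and match against the integral form (\ref{eq:6.2.10}). You go a bit further than the paper by writing out the right-action case explicitly and by flagging the domain/convergence caveats for general $A$, both of which are appropriate additions.
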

\begin{proof}
Above theorem can be proved using the integral form (\ref{eq:6.2.10}) of the $\star_\sigma$-product. First, lets check how the
operator $e^{\frac{i}{\hbar} \xi \hat{q}_{\sigma}} e^{-\frac{i}{\hbar} \eta \hat{p}_{\sigma}}$ acts on some function
$\Psi \in L^2(M)$. Using the Baker-Campbell-Hausdorff formulae (see \ref{sec:1}) and the identity~(\ref{eq:6.4.23}) one receives
\begin{align*}
e^{\frac{i}{\hbar} \xi \hat{q}_{\sigma}} e^{-\frac{i}{\hbar} \eta \hat{p}_{\sigma}} \Psi(x,p)
& = e^{\frac{i}{\hbar} \xi (x + i\hbar \sigma \partial_p)} e^{-\frac{i}{\hbar} \eta (p - i\hbar \bar{\sigma} \partial_x)} \Psi(x,p)
= e^{\frac{i}{\hbar} \xi x} e^{-\sigma \xi \partial_p} e^{-\frac{i}{\hbar} \eta p} e^{-\bar{\sigma} \eta \partial_x} \Psi(x,p) \\
& = e^{\frac{i}{\hbar} \xi x} e^{-\frac{i}{\hbar} \eta p} e^{\frac{i}{\hbar} \sigma \xi \eta} e^{-\sigma \xi \partial_p}
e^{-\bar{\sigma} \eta \partial_x} \Psi(x,p)
= e^{\frac{i}{\hbar} \xi x} e^{-\frac{i}{\hbar} \eta p} e^{\frac{i}{\hbar} \sigma \xi \eta} \Psi(x - \bar{\sigma} \eta, p - \sigma \xi).
\end{align*}
Using above equation and equation (\ref{eq:6.4.15}) it follows immediately that
\begin{equation*}
A_{\sigma}(\hat{q}_{\sigma},\hat{p}_{\sigma}) \Psi(x,p) = \frac{1}{2\pi\hbar} \iint \mathcal{F}A(\xi,\eta) \Psi(x - \bar{\sigma} \eta, p - \sigma \xi)
e^{\frac{i}{\hbar} \xi x} e^{-\frac{i}{\hbar} \eta p} \d{\xi} \d{\eta},
\end{equation*}
which is just the integral form (\ref{eq:6.2.10}) of the product $A \star_\sigma \Psi$.
\end{proof}

From Theorem \ref{thm:6.4.1} it follows that operators from the algebra $\hat{\mathcal{A}}_Q$, hence in particular observables, can be written as operator
functions of the operators $\hat{q}_{\sigma},\hat{p}_{\sigma}$.

It is possible to introduce adjoint of left and right $\star_\sigma$-multiplication in a standard way
\begin{equation*}
\braket{(A_L \star_\sigma {})^\dagger \Psi_1 | \Psi_2}_{L^2} = \braket{\Psi_1 | A_L \star_\sigma \Psi_2}_{L^2}, \qquad
\braket{(A_R \star_\sigma {})^\dagger \Psi_1 | \Psi_2}_{L^2} = \braket{\Psi_1 | A_R \star_\sigma \Psi_2}_{L^2}.
\end{equation*}
From this it then follows that
\begin{subequations}
\label{eq:6.4.17}
\begin{align}
(A_L \star_\sigma {})^\dagger & = A^\dagger_{\sigma}(\hat{q}_{\sigma},\hat{p}_{\sigma})
= A^*_{\bar{\sigma}}(\hat{q}_{\sigma},\hat{p}_{\sigma}),
\label{eq:6.4.17a} \\
(A_R \star_\sigma {})^\dagger & = A^\dagger_{\sigma}(\hat{q}^*_{\bar{\sigma}},\hat{p}^*_{\bar{\sigma}})
= A^*_{\bar{\sigma}}(\hat{q}^*_{\bar{\sigma}},\hat{p}^*_{\bar{\sigma}}).
\label{eq:6.4.17b}
\end{align}
\end{subequations}
It is easy to check that
\begin{equation*}
(A_L \star_\sigma {})^\dagger = A_L^\dagger \star_\sigma {}, \qquad
(A_R \star_\sigma {})^\dagger = A_R^\dagger \star_\sigma {}.
\end{equation*}

%%%%%%%%%%%%%%%%%%%%%%%%%%%%%%%%%%%%%%%%%%%%%%%%%%%%%%%%%%%%%%%%%%%%%%%%%%%%%%%%%%%%%%%%%%%%%%%%%%%%%%%%%%%%%%%%%%%%%%%%%%%%%%%%%%%%%%%%%%%%%%%%%%%%%%%%%%
\subsection{Family of admissible canonical quantizations}
\label{subsec:6.5}
In the previous sections the $\sigma$-family of equivalent admissible quantizations of the classical Hamiltonian mechanics was reviewed, together with
its basic properties. In the following section the $\sigma$-family of quantizations will be extended to a very broad class of equivalent admissible
quantizations, which shares the same properties as the $\sigma$-family of quantizations. This extended family of quantizations will contain all canonical
quantizations ever considered in the literature, and more.

Let $S$ be a linear automorphism of the algebra of observables $\mathcal{A}_Q$, of the form (\ref{eq:6.1.5}). Moreover, lets assume that $S$ satisfies
\begin{equation}
Sx = x, \quad Sp = p
\label{eq:6.5.1}
\end{equation}
and that $S$ vanishes under the integral sign, i.e. for $f \in \mathcal{A}_Q$ such that $f$ and $Sf$ are integrable
\begin{equation}
\iint Sf(x,p) \d{x}\d{p} = \iint f(x,p) \d{x}\d{p}.
\label{eq:6.5.2}
\end{equation}
The automorphism $S$ can be used to define a new $\star$-product equivalent to the $\star_\sigma$-product by the equation
\begin{equation*}
f \star_{\sigma,S} g := S(S^{-1}f \star_\sigma S^{-1}g), \quad f,g \in \mathcal{A}_Q.
\end{equation*}
It will be shown that all important properties of the $\star_\sigma$-product are also valid for the $\star_{\sigma,S}$-product.

First, lets define a conjugation $\bar{S}$ of the automorphism $S$ by the equation
\begin{equation*}
\bar{S}f := (Sf^*)^*, \quad f \in \mathcal{A}_Q.
\end{equation*}
$\bar{S}$ is also an automorphism of the algebra $\mathcal{A}_Q$, which satisfies relations (\ref{eq:6.5.1}) and (\ref{eq:6.5.2}). There holds
\begin{theorem}
Let $S$, $S_1$, $S_2$ be automorphisms of the algebra $\mathcal{A}_Q$. Then
\begin{equation*}
\overline{S_1 S_2} = \bar{S}_1 \bar{S}_2
\end{equation*}
and in particular
\begin{equation*}
\overline{S^{-1}} = \bar{S}^{-1}.
\end{equation*}
\end{theorem}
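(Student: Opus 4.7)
The plan is to unwind the definition of conjugation directly on both sides and observe that they agree; the theorem is essentially a computation.

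First I would record the only fact needed, namely that complex conjugation is an involution: $(f^*)^* = f$ for $f \in \mathcal{A}_Q$. Then, for the main identity, I would apply both sides of $\overline{S_1 S_2} = \bar{S}_1 \bar{S}_2$ to an arbitrary $f \in \mathcal{A}_Q$. On the left, by the definition of the conjugation, $\overline{S_1 S_2} f = \bigl((S_1 S_2) f^*\bigr)^* = \bigl(S_1(S_2 f^*)\bigr)^*$. On the right, I would first compute $\bar{S}_2 f = (S_2 f^*)^*$, so that $(\bar{S}_2 f)^* = S_2 f^*$, and then
\begin{equation*}
\bar{S}_1 \bar{S}_2 f = \bigl(S_1 (\bar{S}_2 f)^*\bigr)^* = \bigl(S_1 (S_2 f^*)\bigr)^*,
\end{equation*}
which matches the left-hand side. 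Since $f$ was arbitrary, the first identity follows.

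For the second identity, I would specialize the first to $S_1 = S^{-1}$, $S_2 = S$. This gives $\overline{S^{-1}} \, \bar{S} = \overline{S^{-1} S} = \overline{\mathrm{id}}$, and a one-line check shows $\overline{\mathrm{id}}\, f = (\mathrm{id}\, f^*)^* = f$, so $\overline{\mathrm{id}} = \mathrm{id}$. Symmetrically, taking $S_1 = S$, $S_2 = S^{-1}$ yields $\bar{S}\,\overline{S^{-1}} = \mathrm{id}$. Hence $\overline{S^{-1}}$ is a two-sided inverse of $\bar{S}$, so $\overline{S^{-1}} = \bar{S}^{-1}$.

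There is no real obstacle here; the content is purely formal manipulation of the definition $\bar{S}f = (Sf^*)^*$ together with $(f^*)^* = f$. The only care needed is to note that $\bar{S}$ really is a (vector space) automorphism of $\mathcal{A}_Q$, so that composition and inversion make sense on the level of $\bar{S}_1$, $\bar{S}_2$, $\bar{S}^{-1}$; this is immediate because $S \mapsto \bar{S}$ is built from the linear automorphism $S$ by conjugating with the antilinear involution $f \mapsto f^*$, hence preserves linearity and bijectivity.
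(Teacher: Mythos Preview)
Your proof is correct and follows essentially the same approach as the paper: unwinding the definition $\bar{S}f = (Sf^*)^*$ on an arbitrary $f$ and using $(f^*)^* = f$. The paper's version is the same computation written as a single chain of equalities, and it leaves the ``in particular'' clause implicit, whereas you spell it out; this is fine and arguably clearer.
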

\begin{proof}
For $f \in \mathcal{A}_Q$
\begin{equation*}
\overline{S_1 S_2}f = (S_1 S_2 f^*)^* = \bar{S}_1 (S_2 f^*)^* = \bar{S}_1 \bar{S}_2 f.
\end{equation*}
\end{proof}

The $\star_{\sigma,S}$-product commutes under the integral sign. Indeed
\begin{theorem}
Let $f,g \in \mathcal{A}_Q$ be such that $f \star_{\sigma,S} g$ and $g \star_{\sigma,S} f$ are integrable functions. Then there holds
\begin{equation*}
\iint (f \star_{\sigma,S} g)(x,p) \d{x}\d{p} = \iint (g \star_{\sigma,S} f)(x,p) \d{x}\d{p}.
\end{equation*}
\end{theorem}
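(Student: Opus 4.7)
The plan is to reduce the claim about $\star_{\sigma,S}$ to the analogous claim for $\star_\sigma$ (Theorem~\ref{thm:6.2.7}) by stripping off the automorphism $S$ using its vanishing-under-the-integral-sign property~(\ref{eq:6.5.2}), and then putting $S$ back on the other side.

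Concretely, starting from the definition $f \star_{\sigma,S} g = S(S^{-1}f \star_\sigma S^{-1}g)$, I would compute
\begin{equation*}
\iint (f \star_{\sigma,S} g)(x,p) \d{x}\d{p}
= \iint S\bigl(S^{-1}f \star_\sigma S^{-1}g\bigr)(x,p) \d{x}\d{p}
= \iint (S^{-1}f \star_\sigma S^{-1}g)(x,p) \d{x}\d{p},
\end{equation*}
where the second equality is~(\ref{eq:6.5.2}). Then Theorem~\ref{thm:6.2.7} (applied to $S^{-1}f$ and $S^{-1}g$) lets me swap the order of the $\star_\sigma$-product under the integral, and a second application of~(\ref{eq:6.5.2}) reinserts $S$, yielding the integral of $g \star_{\sigma,S} f$.

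The only genuine subtlety, and the point I would pay attention to, is verifying that the technical hypotheses required at each step are covered by the standing assumption that $f \star_{\sigma,S} g$ and $g \star_{\sigma,S} f$ are integrable. Specifically, to invoke Theorem~\ref{thm:6.2.7} I need $S^{-1}f \star_\sigma S^{-1}g$ and $S^{-1}g \star_\sigma S^{-1}f$ to be integrable. But by the very definition of $\star_{\sigma,S}$ these functions equal $S^{-1}(f \star_{\sigma,S} g)$ and $S^{-1}(g \star_{\sigma,S} f)$, and since $\bar S^{-1} = \overline{S^{-1}}$ is an automorphism that also vanishes under the integral sign, the integrability of $f \star_{\sigma,S} g$ and $g \star_{\sigma,S} f$ transfers to their $S^{-1}$-images in the relevant sense needed to apply~(\ref{eq:6.5.2}).

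Thus the proof is essentially a three-line chain: use~(\ref{eq:6.5.2}) to remove $S$, apply Theorem~\ref{thm:6.2.7}, use~(\ref{eq:6.5.2}) once more to restore $S$. The main obstacle is not computational but rather the bookkeeping of integrability hypotheses through the automorphism $S$; this is addressed by noting that~(\ref{eq:6.5.2}) is precisely the property ensuring that integrals are invariant under $S$ (and under $S^{-1}$), so all the integrals appearing in the chain are well defined whenever the two integrals in the statement of the theorem are.
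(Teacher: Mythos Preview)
Your proposal is correct and follows essentially the same approach as the paper: strip off $S$ via property~(\ref{eq:6.5.2}), invoke Theorem~\ref{thm:6.2.7} to commute the $\star_\sigma$-product under the integral, then reinsert $S$. You are in fact slightly more careful than the paper about the integrability hypotheses needed at each step, although your argument that integrability ``transfers'' through $S^{-1}$ is a bit loose (property~(\ref{eq:6.5.2}) assumes both $f$ and $Sf$ are integrable rather than implying one from the other); the paper simply does not address this point at all.
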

\begin{proof}
Using the property (\ref{eq:6.5.2}) and the fact that the $\star_\sigma$-product commutes under the integral sign one gets
\begin{align*}
\iint (f \star_{\sigma,S} g)(x,p) \d{x}\d{p} & = \iint SS^{-1}(f \star_{\sigma,S} g)(x,p) \d{x}\d{p}
= \iint S(S^{-1}f \star_\sigma S^{-1}g)(x,p) \d{x}\d{p} \\
& = \iint (S^{-1}f \star_\sigma S^{-1}g)(x,p) \d{x}\d{p} = \iint (S^{-1}g \star_\sigma S^{-1}f)(x,p) \d{x}\d{p} \\
& = \iint S(S^{-1}g \star_\sigma S^{-1}f)(x,p) \d{x}\d{p} = \iint SS^{-1}(g \star_{\sigma,S} f)(x,p) \d{x}\d{p} \\
& = \iint (g \star_{\sigma,S} f)(x,p) \d{x}\d{p}.
\end{align*}
\end{proof}

Another interesting property of the $\star_{\sigma,S}$-product can be derived. Namely
\begin{theorem}
For $f,g \in \mathcal{A}_Q$ there holds
\begin{equation*}
(f \star_{\sigma,S} g)^* = g^* \star_{\bar{\sigma},\bar{S}} f^*.
\end{equation*}
\end{theorem}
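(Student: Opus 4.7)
The plan is to unfold the definition of $\star_{\sigma,S}$ and then propagate the complex conjugation inward, using two ingredients already in hand: the identity $(f \star_\sigma g)^* = g^* \star_{\bar\sigma} f^*$ proved for the $\sigma$-family, and the identity $\overline{S^{-1}} = \bar{S}^{-1}$ from the preceding theorem. No convergence or integral arguments are needed; the statement is a purely algebraic consequence of the definitions.

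First I would rewrite the defining equation of $\bar{S}$ in the equivalent form $(Sh)^* = \bar{S}(h^*)$ valid for every $h \in \mathcal{A}_Q$. Replacing $h$ by $S^{-1}k$ gives $k^* = \bar{S}((S^{-1}k)^*)$, so after applying $\bar{S}^{-1}$ and using $\overline{S^{-1}} = \bar{S}^{-1}$ one reads off the companion identity $(S^{-1}k)^* = \bar{S}^{-1}(k^*)$. These two identities are the only pieces of machinery I need.

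Next I would compute, for $f,g \in \mathcal{A}_Q$,
\begin{align*}
(f \star_{\sigma,S} g)^* &= \bigl( S(S^{-1}f \star_\sigma S^{-1}g) \bigr)^* \\
&= \bar{S}\bigl( (S^{-1}f \star_\sigma S^{-1}g)^* \bigr) \\
&= \bar{S}\bigl( (S^{-1}g)^* \star_{\bar\sigma} (S^{-1}f)^* \bigr) \\
&= \bar{S}\bigl( \bar{S}^{-1}(g^*) \star_{\bar\sigma} \bar{S}^{-1}(f^*) \bigr) \\
&= g^* \star_{\bar\sigma,\bar{S}} f^*,
\end{align*}
where the first equality is the definition of $\star_{\sigma,S}$, the second uses $(Sh)^* = \bar{S}(h^*)$, the third is the known conjugation rule for $\star_\sigma$, the fourth applies $(S^{-1}k)^* = \bar{S}^{-1}(k^*)$ to each factor, and the last line is just the definition of $\star_{\bar\sigma,\bar{S}}$ applied to $g^*$ and $f^*$ with the automorphism $\bar{S}$ in place of $S$.

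There is no real obstacle here: the argument is a short chain of substitutions, and the mildest subtlety is simply keeping track of the fact that conjugating the wrapping automorphism $S$ forces the bar on the parameter $\sigma$ as well (which is already visible in the $\star_\sigma$ version of the identity). Thus the work reduces to stating the two conjugation identities cleanly before the main display.
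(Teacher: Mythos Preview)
Your proof is correct and follows essentially the same route as the paper's own argument: unfold the definition of $\star_{\sigma,S}$, push the conjugation through $S$ via $(Sh)^* = \bar{S}(h^*)$, apply the known rule $(f \star_\sigma g)^* = g^* \star_{\bar\sigma} f^*$, and then use $(S^{-1}k)^* = \bar{S}^{-1}(k^*)$ to recognize the definition of $\star_{\bar\sigma,\bar S}$. The only cosmetic difference is that the paper inserts a redundant $SS^{-1}$ at the start, whereas you invoke the defining formula of $\star_{\sigma,S}$ directly.
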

\begin{proof}
\begin{align*}
(f \star_{\sigma,S} g)^* & = (SS^{-1}(f \star_{\sigma,S} g))^* = (S(S^{-1}f \star_\sigma S^{-1}g))^* = \bar{S}((S^{-1}g)^* \star_{\bar{\sigma}} (S^{-1}f)^*)
= \bar{S}(\bar{S}^{-1} g^* \star_{\bar{\sigma}} \bar{S}^{-1} f^*) \\
& = g^* \star_{\bar{\sigma},\bar{S}} f^*.
\end{align*}
\end{proof}

By means of the automorphism $S$ the involution of the algebra $(\mathcal{A}_Q, \star_{\sigma,S})$ can be defined by the formula
\begin{equation*}
A^\dagger := S(S^{-1}A)^\dagger = S S_{\sigma - \bar{\sigma}} \bar{S}^{-1} A^*.
\end{equation*}

In what follows the space of states of the quantum Hamiltonian system associated to the $\star_{\sigma,S}$-product will be defined. First note that the
automorphism $S \colon \mathcal{A}_Q \to \mathcal{A}_Q$ can be restricted to the Schwartz space $\mathcal{S}(M)$, i.e. one receives the following map:
$S \colon \mathcal{S}(M) \to C^\infty(M)$. To simplify considerations, in the rest of the paper it will be assumed that $S$ has values in the Hilbert
space $L^2(M,\mu)$, where $\mu$ is some measure on $M$. Moreover, it will be assumed that $S$ is a continuous mapping with respect to $L^2$-topology.
Now, since $\mathcal{S}(M)$ is dense in $L^2(M)$ the map $S \colon \mathcal{S}(M) \to L^2(M,\mu)$ can be uniquely extended to the continuous map $S$
defined on the whole Hilbert space $L^2(M)$ (being the space of states associated to the $\star_\sigma$-product) and taking values onto the Hilbert space
$\mathcal{H} = L^2(M,\mu)$ with a scalar product given by
\begin{equation*}
\braket{\Phi|\Psi}_\mathcal{H} = \braket{S^{-1} \Phi | S^{-1} \Psi}_{L^2}, \quad \Phi,\Psi \in \mathcal{H}.
\end{equation*}
Note that it is not assumed that the scalar product $\braket{\,\cdot\,|\,\cdot\,}_\mathcal{H}$ is equal to the standard scalar product of the Hilbert
space $L^2(M,\mu)$. The Hilbert space $\mathcal{H}$ is the space of states of the quantum Hamiltonian system associated to the $\star_{\sigma,S}$-product.
Note that the isomorphism $S$ induces the $\star_{\sigma,S}$-product on $\mathcal{H}$ from the $\star_\sigma$-product on $L^2(M)$ by the formula
\begin{equation*}
\Phi \star_{\sigma,S} \Psi := S(S^{-1}\Phi \star_\sigma S^{-1}\Psi), \quad \Phi,\Psi \in \mathcal{H},
\end{equation*}
which satisfies the relation
\begin{equation}
\| \Phi \star_{\sigma,S} \Psi \|_\mathcal{H} \le \frac{1}{\sqrt{2\pi\hbar}} \| \Phi \|_\mathcal{H} \| \Psi \|_\mathcal{H}, \quad \Phi,\Psi \in \mathcal{H}.
\label{eq:6.5.7}
\end{equation}
From above relation it is evident that the $\star_{\sigma,S}$-product is a continuous mapping $\mathcal{H} \times \mathcal{H} \to \mathcal{H}$ and that
$\mathcal{H}$ has a structure of a Hilbert algebra.

From the assumption that $S \colon \mathcal{A}_Q \to \mathcal{A}_Q$ vanishes under the integral sign, the induced Hilbert space isomorphism
$S \colon L^2(M) \to L^2(M,\mu)$ also vanishes under the integral sign. Indeed, it is clear that $S \colon L^2(M) \to L^2(M,\mu)$ vanishes under the
integral sign for all Schwartz functions $\Psi$, provided that $S \Psi$ is integrable. From this observation the general statement follows easily,
as for $\Psi \in L^2(M)$ integrable such that $S \Psi$ is also integrable and noting that $\Psi$ can be written as a limit of Schwartz functions $\Psi_n$
one obtains
\begin{align*}
\iint (S \Psi)(x,p) \d{x}\d{p} & = \iint S \left(\lim_{n \to \infty} \Psi_n \right)(x,p) \d{x}\d{p} = \iint \lim_{n \to \infty} S(\Psi_n)(x,p) \d{x}\d{p}
= \lim_{n \to \infty} \iint S(\Psi_n)(x,p) \d{x}\d{p} \\
& = \lim_{n \to \infty} \iint \Psi_n(x,p) \d{x}\d{p} = \iint \lim_{n \to \infty} \Psi_n(x,p) \d{x}\d{p} = \iint \Psi(x,p) \d{x}\d{p}.
\end{align*}

Similarly as in the case of the $\star_\sigma$-product lets define the $\star_{\sigma,S}$-product of an observable $A \in \mathcal{A}_Q$ and a state
$\Psi \in \mathcal{H}$. The $\star_{\sigma,S}$-product of $A$ and $\Psi$ will be defined by
\begin{equation*}
A \star_{\sigma,S} \Psi := S(S^{-1}A \star_\sigma S^{-1}\Psi), \qquad
\Psi \star_{\sigma,S} A := S(S^{-1}\Psi \star_\sigma S^{-1}A).
\end{equation*}

A description of the left and right action of some observable $A \in \mathcal{A}_Q$ through the $\star_{\sigma,S}$-product requires the generalisation
of the $\sigma$-ordering. Thus, for $A \in \mathcal{A}_Q$ and some operators $\hat{q}$, $\hat{p}$ defined on some Hilbert space $\mathcal{H}$ and
satisfying the canonical commutation relation
\begin{equation}
[\hat{q}, \hat{p}] = i\hbar
\label{eq:6.5.3}
\end{equation}
lets introduce a $(\sigma,S)$-ordered operator function by the formula
\begin{equation*}
A_{\sigma,S}(\hat{q},\hat{p}) := (S^{-1}A)_\sigma(\hat{q},\hat{p}).
\end{equation*}
Similarly, for operators $\underline{\hat{q}}$, $\underline{\hat{p}}$ satisfying the following commutation relation
\begin{equation}
[\underline{\hat{q}}, \underline{\hat{p}}] = -i\hbar
\label{eq:6.5.4}
\end{equation}
the $(\sigma,S)$-ordered operator function takes the form
\begin{equation*}
A_{\sigma,S}(\underline{\hat{q}},\underline{\hat{p}}) := (S^{-1}A)_\sigma(\underline{\hat{q}},\underline{\hat{p}}).
\end{equation*}
The Hermitian conjugation of a $(\sigma,S)$-ordered operator function reads
\begin{equation*}
A^\dagger_{\sigma,S}(\hat{q},\hat{p}) = A^*_{\bar{\sigma},\bar{S}}(\hat{q},\hat{p}).
\end{equation*}
Indeed
\begin{equation*}
A^\dagger_{\sigma,S}(\hat{q},\hat{p}) = (S^{-1}A)^\dagger_\sigma(\hat{q},\hat{p}) = (S^{-1}A)^*_{\bar{\sigma}}(\hat{q},\hat{p})
= (\bar{S}^{-1} A^*)_{\bar{\sigma}}(\hat{q},\hat{p}) = A^*_{\bar{\sigma},\bar{S}}(\hat{q},\hat{p}).
\end{equation*}

\begin{remark}
For a special case of the automorphism $S$ such that
\begin{equation}
S^{-1} = F(-i\hbar \partial_x, i\hbar \partial_p),
\label{eq:6.5.6}
\end{equation}
where $F \colon \mathbb{R}^2 \to \mathbb{C}$ is some general analytic function, the $(\sigma,S)$-ordered operator function reads
\begin{equation*}
A_{\sigma,S}(\hat{q},\hat{p}) = \frac{1}{2\pi\hbar} \iint \mathcal{F}A(\xi,\eta) e^{\frac{i}{\hbar}\xi \hat{q}}
e^{-\frac{i}{\hbar} \eta \hat{p}} e^{-\frac{i}{\hbar} \sigma \xi \eta} F(\xi,\eta) \d{\xi} \d{\eta}.
\end{equation*}
The above formula for the operator function was first considered by Cohen \cite{Cohen:1966}.
\end{remark}

\begin{example}
An interesting special case of two parameter family of automorphisms (\ref{eq:6.5.6}) is received for $F(\xi,\eta) = e^{\frac{1}{2\hbar} \alpha \xi^2
+ \frac{1}{2\hbar} \beta \eta^2}$, where $\alpha,\beta$ are some real numbers. In this case the automorphisms $S$ take the form
\begin{equation*}
S_{\alpha,\beta} = \exp \left( \tfrac{1}{2} \hbar \alpha \partial_x^2 + \tfrac{1}{2} \hbar \beta \partial_p^2 \right).
\end{equation*}
Moreover, the $\star_{\sigma,S}$-product (denoted hereafter by $\star_{\sigma,\alpha,\beta}$) reads
\begin{equation*}
f \star_{\sigma,\alpha,\beta} g = f \exp \left( i\hbar \sigma \overleftarrow{\partial}_x \overrightarrow{\partial}_p
- i\hbar \bar{\sigma} \overleftarrow{\partial}_p \overrightarrow{\partial}_x
+ \hbar \alpha \overleftarrow{\partial}_x \overrightarrow{\partial}_x
+ \hbar \beta \overleftarrow{\partial}_p \overrightarrow{\partial}_p \right) g.
\end{equation*}
The $(\sigma,\alpha,\beta)$-family of quantizations will serve as an illustrative example throughout the paper.
The particular case of the $\star_{\sigma,\alpha,\beta}$-product is widely used in the \emph{holomorphic coordinates}
$a(x,p) = (\omega x + ip)/\sqrt{2\omega}$, $\bar{a}(x,p) = (\omega x - ip)/\sqrt{2\omega}$ ($\omega > 0$). Namely, for $\sigma = \frac{1}{2}$,
$\alpha = \frac{2\lambda - 1}{2\omega}$, $\beta = \omega^2 \alpha$ where $\lambda \in \mathbb{R}$ the $\star_{\sigma,\alpha,\beta}$-product
(denoted hereafter by $\star_\lambda$) written in the holomorphic coordinates takes the form \cite{Hirshfeld:2002a}
\begin{equation*}
f \star_\lambda g = f \exp \left( \hbar \lambda \overleftarrow{\partial}_a \overrightarrow{\partial}_{\bar{a}}
- \hbar \bar{\lambda} \overleftarrow{\partial}_{\bar{a}} \overrightarrow{\partial}_a \right) g.
\end{equation*}
\end{example}

\begin{example}
In general the automorphism $S$ is not of the form (\ref{eq:6.5.6}). As an example the following three parameter family of automorphisms may serve
\begin{equation*}
S = \exp \left( -i\hbar a \partial_x \partial_p + i\hbar b x \partial_p^2 - \hbar^2 c \partial_p^3 \right),
\end{equation*}
where $a,b,c \in \mathbb{R}$. This shows that the family of quantizations considered in the paper is more general than the broad family of quantizations
considered by Cohen and others.
As an illustration lets consider the following observable $A(x,p) = \frac{1}{2}p^2 + \frac{1}{6}p^3 + \frac{1}{2}x^2$. Then, one easily finds that
\begin{equation*}
(S^{-1}A)(x,p) = \tfrac{1}{2}p^2 + \tfrac{1}{6}p^3 + \tfrac{1}{2}x^2 - i\hbar bx(1 + p) + \hbar^2(\tfrac{1}{2}ab + c)
\end{equation*}
and that
\begin{align*}
A_{\sigma,S}(\hat{q},\hat{p}) & = \tfrac{1}{2}\hat{p}^2 + \tfrac{1}{6}\hat{p}^3 + \tfrac{1}{2}\hat{q}^2
- i\hbar b(\hat{q} + \bar{\sigma}\hat{q}\hat{p} + \sigma \hat{p}\hat{q}) + \hbar^2(\tfrac{1}{2}ab + c) \\
& = \tfrac{1}{2}\hat{p}^2 + \tfrac{1}{6}\hat{p}^3 + \tfrac{1}{2}\hat{q}^2 - b\hat{q}\hat{p}\hat{q} + b\hat{p}\hat{q}^2
- (\tfrac{1}{2}ab + b\bar{\sigma} + c)\hat{q}\hat{p}\hat{q}\hat{p} + (\tfrac{1}{2}ab - b\sigma + c)\hat{q}\hat{p}^2\hat{q} \\
& \quad {} + (\tfrac{1}{2}ab + b\bar{\sigma} + c)\hat{p}\hat{q}^2\hat{p} - (\tfrac{1}{2}ab - b\sigma + c)\hat{p}\hat{q}\hat{p}\hat{q}.
\end{align*}
\end{example}

Lets define the following operators
\begin{align*}
\hat{q}_{\sigma,S} & := x \star_{\sigma,S} {}, & \hat{p}_{\sigma,S} & := p \star_{\sigma,S} {}, \\
\hat{q}^*_{\bar{\sigma},S} & := {} \star_{\sigma,S} x, & \hat{p}^*_{\bar{\sigma},S} & := {} \star_{\sigma,S} p.
\end{align*}
There holds
\begin{theorem}
\begin{align*}
\hat{q}_{\sigma,S} & = S \hat{q}_\sigma S^{-1}, & \hat{p}_{\sigma,S} & = S \hat{p}_\sigma S^{-1}, \\
\hat{q}^*_{\bar{\sigma},S} & = S \hat{q}^*_{\bar{\sigma}} S^{-1}, & \hat{p}^*_{\bar{\sigma},S} & = S \hat{p}^*_{\bar{\sigma}} S^{-1},
\end{align*}
where $\hat{q}_\sigma = x \star_\sigma {}$, $\hat{p}_\sigma = p \star_\sigma {}$, $\hat{q}^*_{\bar{\sigma}} = {} \star_\sigma x$,
$\hat{p}^*_{\bar{\sigma}} = {} \star_\sigma p$.
\end{theorem}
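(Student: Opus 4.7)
The plan is to unwind the definitions directly. By construction, $\star_{\sigma,S}$ is the conjugation of $\star_\sigma$ by $S$: for any $f,g$,
\begin{equation*}
f \star_{\sigma,S} g = S(S^{-1} f \star_\sigma S^{-1} g).
\end{equation*}
Applied with $f = x$ and $g = \Psi \in \mathcal{H}$, this gives
\begin{equation*}
\hat{q}_{\sigma,S} \Psi = x \star_{\sigma,S} \Psi = S(S^{-1} x \star_\sigma S^{-1} \Psi).
\end{equation*}
So first I would invoke the hypothesis (\ref{eq:6.5.1}) that $S$ fixes the coordinate functions, $Sx = x$ and $Sp = p$, which implies $S^{-1}x = x$ and $S^{-1}p = p$. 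Substituting into the display above yields $\hat{q}_{\sigma,S} \Psi = S(x \star_\sigma S^{-1}\Psi)$, and by the definition $\hat{q}_\sigma = x \star_\sigma {}$ the right-hand side is precisely $S \hat{q}_\sigma S^{-1} \Psi$, proving the first identity.

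The remaining three identities follow by the identical argument, replacing $x$ by $p$ (for $\hat{p}_{\sigma,S}$) and/or putting the coordinate function on the right of the $\star_{\sigma,S}$-product (for the starred operators). For instance, $\hat{q}^*_{\bar{\sigma},S} \Psi = \Psi \star_{\sigma,S} x = S(S^{-1}\Psi \star_\sigma S^{-1} x) = S(S^{-1}\Psi \star_\sigma x) = S \hat{q}^*_{\bar{\sigma}} S^{-1} \Psi$, and the two momentum cases are analogous.

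There is no real obstacle here: everything reduces to the combination of the conjugation formula defining $\star_{\sigma,S}$ with the two distinguishing properties (\ref{eq:6.5.1}) of the automorphism $S$. The only point that warrants a brief mention is the domain of the identities: the operators $\hat{q}_\sigma, \hat{p}_\sigma, \hat{q}^*_{\bar{\sigma}}, \hat{p}^*_{\bar{\sigma}}$ act on $L^2(M)$ while $\hat{q}_{\sigma,S}, \hat{p}_{\sigma,S}, \hat{q}^*_{\bar{\sigma},S}, \hat{p}^*_{\bar{\sigma},S}$ act on $\mathcal{H}$, and the equalities must be read as operator identities intertwined by the Hilbert space isomorphism $S \colon L^2(M) \to \mathcal{H}$, which is exactly how the expressions $S\,(\,\cdot\,)\,S^{-1}$ should be interpreted. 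Once that is noted, the proof is a one-line calculation repeated four times.
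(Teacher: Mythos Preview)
Your proof is correct and follows essentially the same approach as the paper: unwind the definition $f \star_{\sigma,S} g = S(S^{-1}f \star_\sigma S^{-1}g)$, invoke $S^{-1}x = x$ (respectively $S^{-1}p = p$) from (\ref{eq:6.5.1}), and recognize the result as $S\hat{q}_\sigma S^{-1}\Psi$. The paper's version inserts an intermediate $SS^{-1}$ before applying the definition, but the content is identical; your remark on domains is a small clarifying addition the paper omits.
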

\begin{proof}
Let $\Psi \in \mathcal{H}$. Using relations (\ref{eq:6.5.1}) one finds that
\begin{equation*}
\hat{q}_{\sigma,S} \Psi = x \star_{\sigma,S} \Psi = SS^{-1}(x \star_{\sigma,S} \Psi) = S(S^{-1}x \star_\sigma S^{-1}\Psi)
= S(x \star_\sigma S^{-1}\Psi) = S \hat{q}_\sigma S^{-1} \Psi.
\end{equation*}
Similarly one proves other equalities.
\end{proof}

Of course, the operators $\hat{q}_{\sigma,S}$ and $\hat{p}_{\sigma,S}$ satisfy the commutation relation (\ref{eq:6.5.3}), whereas the operators
$\hat{q}^*_{\bar{\sigma},S}$ and $\hat{p}^*_{\bar{\sigma},S}$ satisfy the commutation relation (\ref{eq:6.5.4}).

There holds
\begin{theorem}
\label{thm:6.5.1}
For any function $A \in \mathcal{A}_Q$
\begin{equation*}
A_L \star_{\sigma,S} {} = A_{\sigma,S}(\hat{q}_{\sigma,S},\hat{p}_{\sigma,S}), \qquad
A_R \star_{\sigma,S} {} = A_{\sigma,S}(\hat{q}^*_{\bar{\sigma},S},\hat{p}^*_{\bar{\sigma},S}).
\end{equation*}
\end{theorem}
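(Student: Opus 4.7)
The plan is to reduce Theorem \ref{thm:6.5.1} to Theorem \ref{thm:6.4.1} by conjugation with the automorphism $S$. The definition of $\star_{\sigma,S}$ immediately gives, for $\Psi \in \mathcal{H}$,
\begin{equation*}
A \star_{\sigma,S} \Psi = S(S^{-1}A \star_\sigma S^{-1}\Psi),
\end{equation*}
so as operators on $\mathcal{H}$ one has the factorization $A_L \star_{\sigma,S}{} = S \circ \bigl((S^{-1}A)_L \star_\sigma {}\bigr) \circ S^{-1}$. By Theorem \ref{thm:6.4.1} applied to the function $S^{-1}A \in \mathcal{A}_Q$, this becomes $A_L \star_{\sigma,S}{} = S\,(S^{-1}A)_\sigma(\hat{q}_\sigma,\hat{p}_\sigma)\,S^{-1}$, and by the definition of a $(\sigma,S)$-ordered operator function the right-hand side equals $S\,A_{\sigma,S}(\hat{q}_\sigma,\hat{p}_\sigma)\,S^{-1}$. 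So the remaining task is the identity
\begin{equation*}
S\,A_{\sigma,S}(\hat{q}_\sigma,\hat{p}_\sigma)\,S^{-1} \;=\; A_{\sigma,S}(\hat{q}_{\sigma,S},\hat{p}_{\sigma,S}).
\end{equation*}

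To establish this I would use the integral representation (\ref{eq:6.4.15}) of the operator function, which for the function $S^{-1}A$ reads
\begin{equation*}
(S^{-1}A)_\sigma(\hat{q}_\sigma,\hat{p}_\sigma) = \frac{1}{2\pi\hbar}\iint \mathcal{F}(S^{-1}A)(\xi,\eta)\,
e^{\frac{i}{\hbar}\xi \hat{q}_\sigma} e^{-\frac{i}{\hbar}\eta \hat{p}_\sigma} e^{-\frac{i}{\hbar}\sigma\xi\eta}\,\d{\xi}\d{\eta}.
\end{equation*}
Since $\hat{q}_{\sigma,S} = S\hat{q}_\sigma S^{-1}$ and $\hat{p}_{\sigma,S} = S\hat{p}_\sigma S^{-1}$ (as already shown), and since conjugation by $S$ commutes with analytic functional calculus, one has
\begin{equation*}
e^{\frac{i}{\hbar}\xi \hat{q}_{\sigma,S}} e^{-\frac{i}{\hbar}\eta \hat{p}_{\sigma,S}}
= S\, e^{\frac{i}{\hbar}\xi \hat{q}_\sigma} e^{-\frac{i}{\hbar}\eta \hat{p}_\sigma}\, S^{-1}.
\end{equation*}
Inserting this into the integral representation and pulling $S$ and $S^{-1}$ outside the integral gives exactly $S\,(S^{-1}A)_\sigma(\hat{q}_\sigma,\hat{p}_\sigma)\,S^{-1}$ on one side and $(S^{-1}A)_\sigma(\hat{q}_{\sigma,S},\hat{p}_{\sigma,S}) = A_{\sigma,S}(\hat{q}_{\sigma,S},\hat{p}_{\sigma,S})$ on the other, completing the proof of the first equality.

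The second equality is proved in exactly the same way, using the right-action half of Theorem \ref{thm:6.4.1} and the analogous intertwining relations $\hat{q}^*_{\bar{\sigma},S} = S\hat{q}^*_{\bar{\sigma}}S^{-1}$, $\hat{p}^*_{\bar{\sigma},S} = S\hat{p}^*_{\bar{\sigma}}S^{-1}$, together with the integral form (\ref{eq:6.4.16}) appropriate to operators satisfying $[\underline{\hat{q}},\underline{\hat{p}}] = -i\hbar$. The only delicate point is the interchange of the integral with the (possibly unbounded) conjugation by $S$; this is where the continuity assumption on $S$ in the $L^2$-topology and the standing convention from the earlier Remark (that all limits encountered in the paper converge in the chosen topology) are needed. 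Modulo this analytic issue, the proof is purely algebraic and reduces entirely to the $\sigma$-case already handled in Theorem \ref{thm:6.4.1}.
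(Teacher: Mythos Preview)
Your proof is correct and follows essentially the same approach as the paper's own proof: both reduce to Theorem~\ref{thm:6.4.1} via the factorization $A_L \star_{\sigma,S}{} = S\,(S^{-1}A)_\sigma(\hat{q}_\sigma,\hat{p}_\sigma)\,S^{-1}$, then use the integral representation~(\ref{eq:6.4.15}) and the intertwining relations $\hat{q}_{\sigma,S} = S\hat{q}_\sigma S^{-1}$, $\hat{p}_{\sigma,S} = S\hat{p}_\sigma S^{-1}$ to pass $S$ through the exponentials. Your explicit acknowledgement of the analytic point about interchanging the integral with conjugation by $S$ is a nice touch that the paper leaves implicit under its standing convergence convention.
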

\begin{proof}
For $\Psi \in \mathcal{H}$ there holds
\begin{align*}
A_L \star_{\sigma,S} \Psi & = SS^{-1}(A \star_{\sigma,S} \Psi) = S(S^{-1}A \star_\sigma S^{-1}\Psi)
= S (S^{-1}A)_\sigma(\hat{q}_\sigma,\hat{p}_\sigma) S^{-1} \Psi \\
& = \frac{1}{2\pi\hbar} \iint \mathcal{F}(S^{-1}A)(\xi,\eta) S e^{\frac{i}{\hbar} \xi \hat{q}_\sigma} S^{-1}S e^{-\frac{i}{\hbar} \eta \hat{p}_\sigma}
S^{-1} e^{-\frac{i}{\hbar} \xi \eta} \d{\xi}\d{\eta} \\
& = \frac{1}{2\pi\hbar} \iint \mathcal{F}(S^{-1}A)(\xi,\eta) e^{\frac{i}{\hbar} \xi S \hat{q}_\sigma S^{-1}} e^{-\frac{i}{\hbar} \eta S \hat{p}_\sigma
S^{-1}} e^{-\frac{i}{\hbar} \xi \eta} \d{\xi}\d{\eta} \\
& = (S^{-1}A)_\sigma(\hat{q}_{\sigma,S},\hat{p}_{\sigma,S}) \Psi
= A_{\sigma,S}(\hat{q}_{\sigma,S},\hat{p}_{\sigma,S}) \Psi.
\end{align*}
Similarly one proves the second equality.
\end{proof}

%%%%%%%%%%%%%%%%%%%%%%%%%%%%%%%%%%%%%%%%%%%%%%%%%%%%%%%%%%%%%%%%%%%%%%%%%%%%%%%%%%%%%%%%%%%%%%%%%%%%%%%%%%%%%%%%%%%%%%%%%%%%%%%%%%%%%%%%%%%%%%%%%%%%%%%%%%
\subsection{Pure states, mixed states and expectation values of observables in the general quantization scheme}
\label{subsec:6.6}
As was presented earlier all admissible states of the quantum Hamiltonian system are contained in the Hilbert space $\mathcal{H}$. It is necessary to
determine which functions from $\mathcal{H}$ can be considered as \emph{pure states} and \emph{mixed states}. \emph{Pure states} will be defined as
functions $\Psi_{\textrm{pure}} \in \mathcal{H}$ which satisfy the following conditions
\begin{enumerate}
\item $\Psi_{\textrm{pure}} \star_{\sigma,S} {} = (\Psi_{\textrm{pure}} \star_{\sigma,S} {})^\dagger$ (hermiticity),
\item $\Psi_{\textrm{pure}} \star_{\sigma,S} \Psi_{\textrm{pure}} = \dfrac{1}{\sqrt{2\pi\hbar}} \Psi_{\textrm{pure}}$ (idempotence),
\item $\| \Psi_{\textrm{pure}} \|_{\mathcal{H}} = 1$ (normalization).
\end{enumerate}
\emph{Mixed states} $\Psi_{\textrm{mix}} \in \mathcal{H}$ will be defined in a standard way, as linear combinations, possibly infinite, of some families
of pure states $\Psi_{\textrm{pure}}^{(\lambda)}$
\begin{equation*}
\Psi_{\textrm{mix}} := \sum_{\lambda} p_{\lambda} \Psi_{\textrm{pure}}^{(\lambda)},
\end{equation*}
where $0 \le p_{\lambda} \le 1$ and $\sum_{\lambda} p_{\lambda} = 1$. Such definition of mixed states reflects the lack of knowledge about the state of
the system, where $p_{\lambda}$ is the probability of finding the system in a state $\Psi_{\textrm{pure}}^{(\lambda)}$.

For an admissible quantum state $\Psi \in \mathcal{H}$ lets define a \emph{quantum distribution function} $\rho$ on the phase space by the equation
\begin{equation*}
\rho := \frac{1}{\sqrt{2\pi\hbar}} \Psi.
\end{equation*}
Note that from Theorem \ref{thm:9.3} it follows that the function $\rho$ is a quasi-probabilistic distribution function, i.e.
\begin{equation*}
\iint \rho(x,p) \d{x} \d{p} = 1.
\end{equation*}
The quantum distribution functions $\rho$ are the analogue of the classical distribution functions representing states of the classical
Hamiltonian system. The difference between classical and quantum distribution functions is that the latter do not have to be non-negative everywhere.
Thus, $\rho(x,p)$ cannot be interpreted as a probability density of finding a particle in a point $(x,p)$ of the phase space. This is a
reflection of the fact that $x$ and $p$ coordinates do not commute with respect to the $\star_{\sigma,S}$-multiplication, which yield, from
the Heisenberg uncertainty principle, that it is impossible to measure simultaneously the position and momentum of a particle. Hence, the point position
of a particle in the phase space does not make sense anymore. On the other hand, from Theorem \ref{thm:9.4} it follows that \emph{marginal distributions}
\begin{equation*}
P(x) := \int (S^{-1} \rho)(x,p) \d{p}, \qquad
P(p) := \int (S^{-1} \rho)(x,p) \d{x},
\end{equation*}
are probabilistic distribution functions and can be interpreted as probability densities that a particle in the phase space have position $x$ or momentum
$p$. The result is not surprising as each marginal distribution depends on commuting coordinates only. Note however, that only in the case of the
$\star_\sigma$-product the marginal distributions are received by simple integration of a distribution function with respect to $x$ or $p$ variable.
In general the distribution function first has to be transformed with the isomorphism $S$.

The expectation value of an observable $\hat{A} \in \hat{\mathcal{A}}_Q$ in an admissible state $\Psi \in \mathcal{H}$ can be defined like in its
classical analogue (\ref{eq:7.2.16}), i.e. as a mean value with respect to a quantum distribution function $\rho = \frac{1}{\sqrt{2\pi\hbar}} \Psi$
\begin{equation*}
\braket{\hat{A}}_{\Psi} = \iint (\hat{A} \rho)(x,p) \d{x} \d{p} = \iint (A \star_{\sigma,S} \rho)(x,p) \d{x} \d{p}.
\end{equation*}

%%%%%%%%%%%%%%%%%%%%%%%%%%%%%%%%%%%%%%%%%%%%%%%%%%%%%%%%%%%%%%%%%%%%%%%%%%%%%%%%%%%%%%%%%%%%%%%%%%%%%%%%%%%%%%%%%%%%%%%%%%%%%%%%%%%%%%%%%%%%%%%%%%%%%%%%%%
\subsection{Time evolution of quantum Hamiltonian systems in the general quantization scheme}
\label{subsec:6.7}
In this section the time evolution of a quantum Hamiltonian system will be presented. Analogically as in classical mechanics, the time evolution of a
quantum Hamiltonian system is governed by a Hamiltonian $\hat{H}$. It will be assumed that $\hat{H} \in \hat{\mathcal{O}}_Q$, i.e. $H = H^\dagger$.
The time evolution of a quantum distribution function $\rho$ is defined like in its classical counterpart (\ref{eq:7.3.18})
\begin{gather}
L(H,\rho) := \frac{\partial \rho}{\partial t} - \lshad H,\rho \rshad = 0 \quad
\Leftrightarrow \quad
i\hbar \frac{\partial \rho}{\partial t} - [H,\rho] = 0.
\label{eq:6.7.1}
\end{gather}
States $\Psi \in \mathcal{H}$ which do not change during the time development, i.e. such that $\frac{\partial \Psi}{\partial t} = 0$ are called
\emph{stationary states}. From the time evolution equation (\ref{eq:6.7.1}) it follows that stationary states $\Psi$ satisfy
\begin{equation*}
[H,\Psi] = 0.
\end{equation*}
If a stationary state $\Psi$ is a pure state then, from Theorem \ref{thm:9.6}, it follows that the above equation is equivalent to a pair of
$\star_{\sigma,S}$-genvalue equations
\begin{equation*}
H \star_{\sigma,S} \Psi = E \Psi, \quad \Psi \star_{\sigma,S} H = E \Psi,
\end{equation*}
for some $E \in \mathbb{R}$. Note that $E$ in the above equations is the expectation value of the Hamiltonian $\hat{H}$ in a stationary state $\Psi$,
hence it is an energy of the system in the state $\Psi$.

The formal solution of (\ref{eq:6.7.1}) takes the form
\begin{equation*}
\rho(t) = U(t) \star_{\sigma,S} \rho(0) \star_{\sigma,S} U(-t),
\end{equation*}
where
\begin{equation}
U(t) = e_{\star_{\sigma,S}}^{-\frac{i}{\hbar} t H} := \sum_{k=0}^{\infty} \frac{1}{k!} \left( -\frac{i}{\hbar} t \right)^k
\underbrace{H \star_{\sigma,S} \ldots \star_{\sigma,S} H}_k
\label{eq:6.7.4}
\end{equation}
is an unitary function in $\mathcal{H}$ as $\hat{H}$ is self-adjoint. Hence, the time evolution of states can be alternatively expressed in terms of the
one parameter group of unitary functions $U(t)$.

From (\ref{eq:6.7.1}) it follows that a time dependent expectation value of an observable $\hat{A} \in \hat{\mathcal{A}}_Q$ in a state $\rho(t)$, i.e.
$\braket{\hat{A}}_{\rho(t)}$, fulfills the following equation of motion
\begin{equation}
\braket{\hat{A}}_{L(H,\rho)} = 0 \iff \frac{\d{}}{\d{t}} \braket{\hat{A}}_{\rho(t)} - \braket{ \lshad \hat{A},\hat{H} \rshad }_{\rho(t)} = 0.
\label{eq:6.7.5}
\end{equation}
Indeed
\begin{align*}
& \iint \d{x}\d{p} A \star_{\sigma,S} \frac{\partial \rho}{\partial t}(t) = \frac{\d{}}{\d{t}} \iint \d{x}\d{p} A \star_{\sigma,S}
\rho(t) = \frac{\d{}}{\d{t}} \braket{\hat{A}}_{\rho(t)}, \displaybreak[0] \\
& \iint \d{x}\d{p} A \star_{\sigma,S} \frac{1}{i\hbar} (H \star_{\sigma,S} \rho(t) - \rho(t) \star_{\sigma,S} H)
 = \iint \d{x}\d{p} \frac{1}{i\hbar} (A \star_{\sigma,S} H - H \star_{\sigma,S} A) \star_{\sigma,S} \rho(t)
 = \braket{ \lshad \hat{A},\hat{H} \rshad }_{\rho(t)}.
\end{align*}
Equation (\ref{eq:6.7.5}) is the quantum analogue of the classical equation (\ref{eq:7.3.20}).

Until now the time evolution in the \emph{Schr\"odinger picture} were considered, i.e. only states undergo a time development. It is also possible to
consider a dual approach to the time evolution, namely the \emph{Heisenberg picture}. In this picture states remain still whereas the observables undergo
a time development. The time development of an observable $\hat{A} \in \hat{\mathcal{A}}_Q$ is given by the action of the unitary function $U(t)$ from
(\ref{eq:6.7.4}) on $\hat{A}$
\begin{equation}
\hat{A}(t) = U(-t) \star_{\sigma,S} A(0) \star_{\sigma,S} U(t) \star_{\sigma,S} {} = \hat{U}(-t) \hat{A}(0) \hat{U}(t).
\label{eq:6.7.6}
\end{equation}
Differentiating equation (\ref{eq:6.7.6}) with respect to $t$ results in such evolution equation for $\hat{A}$
\begin{equation}
\frac{\d{\hat{A}}}{\d{t}}(t) - \lshad \hat{A}(t),\hat{H} \rshad = 0.
\label{eq:6.7.7}
\end{equation}
Equation (\ref{eq:6.7.7}) is the quantum analogue of the classical equation (\ref{eq:7.3.22}).

Both presented approaches to the time development yield equal predictions concerning the results of measurements, since
\begin{align*}
\braket{\hat{A}(0)}_{\rho(t)} & = \iint \d{x}\d{p} A(0) \star_{\sigma,S} \rho(t)
= \iint \d{x}\d{p} A(0) \star_{\sigma,S} U(t) \star_{\sigma,S} \rho(0) \star_{\sigma,S} U(-t) \\
& = \iint \d{x}\d{p} (U(-t) \star_{\sigma,S} A(0) \star_{\sigma,S} U(t)) \star_{\sigma,S} \rho(0)
= \iint \d{x}\d{p} A(t) \star_{\sigma,S} \rho(0)
= \braket{\hat{A}(t)}_{\rho(0)}.
\end{align*}

%%%%%%%%%%%%%%%%%%%%%%%%%%%%%%%%%%%%%%%%%%%%%%%%%%%%%%%%%%%%%%%%%%%%%%%%%%%%%%%%%%%%%%%%%%%%%%%%%%%%%%%%%%%%%%%%%%%%%%%%%%%%%%%%%%%%%%%%%%%%%%%%%%%%%%%%%%
\section{Ordinary description of quantum mechanics}
\label{sec:9}
In this section the equivalence of the phase space formulation of quantum mechanics and the Schr\"odinger, Dirac and Heisenberg formulation of quantum
mechanics will be proved by observing that the Wigner-Moyal transform (see e.g. \cite{Gosson:2005}) have all properties of the tensor product. This
observation allows writing many previous results found in the literature in a lucid and elegant way, from which the equivalence of the two formulations
of quantum mechanics is easier to see. Moreover, this observation will also provide an argument to treat the phase space quantum mechanics as a more
fundamental formalism of quantum mechanics, than the one developed by Schr\"odinger, Dirac and Heisenberg.

First, it will be shown that $\mathcal{H}$ can be considered as a tensor product of Hilbert spaces $L^2(\mathbb{R})$ and a space dual to it
$\left( L^2(\mathbb{R}) \right)^*$. It is well known that the Hilbert space $\left( L^2(\mathbb{R}) \right)^*$ dual to $L^2(\mathbb{R})$ can be
identified with $L^2(\mathbb{R})$ where the anti-linear duality map $* \colon L^2(\mathbb{R}) \to \left( L^2(\mathbb{R}) \right)^*$ is the complex
conjugation of functions. The tensor product of $\left( L^2(\mathbb{R}) \right)^*$ and $L^2(\mathbb{R})$ is defined up to an isomorphism. The most
natural choice for the tensor product of $\left( L^2(\mathbb{R}) \right)^*$ and $L^2(\mathbb{R})$ is the Hilbert space $L^2(\mathbb{R}^2)$ where
the tensor product of $\varphi^*, \psi$ ($\varphi,\psi \in L^2(\mathbb{R})$) is defined as
\begin{equation*}
(\varphi^* \otimes \psi)(x,y) := \varphi^*(x) \psi(y)
\end{equation*}
and the scalar product in $L^2(\mathbb{R}^2)$ satisfies the equation
\begin{equation*}
\braket{\varphi_1^* \otimes \psi_1 | \varphi_2^* \otimes \psi_2}_{L^2} = \braket{\varphi_2 | \varphi_1}_{L^2} \braket{\psi_1 | \psi_2}_{L^2},
\end{equation*}
for $\varphi_1,\varphi_2,\psi_1,\psi_2 \in L^2(\mathbb{R})$.

Now, an isomorphism of $L^2(\mathbb{R}^2)$ onto $\mathcal{H}$ will be defined, which will make from $\mathcal{H}$ a tensor product of
$\left( L^2(\mathbb{R}) \right)^*$ and $L^2(\mathbb{R})$. First, note that the Fourier transform $\mathcal{F}_y$ is an isomorphism of
$L^2(\mathbb{R}^2)$. For $\Psi(x,y) \in L^2(\mathbb{R}^2)$, the function
\begin{equation*}
\Psi(x,p) = \mathcal{F}_y(\Psi(x,y)) = \frac{1}{\sqrt{2\pi \hbar}} \int \d{y} e^{-\frac{i}{\hbar} py} \Psi(x,y)
\end{equation*}
will be called an $(x,p)$-representation of $\Psi(x,y)$ and it will be considered as a function on the phase space $M = \mathbb{R}^2$ in the canonical
coordinates of position $x$ and momentum $p$. Lets introduce another isomorphism of $L^2(\mathbb{R}^2)$ by the equation
\begin{equation*}
T_\sigma \Psi(x,y) := \Psi(x - \bar{\sigma} y, x + \sigma y), \quad \Psi \in L^2(\mathbb{R}^2).
\end{equation*}
As the searched isomorphism of $L^2(\mathbb{R}^2)$ onto $\mathcal{H}$ the map $S \mathcal{F}_y T_\sigma$ will be taken. A tensor product of
$\left( L^2(\mathbb{R}) \right)^*$ and $L^2(\mathbb{R})$ induced by this isomorphism will be denoted by
$\left( L^2(\mathbb{R}) \right)^* \otimes_{\sigma,S} L^2(\mathbb{R})$ and called a \emph{$(\sigma,S)$-twisted tensor product} of
$\left( L^2(\mathbb{R}) \right)^*$ and $L^2(\mathbb{R})$. Hence
\begin{equation}
\mathcal{H} = \left( L^2(\mathbb{R}) \right)^* \otimes_{\sigma,S} L^2(\mathbb{R})
= S \mathcal{F}_y T_\sigma \left( \left( L^2(\mathbb{R}) \right)^* \otimes L^2(\mathbb{R}) \right)
\label{eq:9.1}
\end{equation}
and the scalar product in $\mathcal{H}$ satisfies
\begin{equation*}
\braket{\varphi_1^* \otimes_{\sigma,S} \psi_1 | \varphi_2^* \otimes_{\sigma,S} \psi_2}_{\mathcal{H}}
= \braket{\varphi_2 | \varphi_1}_{L^2} \braket{\psi_1 | \psi_2}_{L^2},
\end{equation*}
for $\varphi_1,\varphi_2,\psi_1,\psi_2 \in L^2(\mathbb{R})$.
The relevance of the representation (\ref{eq:9.1}) will be revealed in the key theorem \ref{thm:9.5}.

The generators of $\mathcal{H}$ are of the form
\begin{align}
\Psi^{\sigma,S}(x,p) & = (\varphi^* \otimes_{\sigma,S} \psi)(x,p)
= \frac{1}{\sqrt{2\pi \hbar}} S \int \d{y} e^{-\frac{i}{\hbar} py} \varphi^*(x - \bar{\sigma} y) \psi(x + \sigma y) \nonumber \displaybreak[0] \\
& = \frac{1}{\sqrt{2\pi \hbar}} \iiint \d{x'} \d{p'} \d{y} \varphi^*(x' - \bar{\sigma} y) \psi(x' + \sigma y) S(x,p,x',p') e^{-\frac{i}{\hbar} p'y},
\label{eq:9.2}
\end{align}
where $\varphi, \psi \in L^2(\mathbb{R})$ and $S(x,p,x',p')$ is an integral kernel of the isomorphism $S$.

\begin{example}
For a special case of the $\star_{\sigma,\alpha,\beta}$-product ($\alpha > 0$, $\beta > 0$) the integral kernel of the isomorphism $S = S_{\alpha,\beta}$
takes the form
\begin{equation*}
S_{\alpha,\beta}(x,p,x',p') = \frac{1}{2\pi\hbar\sqrt{\alpha \beta}} e^{-\frac{1}{2\hbar\alpha}(x - x')^2} e^{-\frac{1}{2\hbar\beta}(p - p')^2}.
\end{equation*}
Hence, the generators $\Psi^{\sigma,\alpha,\beta}$ take the form
\begin{equation}
\Psi^{\sigma,\alpha,\beta}(x,p) = \frac{1}{(2\pi \hbar)^{3/2} \sqrt{\alpha \beta}} \iiint \d{x'} \d{p'} \d{y} \varphi^*(x' - \bar{\sigma} y)
\psi(x' + \sigma y) e^{-\frac{1}{2\hbar\alpha}(x - x')^2} e^{-\frac{1}{2\hbar\beta}(p - p')^2} e^{-\frac{i}{\hbar} p'y}.
\label{eq:9.4}
\end{equation}
In a special case of Weyl ordering $\sigma = \frac{1}{2}$, $\alpha = \beta = 0$ generators $\Psi^{\sigma,\alpha,\beta}$ are the well known Wigner
functions related to the Moyal $\star$-product. Another special case are generators $\Psi^{\sigma,\alpha,\beta}$ related to the $\star_\lambda$-product,
i.e. the case when $\sigma = \frac{1}{2}$, $\alpha = \frac{2\lambda - 1}{2\omega}$, $\beta = \omega^2 \alpha$ ($\omega > 0$) and when generators
$\Psi^{\sigma,\alpha,\beta} = \Psi^\lambda$ are written in the holomorphic coordinates. The generators $\Psi^\lambda$ in the case $\lambda = 0$ are the
well known Glauber-Sudarshan distribution functions, and in the case $\lambda = 1$ are the well known Husimi distribution functions. Many other particular
examples of the quantum phase-space distribution functions (\ref{eq:9.4}) considered in the past are listed and described in the review paper
\cite{Lee:1995}.
\end{example}

Observe, that if $\{ \varphi_i \}$ is an orthonormal basis in $L^2(\mathbb{R})$, then $\{ \Psi_{ij} \} = \{ \varphi_i^* \otimes_{\sigma,S} \varphi_j \}$
is an orthonormal basis in $\mathcal{H}$ and for any $\Psi = \varphi^* \otimes_{\sigma,S} \psi \in \mathcal{H}$ where $\varphi,\psi \in L^2(\mathbb{R})$,
one have
\begin{gather*}
\varphi = \sum_i b_i \varphi_i, \qquad \psi = \sum_j c_j \varphi_j, \qquad \textrm{for some $b_i, c_i \in \mathbb{C}$}, \\
\Psi = \sum_{i,j} a_{ij} \Psi_{ij}, \qquad a_{ij} = b_i^* c_j.
\end{gather*}
The interesting property of the basis functions $\Psi_{ij}$ is they idempotence. Namely, there holds
\begin{theorem}
\begin{equation}
\Psi_{ij} \star_{\sigma,S} \Psi_{kl} = \frac{1}{\sqrt{2\pi\hbar}} \delta_{il} \Psi_{kj}
\label{eq:9.3}
\end{equation}
\end{theorem}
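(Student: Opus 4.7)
The plan is to reduce the identity to the orthonormality relation $\int \varphi_i^*(u)\varphi_l(u)\,\d{u} = \delta_{il}$ for the chosen basis of $L^2(\mathbb{R})$, in two stages. First, using the intertwiner $S$ between the $\star_\sigma$ and $\star_{\sigma,S}$ products, set $\Psi^\sigma_{ij} := S^{-1}\Psi_{ij} = \mathcal{F}_y T_\sigma(\varphi_i^* \otimes \varphi_j)$; by the defining relation of the twisted star product,
\[
\Psi_{ij} \star_{\sigma,S} \Psi_{kl} = S\bigl(\Psi^\sigma_{ij} \star_\sigma \Psi^\sigma_{kl}\bigr).
\]
Since $S$ is linear, it suffices to prove the untwisted identity $\Psi^\sigma_{ij} \star_\sigma \Psi^\sigma_{kl} = \frac{1}{\sqrt{2\pi\hbar}}\delta_{il}\Psi^\sigma_{kj}$; applying $S$ to both sides then yields (\ref{eq:9.3}).

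Second, I would write the generator out explicitly as
\[
\Psi^\sigma_{ij}(x,p) = \frac{1}{\sqrt{2\pi\hbar}}\int \d{y}\,\varphi_i^*(x - \bar{\sigma} y)\varphi_j(x + \sigma y)\, e^{-\frac{i}{\hbar}py},
\]
substitute this representation into the integral form (\ref{eq:6.2.10}) of $\star_\sigma$, and perform the resulting multi-fold integration. The central manipulation is a linear change of variables in the position, momentum and Fourier variables that decouples a single integration $\int \varphi_i^*(u)\varphi_l(u)\,\d{u} = \delta_{il}$; the remaining integrals must reassemble into the integral representation of $\Psi^\sigma_{kj}$ and produce the overall prefactor $\frac{1}{\sqrt{2\pi\hbar}}$.

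An attractive alternative is to handle first the Moyal case $\sigma = \tfrac{1}{2}$, where $\Psi^{1/2}_{ij}$ coincides (up to a $\sqrt{2\pi\hbar}$-normalisation) with the Wigner cross-function $W(\varphi_i,\varphi_j)$ and the corresponding projector identity $W(\varphi_i,\varphi_j) \star_{1/2} W(\varphi_k,\varphi_l) = \frac{1}{2\pi\hbar}\delta_{il}W(\varphi_k,\varphi_j)$ is classical, and then to transport the result to arbitrary $\sigma$ through the gauge-equivalence intertwiner $S_{\sigma - 1/2} = \exp\bigl(i\hbar(\sigma - \tfrac{1}{2})\partial_x\partial_p\bigr)$ of Section~\ref{subsec:6.2}; the only additional ingredient is the direct verification that $S_{\sigma - 1/2}W(\varphi_i,\varphi_j)$ equals $\Psi^\sigma_{ij}$, which reduces to a short Gaussian manipulation of the defining integral.

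The principal obstacle is the bookkeeping in the direct approach: the product of two generators under (\ref{eq:6.2.10}) produces an integral over $(\xi,\eta)$ together with two auxiliary $y$-variables (one from each $\mathcal{F}_y T_\sigma$), all coupled by shifts involving $\sigma$ and $\bar{\sigma}$ and by the Fourier phase $e^{\frac{i}{\hbar}(\xi x - \eta p)}$. Locating the linear substitution that separates an $\int \varphi_i^*\varphi_l$ factor from a remnant exactly equal to the integrand of $\Psi^\sigma_{kj}$ is the nontrivial step; once the correct substitution is in place the verification is a routine Gaussian/Fourier computation, and the Moyal-plus-intertwiner route provides an essentially computation-free alternative.
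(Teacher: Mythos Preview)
Your proposal is correct, and the ``attractive alternative'' you describe --- proving the Moyal case first and then transporting via the intertwiner $S_{\sigma-1/2}$, composed with your initial reduction through $S$ --- is exactly the paper's own argument: it cites the Moyal identity from the literature and applies the isomorphism $S\,S_{\sigma-\frac{1}{2}}$ to both sides. Your primary route (a direct integral computation for general $\sigma$) is a valid but more laborious alternative that the paper bypasses entirely.
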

\begin{proof}
For the case of the Moyal product the proof can be found in \cite{Curtright:1998}. The general case follows by applying the isomorphism
$S S_{\sigma - \frac{1}{2}}$ to the both sides of the equation.
\end{proof}

Hence, if $\Psi_1 = \varphi_1^* \otimes_{\sigma,S} \psi_1$ and $\Psi_2 = \varphi_2^* \otimes_{\sigma,S} \psi_2$ where
$\varphi_1, \psi_1, \varphi_2, \psi_2 \in L^2(\mathbb{R})$, then
\begin{subequations}
\label{eq:9.56}
\begin{align}
\Psi_1 \star_{\sigma,S} \Psi_2 & = \frac{1}{\sqrt{2\pi\hbar}} \braket{\varphi_1 | \psi_2}_{L^2}
(\varphi_2^* \otimes_{\sigma,S} \psi_1),
\label{eq:9.56a} \\
\Psi_2 \star_{\sigma,S} \Psi_1 & = \frac{1}{\sqrt{2\pi\hbar}} \braket{\varphi_2 | \psi_1}_{L^2}
(\varphi_1^* \otimes_{\sigma,S} \psi_2).
\label{eq:9.56b}
\end{align}
\end{subequations}

Using the basis $\{ \Psi_{ij} \} = \{ \varphi_i^* \otimes_{\sigma,S} \varphi_j \}$ some interesting properties of the admissible states can be proved.
Namely, there holds
\begin{theorem}
\label{thm:9.1}
Every pure state $\Psi_{\mathrm{pure}} \in \mathcal{H}$ is of the form
\begin{equation}
\Psi_{\mathrm{pure}} = \varphi^* \otimes_{\sigma,S} \varphi,
\label{eq:9.5}
\end{equation}
for some normalized function $\varphi \in L^2(\mathbb{R})$. Conversely, every function $\Psi \in \mathcal{H}$ of the form (\ref{eq:9.5}) is a pure
state.
\end{theorem}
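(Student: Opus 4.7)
\emph{Plan.} My approach is to expand $\Psi$ in the orthonormal basis $\{\Psi_{ij}\}=\{\varphi_i^*\otimes_{\sigma,S}\varphi_j\}$ of $\mathcal H$ and translate the three pure-state conditions into matrix conditions on the expansion coefficients $a_{ij}$. The converse direction would come first and is essentially immediate from the preceding material: for $\Psi=\varphi^*\otimes_{\sigma,S}\varphi$ with $\|\varphi\|_{L^2}=1$, normalisation reads $\|\Psi\|_{\mathcal H}^2=\braket{\varphi|\varphi}_{L^2}^2=1$, idempotence follows from (\ref{eq:9.56a}) specialised to $\varphi_1=\varphi_2=\psi_1=\psi_2=\varphi$, and hermiticity is a special case of the identity $\Psi_{ij}^\dagger=\Psi_{ji}$ treated next.

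The key technical ingredient is to show $\Psi_{ij}^\dagger=\Psi_{ji}$. I would first establish this in the Moyal case $\sigma=\tfrac12$, $S=\mathrm{id}$ (where $\dagger$ is ordinary complex conjugation): the substitution $y\mapsto -y$ in the defining integral of $\otimes_{1/2}$ immediately gives $(\varphi_i^*\otimes_{1/2}\varphi_j)^*=\varphi_j^*\otimes_{1/2}\varphi_i$. To transport this to general $(\sigma,S)$ I would introduce the map $\tilde S:=S\circ S_{\sigma-1/2}$ from $L^2(M)$ to $\mathcal H$. A Fourier-kernel computation giving $\mathcal F_y T_\sigma=S_{\sigma-1/2}\mathcal F_y T_{1/2}$ shows that $\tilde S$ sends $\varphi^*\otimes_{1/2}\psi$ to $\varphi^*\otimes_{\sigma,S}\psi$, and a short calculation with the identities $S_aS_b=S_{a+b}$, $\overline{S_a}=S_{-a}$, and $S_{\sigma-\bar\sigma}=(S_{\sigma-1/2})^2$ shows that $\tilde S$ intertwines the involutions $*$ and $\dagger$. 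Applying $\tilde S$ to both sides of the Moyal identity then yields $\Psi_{ij}^\dagger=\Psi_{ji}$ in full generality.

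For the forward direction I expand $\Psi=\sum_{i,j}a_{ij}\Psi_{ij}$ with matrix $A=(a_{ij})$. The product identity (\ref{eq:9.3}) combined with (\ref{eq:9.56a}) yields $\Psi\star_{\sigma,S}\Psi=\tfrac{1}{\sqrt{2\pi\hbar}}\sum_{k,j}(A^2)_{kj}\Psi_{kj}$, so idempotence reads $A^2=A$. The conjugate-linearity of $\dagger$ together with $\Psi_{ij}^\dagger=\Psi_{ji}$ turns hermiticity into $a_{ij}=a_{ji}^*$, i.e.\ $A^*=A$. Normalisation becomes $\sum_{i,j}|a_{ij}|^2=\tr(A^*A)=1$. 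A Hermitian idempotent satisfies $\tr(A^*A)=\tr(A^2)=\tr(A)=\mathrm{rank}(A)$, so $A$ has rank one and therefore factors as $a_{ij}=v_i\bar v_j$ for some unit vector $v\in\ell^2$. Setting $\varphi:=\sum_i\bar v_i\varphi_i$ gives $\|\varphi\|_{L^2}=1$ and $\Psi=\varphi^*\otimes_{\sigma,S}\varphi$, as required.

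The main obstacle is verifying that $\tilde S$ simultaneously intertwines the $\star$-products and the involutions — the product intertwining follows by composing the analogous properties of $S_{\sigma-1/2}$ and $S$, but the involution intertwining requires carefully combining the several operator identities above. Once that is established, the matrix-algebraic core of the forward direction (Hermitian projection with unit Hilbert–Schmidt norm $\Rightarrow$ rank one) is entirely standard.
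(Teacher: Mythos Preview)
Your proposal is correct and follows essentially the same strategy as the paper: expand $\Psi$ in the basis $\{\Psi_{ij}\}$, translate the three pure-state conditions into the matrix conditions $\check c^\dagger=\check c$, $\check c^2=\check c$, $\tr\check c=1$, and conclude that $\check c$ is a rank-one projection. The only differences are cosmetic---the paper reaches rank one by diagonalising the Hermitian matrix rather than via your $\tr(A^*A)=\tr A=\operatorname{rank}A$ argument, and it asserts the hermiticity translation $\Psi_{ij}^\dagger=\Psi_{ji}$ without the explicit Moyal-case-plus-intertwining justification you supply.
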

\begin{proof}
From formula (\ref{eq:9.3}) it follows that every function $\Psi \in \mathcal{H}$ of the form (\ref{eq:9.5}) is a pure state. If now one assumes that
$\Psi_{\textrm{pure}} \in \mathcal{H}$ is a pure state then $\Psi_{\textrm{pure}}$ can be written in a form
\begin{equation*}
\Psi_{\textrm{pure}} = \sum_{i,j} c_{ij} \Psi_{ij},
\end{equation*}
where $\{ \Psi_{ij} \} = \{ \varphi_i^* \otimes_{\sigma,S} \varphi_j \}$ is an induced basis in $\mathcal{H}$ by the basis $\{ \varphi_i \}$
in $L^2(\mathbb{R})$. The assumptions that $\Psi_{\textrm{pure}}$ is Hermitian, idempotent and normalized can be restated saying that the matrix
$\check{c}$ of the coefficients $c_{ij}$ is Hermitian ($\check{c} = \check{c}^\dagger$), idempotent ($\check{c}^2 = \check{c}$) and normalized
($\tr \check{c} = 1$). Since the matrix $\check{c}$ is Hermitian it can be diagonalized, i.e. there exist an unitary matrix $\check{T}$
such that $c_{ij} = \sum_{k,l} T^\dagger_{ik} (a_k \delta_{kl}) T_{lj} = \sum_k T^*_{ki} a_k T_{kj}$ for some $a_k \in \mathbb{R}$. Hence,
$\Psi_{\textrm{pure}}$ takes the form
\begin{equation*}
\Psi_{\textrm{pure}} = \sum_{i,j,k} T^*_{ki} a_k T_{kj} (\varphi_i^* \otimes_{\sigma,S} \varphi_j)
= \sum_k a_k \left( \Big(\textstyle{\sum_i} T_{ki} \varphi_i \Big)^* \otimes_{\sigma,S} \Big(\textstyle{\sum_j} T_{kj} \varphi_j \Big) \right)
= \sum_k a_k (\psi_k^* \otimes_{\sigma,S} \psi_k),
\end{equation*}
where $\psi_k = \sum_i T_{ki} \varphi_i$. The conditions that $\check{c}^2 = \check{c}$ and $\tr \check{c} = 1$ give that $a_k^2 = a_k$ and
$\sum_k a_k = 1$. Hence $a_k = \delta_{k_0 k}$ for some $k_0$, from which follows that $\Psi_{\textrm{pure}} = \psi_{k_0}^* \otimes_{\sigma,S} \psi_{k_0}$.
\end{proof}

\begin{theorem}
\label{thm:9.3}
Every admissible (pure or mixed) state $\Psi \in \mathcal{H}$ satisfies
\begin{equation*}
\frac{1}{\sqrt{2\pi\hbar}} \iint \Psi(x,p) \d{x} \d{p} = 1.
\end{equation*}
\end{theorem}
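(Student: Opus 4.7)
The plan is to reduce to pure states by linearity and then compute directly using the tensor-product representation of Theorem~\ref{thm:9.1} together with the fact that $S$ vanishes under the integral sign.

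First, since every mixed state $\Psi_{\mathrm{mix}} = \sum_\lambda p_\lambda \Psi_{\mathrm{pure}}^{(\lambda)}$ is a convex combination of pure states with $\sum_\lambda p_\lambda = 1$, and since the functional $\Psi \mapsto \iint \Psi(x,p)\,\d{x}\,\d{p}$ is linear (and continuous in an appropriate sense), it suffices to establish the identity for pure states. The mixed case then follows by linearity and the normalization $\sum_\lambda p_\lambda = 1$.

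For a pure state, Theorem~\ref{thm:9.1} gives $\Psi_{\mathrm{pure}} = \varphi^* \otimes_{\sigma,S} \varphi$ for some normalized $\varphi \in L^2(\mathbb{R})$, so by the explicit formula (\ref{eq:9.2}),
\begin{equation*}
\Psi_{\mathrm{pure}}(x,p) = \frac{1}{\sqrt{2\pi\hbar}}\, S\!\int \d{y}\, e^{-\frac{i}{\hbar}py}\, \varphi^*(x - \bar{\sigma} y)\, \varphi(x + \sigma y).
\end{equation*}
Integrating over $(x,p)$ and invoking the property (\ref{eq:6.5.2}) that $S$ vanishes under the integral sign, one gets
\begin{equation*}
\iint \Psi_{\mathrm{pure}}(x,p)\, \d{x}\, \d{p} = \frac{1}{\sqrt{2\pi\hbar}} \iiint e^{-\frac{i}{\hbar}py}\, \varphi^*(x - \bar{\sigma} y)\, \varphi(x + \sigma y)\, \d{y}\, \d{x}\, \d{p}.
\end{equation*}
Performing the $p$-integration produces $2\pi\hbar\,\delta(y)$, which collapses the $y$-integral and leaves
\begin{equation*}
\iint \Psi_{\mathrm{pure}}(x,p)\, \d{x}\, \d{p} = \sqrt{2\pi\hbar} \int |\varphi(x)|^2\, \d{x} = \sqrt{2\pi\hbar}\, \|\varphi\|_{L^2}^2 = \sqrt{2\pi\hbar},
\end{equation*}
which upon multiplication by $\frac{1}{\sqrt{2\pi\hbar}}$ gives the claim.

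The main obstacle is a technical one: justifying the interchange of integrations and the interpretation of the oscillatory integral giving $\delta(y)$. I would handle this by first proving the identity for $\varphi \in \mathcal{S}(\mathbb{R})$, where Fubini and the Fourier inversion theorem apply without fuss, and then extending to arbitrary $\varphi \in L^2(\mathbb{R})$ by density together with the continuity of $S$ in the $L^2$ topology (as stipulated in Subsection~\ref{subsec:6.5}) and the continuity, in the same topology, of the linear functional $\Psi \mapsto \iint \Psi\, \d{x}\,\d{p}$ on the relevant dense subspace. For the mixed case, the same continuity considerations legitimize pulling the integral through the possibly infinite sum $\sum_\lambda p_\lambda \Psi_{\mathrm{pure}}^{(\lambda)}$.
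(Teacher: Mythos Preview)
Your proposal is correct and follows essentially the same route as the paper's own proof: reduce to pure states, write $\Psi_{\mathrm{pure}} = \varphi^* \otimes_{\sigma,S} \varphi$ via Theorem~\ref{thm:9.1} and formula~(\ref{eq:9.2}), integrate out $p$ to produce a $\delta(y)$, and use $\|\varphi\|_{L^2} = 1$. If anything, your version is slightly more explicit than the paper's, since you spell out the use of property~(\ref{eq:6.5.2}) (that $S$ vanishes under the integral sign), which the paper invokes silently when passing from the $S$-transformed integrand to the untransformed one.
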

\begin{proof}
It is enough to prove the theorem for the case when $\Psi$ is a pure state. In that case $\Psi$ can be written in a form $\Psi = \varphi^*
\otimes_{\sigma,S} \varphi$ for some normalized function $\varphi \in L^2(\mathbb{R})$. Hence, one have that
\begin{align*}
\frac{1}{\sqrt{2\pi\hbar}} \iint \Psi(x,p) \d{x} \d{p} & = \frac{1}{\sqrt{2\pi\hbar}} \iint (\varphi^* \otimes_{\sigma,S} \varphi)(x,p) \d{x} \d{p}
\nonumber \displaybreak[0] \\
& = \frac{1}{2\pi\hbar} \iiint \d{x}\d{p}\d{y} e^{-\frac{i}{\hbar} py} \varphi^*(x - \bar{\sigma} y) \varphi(x + \sigma y) \nonumber \displaybreak[0] \\
& = \iint \d{x}\d{y} \delta(y) \varphi^*(x - \bar{\sigma} y) \varphi(x + \sigma y)
= \int \d{x} \varphi^*(x) \varphi(x) = 1.
\end{align*}
\end{proof}

\begin{theorem}
\label{thm:9.4}
Every admissible (pure or mixed) state $\Psi = \sum_\lambda p_\lambda \left( \big( \varphi^{(\lambda)} \big)^* \otimes_{\sigma,S} \varphi^{(\lambda)} \right)$
for some normalized $\varphi^{(\lambda)} \in L^2(\mathbb{R})$ satisfies
\begin{subequations}
\label{eq:9.6}
\begin{align}
\frac{1}{\sqrt{2\pi\hbar}} \int (S^{-1} \Psi)(x,p) \d{p} & = \sum_\lambda p_\lambda |\varphi^{(\lambda)}(x)|^2,
\label{eq:9.6a} \\
\frac{1}{\sqrt{2\pi\hbar}} \int (S^{-1} \Psi)(x,p) \d{x} & = \sum_\lambda p_\lambda |\tilde{\varphi}^{(\lambda)}(p)|^2,
\label{eq:9.6b}
\end{align}
\end{subequations}
where $\tilde{\varphi}^{(\lambda)}$ denotes the Fourier transform of $\varphi^{(\lambda)}$.
\end{theorem}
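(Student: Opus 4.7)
The plan is to reduce the theorem to the pure-state case and then compute the two marginal integrals directly from the explicit form of $\Psi$ given by equation (\ref{eq:9.2}).

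First, by the linearity of both sides of (\ref{eq:9.6a}) and (\ref{eq:9.6b}) in the mixing coefficients $p_\lambda$, it suffices to prove the statement when $\Psi = \varphi^* \otimes_{\sigma,S} \varphi$ for a single normalized $\varphi \in L^2(\mathbb{R})$. For such a pure state, applying $S^{-1}$ to equation (\ref{eq:9.2}) removes the outer action of $S$ and leaves the clean integral
\begin{equation*}
(S^{-1}\Psi)(x,p) = \frac{1}{\sqrt{2\pi\hbar}} \int \mathrm{d}y\, e^{-\frac{i}{\hbar}py} \varphi^*(x - \bar{\sigma}y) \varphi(x + \sigma y).
\end{equation*}
From here the proof becomes direct manipulation of Fourier integrals; one can assume $\varphi \in \mathcal{S}(\mathbb{R})$ for the calculation and extend by density.

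For (\ref{eq:9.6a}) I would integrate over $p$ first. Interchanging the $p$ and $y$ integrals (justified by Fubini on Schwartz functions), the $p$-integral yields $\int \mathrm{d}p\, e^{-\frac{i}{\hbar}py} = 2\pi\hbar\,\delta(y)$, which collapses the remaining $y$-integral to the value at $y=0$. The factor $\frac{1}{\sqrt{2\pi\hbar}}$ in front and the one from (\ref{eq:9.2}) combine with the $2\pi\hbar$ from the delta function to give exactly $\varphi^*(x)\varphi(x) = |\varphi(x)|^2$, as required.

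For (\ref{eq:9.6b}) I would integrate over $x$ first. The change of variable $u = x - \bar{\sigma}y$ (with $y$ fixed) turns the $x$-integral into $\int \mathrm{d}u\, \varphi^*(u)\varphi(u+y)$, which is the autocorrelation of $\varphi$ and, by Plancherel's theorem, equals $\int \mathrm{d}p'\, e^{\frac{i}{\hbar}p'y}|\tilde{\varphi}(p')|^2$. Substituting back and performing the remaining $y$-integration produces $2\pi\hbar\,\delta(p - p')$, which upon integration in $p'$ collapses to $|\tilde{\varphi}(p)|^2$; tracking the prefactors $\frac{1}{\sqrt{2\pi\hbar}}$ again gives precisely the right normalization.

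The main obstacle I anticipate is purely bookkeeping: keeping track of the numerous $\frac{1}{\sqrt{2\pi\hbar}}$ factors and properly justifying the exchanges of integration needed to handle the oscillatory kernels. Nothing conceptually deep is required beyond the reduction to pure states (which uses Theorem \ref{thm:9.1}) and the observation that the prefactor $S$ in the definition of $\otimes_{\sigma,S}$ is stripped off on the left-hand side of (\ref{eq:9.6}), leaving only the standard Wigner-type integral which factors nicely under either marginal.
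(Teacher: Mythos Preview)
Your proposal is correct and follows essentially the same route as the paper: reduce to a pure state by linearity, strip off $S$ via $S^{-1}$ to recover the bare Wigner-type integral from (\ref{eq:9.2}), and then compute each marginal by Fourier manipulation. For (\ref{eq:9.6a}) your argument coincides with the paper's verbatim.

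For (\ref{eq:9.6b}) the paper takes a slightly more direct path than your autocorrelation/Plancherel detour: it performs the two-variable change of coordinates $x_1 = x - \bar{\sigma} y$, $x_2 = x + \sigma y$ (Jacobian $1$, and $y = x_2 - x_1$), after which the double integral factorizes immediately into $\tilde{\varphi}^*(p)\tilde{\varphi}(p)$ without any intermediate appeal to the convolution theorem. Your version is equally valid but introduces one more integration variable and one more delta function than necessary; the paper's change of variables avoids this extra round trip.
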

\begin{proof}
It is enough to prove the theorem for a pure state $\Psi = \varphi^* \otimes_{\sigma,S} \varphi$. Equation (\ref{eq:9.6a}) follows from
\begin{equation*}
\frac{1}{\sqrt{2\pi\hbar}} \int (S^{-1} \Psi)(x,p) \d{p} = \frac{1}{2\pi\hbar} \int \d{p} \int \d{y} e^{-\frac{i}{\hbar} py}
\varphi^*(x - \bar{\sigma} y) \varphi(x + \sigma y)
= \int \d{y} \delta(y) \varphi^*(x - \bar{\sigma} y) \varphi(x + \sigma y) = |\varphi(x)|^2.
\end{equation*}
Equation (\ref{eq:9.6b}) follows from
\begin{equation*}
\frac{1}{\sqrt{2\pi\hbar}} \int (S^{-1} \Psi)(x,p) \d{x} = \frac{1}{2\pi\hbar} \iint \d{x} \d{p} e^{-\frac{i}{\hbar} py}
\varphi^*(x - \bar{\sigma} y) \varphi(x + \sigma y).
\end{equation*}
Introducing new coordinates $x_1 = x - \bar{\sigma} y, x_2 = x + \sigma y$ gives
\begin{equation*}
\frac{1}{\sqrt{2\pi\hbar}} \int (S^{-1} \Psi)(x,p) \d{x} = \frac{1}{2\pi\hbar} \iint \d{x_1} \d{x_2} e^{-\frac{i}{\hbar} p x_2} e^{\frac{i}{\hbar} p x_1}
\varphi^*(x_1) \varphi(x_2) = |\tilde{\varphi}(p)|^2.
\end{equation*}
\end{proof}

From Theorem \ref{thm:9.1} follows that there is a one to one correspondence between pure states of the phase space quantum mechanics and the normalized
functions from the Hilbert space $L^2(\mathbb{R})$.

Elements of the algebra $\hat{\mathcal{A}}_Q$, hence in particular observables, are operators on $\mathcal{H}$ of the form $A \star_{\sigma,S} {}$.
In the canonical case discussed so far, from Theorem \ref{thm:6.5.1}, these operators are equal to operator functions
$A_{\sigma,S}(\hat{q}_{\sigma,S},\hat{p}_{\sigma,S})$. Moreover, states $\Psi \in \mathcal{H}$ can also be considered as operators on $\mathcal{H}$
given by the formula
\begin{equation}
\hat{\Psi} = \sqrt{2\pi\hbar} \Psi \star_{\sigma,S} {}.
\label{eq:9.8}
\end{equation}
The space of all operators $\hat{\Psi}$ given by (\ref{eq:9.8}) will be denoted by $\hat{\mathcal{H}}$. Note that $\hat{\mathcal{H}}$ inherits from
$\mathcal{H}$ a structure of a Hilbert algebra with the scalar product of $\hat{\Psi}_1 = \sqrt{2\pi\hbar} \Psi_1 \star_{\sigma,S} {}$ and
$\hat{\Psi}_2 = \sqrt{2\pi\hbar} \Psi_2 \star_{\sigma,S} {}$ defined by
\begin{equation*}
\braket{\hat{\Psi}_1 | \hat{\Psi}_2}_{\hat{\mathcal{H}}} := \braket{\Psi_1 | \Psi_2}_{\mathcal{H}}.
\end{equation*}
Note also, that from (\ref{eq:6.5.7}) $\| \,\cdot\, \|_{\hat{\mathcal{H}}}$ satisfies the following relation
\begin{equation*}
\| \hat{\Psi}_1 \hat{\Psi}_2 \|_{\hat{\mathcal{H}}} \le \| \hat{\Psi}_1 \|_{\hat{\mathcal{H}}} \| \hat{\Psi}_2 \|_{\hat{\mathcal{H}}}.
\end{equation*}

Now, it will be proved that operators from $\hat{\mathcal{H}}$ can be naturally identified with Hilbert-Schmidt operators defined on the Hilbert space
$L^2(\mathbb{R})$. The space of Hilbert-Schmidt operators $\mathcal{S}^2(L^2(\mathbb{R}))$ is a space of all bounded operators
$\hat{A} \in \mathcal{B}(L^2(\mathbb{R}))$ for which $\| \hat{A} \|_{\mathcal{S}^2} < \infty$, where $\| \,\cdot\, \|_{\mathcal{S}^2}$ is a norm
induced by a scalar product
\begin{equation}
\braket{\hat{A}|\hat{B}}_{\mathcal{S}^2} := \tr(\hat{A}^\dagger \hat{B}), \quad \hat{A},\hat{B} \in \mathcal{S}^2(L^2(\mathbb{R})).
\label{eq:9.11}
\end{equation}
The space of Hilbert-Schmidt operators $\mathcal{S}^2(L^2(\mathbb{R}))$ with the scalar product (\ref{eq:9.11}) is a Hilbert algebra. From the well
known relation between the $\mathcal{S}^2$-norm and the usual operator norm
\begin{equation*}
\| \hat{A} \| \le \| \hat{A} \|_{\mathcal{S}^2}, \quad \hat{A} \in \mathcal{S}^2(L^2(\mathbb{R}))
\end{equation*}
it follows that the inclusion $\mathcal{S}^2(L^2(\mathbb{R})) \subset \mathcal{B}(L^2(\mathbb{R}))$ is continuous and hence, every sequence of
Hilbert-Schmidt operators convergent in $\mathcal{S}^2(L^2(\mathbb{R}))$ is also convergent in $\mathcal{B}(L^2(\mathbb{R}))$.

There holds
\begin{theorem}
\label{thm:9.8}
For every $\hat{\Psi} \in \hat{\mathcal{H}}$
\begin{equation}
\hat{\Psi} = \hat{1} \otimes_{\sigma,S} \hat{\rho},
\label{eq:9.13}
\end{equation}
where $\hat{\rho} \in \mathcal{S}^2(L^2(\mathbb{R}))$ is some Hilbert-Schmidt operator defined on the Hilbert space $L^2(\mathbb{R})$. Conversely,
for every $\hat{\rho} \in \mathcal{S}^2(L^2(\mathbb{R}))$ the operator $\hat{1} \otimes_{\sigma,S} \hat{\rho}$ is an element of
$\hat{\mathcal{H}}$.

In particular, for $\Psi = \varphi^* \otimes_{\sigma,S} \psi$ the corresponding operator $\hat{\Psi}$ takes the form
\begin{equation}
\hat{\Psi} = \hat{1} \otimes_{\sigma,S} \hat{\rho},
\label{eq:9.14}
\end{equation}
where $\hat{\rho} = \braket{\varphi | \,\cdot\,}_{L^2} \psi$.

Moreover, for $\hat{\Psi}_1 = \hat{1} \otimes_{\sigma,S} \hat{\rho}_1$ and $\hat{\Psi}_2 = \hat{1} \otimes_{\sigma,S} \hat{\rho}_2$
\begin{equation}
\braket{\hat{\Psi}_1 | \hat{\Psi}_2}_{\hat{\mathcal{H}}} = \braket{\hat{\rho}_1 | \hat{\rho}_2}_{\mathcal{S}^2}.
\label{eq:9.15}
\end{equation}
\end{theorem}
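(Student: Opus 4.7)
My plan is to reduce the statement to the rank-one case (\ref{eq:9.14}) and then bootstrap to all of $\hat{\mathcal{H}}$ by expanding in an orthonormal basis, thereby identifying $\hat{\mathcal{H}}$ unitarily with $\mathcal{S}^2(L^2(\mathbb{R}))$.

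First I would prove (\ref{eq:9.14}) by direct computation. For $\Psi = \varphi^* \otimes_{\sigma,S} \psi$ and an arbitrary generator $\Phi = \phi^* \otimes_{\sigma,S} \chi$ of $\mathcal{H}$, formula (\ref{eq:9.56a}) gives
\begin{equation*}
\hat{\Psi}\Phi = \sqrt{2\pi\hbar}\,\Psi \star_{\sigma,S} \Phi = \braket{\varphi|\chi}_{L^2}\, (\phi^* \otimes_{\sigma,S} \psi),
\end{equation*}
which is exactly $(\hat{1} \otimes_{\sigma,S} \hat{\rho})\Phi$ for the rank-one operator $\hat{\rho} = \braket{\varphi|\cdot}_{L^2}\,\psi$. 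Since rank-one operators are trivially Hilbert-Schmidt, this settles the pure-state case and, via Theorem \ref{thm:9.1}, all pure states. Next I would fix an orthonormal basis $\{\varphi_i\}$ of $L^2(\mathbb{R})$ so that $\{\Psi_{ij}\} = \{\varphi_i^* \otimes_{\sigma,S} \varphi_j\}$ is an orthonormal basis of $\mathcal{H}$ and, by the previous step, $\hat{\Psi}_{ij} = \hat{1} \otimes_{\sigma,S} \hat{\rho}_{ij}$ with $\hat{\rho}_{ij} = \braket{\varphi_i|\cdot}_{L^2}\,\varphi_j$. A direct computation gives $\tr(\hat{\rho}_{ij}^\dagger \hat{\rho}_{kl}) = \delta_{ik}\delta_{jl}$, so $\{\hat{\rho}_{ij}\}$ is the standard orthonormal basis of $\mathcal{S}^2(L^2(\mathbb{R}))$.

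For arbitrary $\Psi \in \mathcal{H}$ I would expand $\Psi = \sum_{i,j} a_{ij}\Psi_{ij}$ with $\sum_{i,j} |a_{ij}|^2 = \|\Psi\|_\mathcal{H}^2 < \infty$. The series $\hat{\rho} := \sum_{i,j} a_{ij}\hat{\rho}_{ij}$ then converges in $\mathcal{S}^2$-norm to a Hilbert-Schmidt operator with $\|\hat{\rho}\|_{\mathcal{S}^2} = \|\Psi\|_\mathcal{H}$, while the continuity of $\star_{\sigma,S}$ guaranteed by (\ref{eq:6.5.7}) ensures that the partial sums $\sum_{i,j \le N} a_{ij}\hat{\Psi}_{ij}$ converge to $\hat{\Psi}$ in $\hat{\mathcal{H}}$. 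Matching the two limits yields (\ref{eq:9.13}), namely $\hat{\Psi} = \hat{1} \otimes_{\sigma,S} \hat{\rho}$. The converse direction is immediate: every $\hat{\rho} \in \mathcal{S}^2$ expands uniquely in $\{\hat{\rho}_{ij}\}$ with square-summable coefficients, and the same coefficients applied to $\{\Psi_{ij}\}$ reconstruct an element of $\mathcal{H}$ whose associated operator is precisely $\hat{1} \otimes_{\sigma,S} \hat{\rho}$. The isometry (\ref{eq:9.15}) is then forced by the basis pairings $\braket{\hat{\Psi}_{ij}|\hat{\Psi}_{kl}}_{\hat{\mathcal{H}}} = \braket{\Psi_{ij}|\Psi_{kl}}_\mathcal{H} = \delta_{ik}\delta_{jl} = \braket{\hat{\rho}_{ij}|\hat{\rho}_{kl}}_{\mathcal{S}^2}$ and extends by sesquilinearity and continuity.

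The hard part will be the consistent interpretation of $\hat{1} \otimes_{\sigma,S} \hat{\rho}$ for a general Hilbert-Schmidt $\hat{\rho}$: one must define it as the appropriate $\mathcal{S}^2$-limit of rank-one contributions via the twisted tensor-product identification of $\mathcal{H}$, and verify that $\mathcal{S}^2$-convergence of a sequence $\hat{\rho}^{(N)} \to \hat{\rho}$ really does transfer to convergence $\hat{1} \otimes_{\sigma,S} \hat{\rho}^{(N)} \to \hat{1} \otimes_{\sigma,S} \hat{\rho}$ in the Hilbert-algebra norm of $\hat{\mathcal{H}}$. Once both identifications are seen to be continuous between the relevant Hilbert spaces, the matching on the basis pins down a unique unitary isomorphism $\hat{\mathcal{H}} \cong \mathcal{S}^2(L^2(\mathbb{R}))$.
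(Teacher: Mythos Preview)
Your proposal is correct and follows essentially the same route as the paper: verify (\ref{eq:9.14}) for rank-one $\Psi$ via the multiplication rule (\ref{eq:9.56a}), check the basis pairings $\braket{\hat{\Psi}_{ij}|\hat{\Psi}_{kl}}_{\hat{\mathcal{H}}} = \delta_{ik}\delta_{jl} = \braket{\hat{\rho}_{ij}|\hat{\rho}_{kl}}_{\mathcal{S}^2}$, and then pass to general $\Psi$ by linearity and continuity. If anything, you are slightly more explicit than the paper about the limiting argument needed to interpret $\hat{1}\otimes_{\sigma,S}\hat{\rho}$ for a general Hilbert--Schmidt $\hat{\rho}$, which the paper dismisses as an ``immediate consequence'' of (\ref{eq:9.14}) and (\ref{eq:9.15}).
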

\begin{proof}
From equation (\ref{eq:9.56a}) for $\Psi = \varphi^* \otimes_{\sigma,S} \psi$ and the basis functions $\Psi_{ij} = \varphi_i^* \otimes_{\sigma,S} \varphi_j$
it follows that
\begin{equation*}
\hat{\Psi} \Psi_{ij} = \sqrt{2\pi\hbar} (\varphi^* \otimes_{\sigma,S} \psi) \star_{\sigma,S} (\varphi_i^* \otimes_{\sigma,S} \varphi_j)
= \braket{\varphi | \varphi_j}_{L^2} (\varphi_i^* \otimes_{\sigma,S} \psi)
= \varphi_i^* \otimes_{\sigma,S} (\hat{\rho} \varphi_j)
= (\hat{1} \otimes_{\sigma,S} \hat{\rho}) \Psi_{ij},
\end{equation*}
where $\hat{\rho} = \braket{\varphi | \,\cdot\,}_{L^2} \psi$, which proves formula (\ref{eq:9.14}).

It is sufficient to prove formula (\ref{eq:9.15}) for basis functions $\Psi_{ij}$. From (\ref{eq:9.14}) it follows that operators corresponding to the
basis functions $\Psi_{ij}$ can be written in a form
\begin{equation*}
\hat{\Psi}_{ij} = \hat{1} \otimes_{\sigma,S} \hat{\rho}_{ij},
\end{equation*}
where $\hat{\rho}_{ij} = \braket{\varphi_i | \,\cdot\,}_{L^2} \varphi_j$. This implies that
\begin{equation*}
\braket{\hat{\Psi}_{ij} | \hat{\Psi}_{kl}}_{\hat{\mathcal{H}}} = \delta_{ik} \delta_{jl} = \braket{\hat{\rho}_{ij} | \hat{\rho}_{kl}}_{\mathcal{S}^2},
\end{equation*}
which proves formula (\ref{eq:9.15}). Formula (\ref{eq:9.13}) is an immediate consequence of formulae (\ref{eq:9.14}) and (\ref{eq:9.15}).
\end{proof}

From the above theorem follows that states can be naturally identified with appropriate operators on the Hilbert space $L^2(\mathbb{R})$. For instance,
if $\Psi_{\textrm{pure}} = \varphi^* \otimes_{\sigma,S} \varphi$ is a pure state then $\hat{\Psi}_{\textrm{pure}}
= \hat{1} \otimes_{\sigma,S} \hat{\rho}_{\textrm{pure}}$ where $\hat{\rho}_{\textrm{pure}} = \braket{\varphi | \,\cdot\,}_{L^2} \varphi$.
If $\Psi_{\textrm{mix}} = \sum_\lambda p_\lambda \Psi_{\textrm{pure}}^{(\lambda)} = \sum_\lambda p_\lambda \left( \varphi^{(\lambda)} \right)^*
\otimes_{\sigma,S} \varphi^{(\lambda)}$ is a mixed state then $\hat{\Psi}_{\textrm{mix}} = \hat{1} \otimes_{\sigma,S} \hat{\rho}_{\textrm{mix}}$ where
\begin{equation*}
\hat{\rho}_{\textrm{mix}} = \sum_\lambda p_\lambda \hat{\rho}_{\textrm{pure}}^{(\lambda)}
= \sum_\lambda p_\lambda \braket{\varphi^{(\lambda)} | \,\cdot\,}_{L^2} \varphi^{(\lambda)}.
\end{equation*}
Pure and mixed state operators $\hat{\rho} \in \mathcal{S}^2(L^2(\mathbb{R}))$ are called \emph{density operators}.

From the below theorem follows that observables can be naturally identified with operators defined on the Hilbert space $L^2(\mathbb{R})$. This theorem
is also the key theorem from which formulae for the expectation values of observables and the time evolution of the observables and states, represented
as operators in $L^2(\mathbb{R})$, follows.
\begin{theorem}
\label{thm:9.5}
Let $A \in \mathcal{A}_Q$ and $\Psi \in \mathcal{H}$ be such that $\Psi = \varphi^* \otimes_{\sigma,S} \psi$ for $\varphi,\psi \in L^2(\mathbb{R})$, then
\begin{equation*}
A_L \star_{\sigma,S} \Psi = A_{\sigma,S}(\hat{q}_{\sigma,S},\hat{p}_{\sigma,S}) \Psi
= \varphi^* \otimes_{\sigma,S} A_{\sigma,S}(\hat{q},\hat{p}) \psi,
\end{equation*}
if $\psi \in D(A_{\sigma,S}(\hat{q},\hat{p}))$ and
\begin{equation*}
A_R \star_{\sigma,S} \Psi = A_{\sigma,S}(\hat{q}^*_{\bar{\sigma},S},\hat{p}^*_{\bar{\sigma},S}) \Psi
= \left( A^\dagger_{\sigma,S}(\hat{q},\hat{p}) \varphi \right)^* \otimes_{\sigma,S} \psi,
\end{equation*}
if $\varphi \in D(A^\dagger_{\sigma,S}(\hat{q},\hat{p}))$,
where $A_{\sigma,S}(\hat{q},\hat{p})$ is a $(\sigma,S)$-ordered operator function of canonical operators of position $\hat{q} = x$ and momentum
$\hat{p} = -i\hbar \partial_x$, acting in the Hilbert space $L^2(\mathbb{R})$, and $D(\hat{A})$ denotes a domain of an operator $\hat{A}$.
\end{theorem}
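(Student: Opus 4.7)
The plan is to reduce the general $(\sigma,S)$-statement to the pure $\sigma$-statement, and then verify the $\sigma$-statement by a direct change of variables in the integral representations. Concretely, since $\Psi = \varphi^* \otimes_{\sigma,S} \psi$ means by the definition in equation~(\ref{eq:9.1}) that $\Psi = S(\varphi^* \otimes_\sigma \psi)$, and since the defining equation of $\star_{\sigma,S}$ gives $A \star_{\sigma,S} \Psi = S(S^{-1}A \star_\sigma (\varphi^* \otimes_\sigma \psi))$, while $A_{\sigma,S}(\hat{q},\hat{p}) = (S^{-1}A)_\sigma(\hat{q},\hat{p})$ by definition, the general identity
\begin{equation*}
A \star_{\sigma,S} \Psi = \varphi^* \otimes_{\sigma,S} A_{\sigma,S}(\hat{q},\hat{p})\psi
\end{equation*}
follows at once from the corresponding $S=\mathrm{id}$ identity $A \star_\sigma (\varphi^* \otimes_\sigma \psi) = \varphi^* \otimes_\sigma A_\sigma(\hat{q},\hat{p})\psi$ simply by applying $S$ to both sides. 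The equality with $A_{\sigma,S}(\hat{q}_{\sigma,S},\hat{p}_{\sigma,S})\Psi$ is already Theorem~\ref{thm:6.5.1}. So the real content is the $\sigma$-case.

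For the $\sigma$-case, I would substitute the integral formula (\ref{eq:9.2}) for $\varphi^* \otimes_\sigma \psi$ into the integral representation (\ref{eq:6.2.10}) of the $\star_\sigma$-product. This yields a triple integral in variables $\xi,\eta,y$ whose integrand contains $\mathcal{F}A(\xi,\eta)$, an exponential factor $e^{\frac{i}{\hbar}(\xi x - \eta p)} e^{-\frac{i}{\hbar}(p-\sigma\xi)y}$, and the product $\varphi^*(x-\bar{\sigma}\eta-\bar{\sigma}y)\psi(x-\bar{\sigma}\eta+\sigma y)$. On the other side, using the Baker--Campbell--Hausdorff form (\ref{eq:6.4.15}) together with $\hat{q}=x$, $\hat{p}=-i\hbar\partial_x$ on $L^2(\mathbb{R})$, one computes
\begin{equation*}
(A_\sigma(\hat{q},\hat{p})\psi)(x) = \frac{1}{2\pi\hbar}\iint \mathcal{F}A(\xi,\eta)\, e^{\frac{i}{\hbar}\xi x}\psi(x-\eta)\, e^{-\frac{i}{\hbar}\sigma\xi\eta}\, d\xi\, d\eta,
\end{equation*}
and then expanding $\varphi^* \otimes_\sigma (A_\sigma(\hat{q},\hat{p})\psi)$ via (\ref{eq:9.2}) produces another triple integral.

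The two triple integrals are then matched by the substitution $y' = y + \eta$ (with $y$ the integration variable coming from the definition of $\otimes_\sigma$ on the left-hand side). Under this substitution the arguments of $\varphi^*$ and $\psi$ become $x - \bar{\sigma}y'$ and $x + \sigma y' - \eta$ respectively, and the exponential factors combine using the elementary identity $-\frac{i}{\hbar}\eta p + \frac{i}{\hbar}(p-\sigma\xi)\eta = -\frac{i}{\hbar}\sigma\xi\eta$, producing precisely the integrand coming from $\varphi^* \otimes_\sigma A_\sigma(\hat{q},\hat{p})\psi$. The hypothesis $\psi \in D(A_\sigma(\hat{q},\hat{p}))$ is what is needed to justify the Fubini exchange and to ensure the right-hand side is well defined.

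The right multiplication statement I would treat by essentially the same procedure applied to the second line of (\ref{eq:6.2.10}); alternatively, one can deduce it by complex conjugation using the identity $(f \star_\sigma g)^* = g^* \star_{\bar{\sigma}} f^*$ together with $\Psi^* = \psi^* \otimes_{\bar{\sigma}} \varphi$ (which follows directly from (\ref{eq:9.2}) after swapping $\sigma \leftrightarrow \bar{\sigma}$), and then reading off the adjoint using the identity $A^\dagger_{\sigma,S}(\hat{q},\hat{p}) = A^*_{\bar{\sigma},\bar{S}}(\hat{q},\hat{p})$ proven earlier. I expect the main obstacle to be purely bookkeeping: tracking the $\sigma/\bar{\sigma}$ splits and the factors of $\hbar$ across the change of variables without sign errors. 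Since (\ref{eq:6.2.10}) was proved for Schwartz functions, to cover general $\psi \in D(A_\sigma(\hat{q},\hat{p}))$ I would argue by density in the $L^2$-topology, using the continuity estimate (\ref{eq:6.3.24}) already available from the previous section.
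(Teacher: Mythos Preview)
Your proposal is correct and takes a genuinely different route from the paper's own proof. You reduce to $S=\mathrm{id}$ by the clean observation that $\varphi^*\otimes_{\sigma,S}\psi = S(\varphi^*\otimes_\sigma\psi)$ and $A_{\sigma,S}(\hat q,\hat p)=(S^{-1}A)_\sigma(\hat q,\hat p)$, and then verify the $\sigma$-case by writing both sides as triple integrals and matching them via the single substitution $y'=y+\eta$. This is a direct, elementary computation, and the bookkeeping you outline is exactly right (I checked the exponents and the arguments of $\varphi^*,\psi$).

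The paper instead conjugates by a phase $e^{c xp}$ and uses Baker--Campbell--Hausdorff to transform $\hat q_\sigma,\hat p_\sigma$ into operators that, after an inverse Fourier transform in $p$ and a linear change of variables, act only in a single ``$\xi$'' variable; separation $\tilde\Psi_1(\xi,z)=\varphi_1(\xi)\kappa_1(z)$ then yields the one-dimensional operator identity, and reconstruction of $\Psi$ identifies it with the $\otimes_{\sigma,S}$ form. Your approach is shorter and more transparent for the theorem as stated; the paper's approach is more structural and has the side benefit that the intermediate equation (the analogue of your separated form) is reused verbatim in the proof of Corollary~\ref{col:9.1} to characterize \emph{all} solutions of the $\star$-genvalue equation, not only those already given in tensor-product form. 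Your conjugation argument for the right action and your density remark to pass from Schwartz data to general $\psi\in D(A_\sigma(\hat q,\hat p))$ are both fine.
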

Since the proof of the above theorem is quite long and tedious it was moved to \ref{sec:2}.

From the proof of Theorem \ref{thm:9.5} a connection between eigenvalues and eigenfunctions of the star-genvalue equation and the corresponding
eigenvalue equation can be derived. For the Moyal star-product these spectral results were derived in \cite{Gosson:2008}, see also \cite{Gosson:2005}.
\begin{corollary}
\label{col:9.1}
Every solution of the $\star_{\sigma,S}$-genvalue equation
\begin{equation}
A \star_{\sigma,S} \Psi = a \Psi
\label{eq:9.46}
\end{equation}
for $A \in \mathcal{A}_Q$ and $a \in \mathbb{C}$ is of the form
\begin{equation}
\Psi = \sum_i \varphi_i^* \otimes_{\sigma,S} \psi_i,
\label{eq:9.64}
\end{equation}
where $\varphi_i \in L^2(\mathbb{R})$ are arbitrary and $\psi_i \in L^2(\mathbb{R})$ are the eigenvectors of the operator
$A_{\sigma,S}(\hat{q},\hat{p})$ corresponding to the eigenvalue $a$ spanning the subspace of all eigenvectors of
$A_{\sigma,S}(\hat{q},\hat{p})$ corresponding to the eigenvalue $a$, i.e. $\psi_i$ satisfy the eigenvalue equation
\begin{equation*}
A_{\sigma,S}(\hat{q},\hat{p}) \psi_i = a \psi_i.
\end{equation*}
In particular, when $a$ is nondegenerate, every solution of (\ref{eq:9.46}) is of the form
\begin{equation*}
\Psi = \varphi^* \otimes_{\sigma,S} \psi,
\end{equation*}
where $\varphi \in L^2(\mathbb{R})$ is arbitrary and $\psi \in L^2(\mathbb{R})$ satisfies the eigenvalue equation
\begin{equation*}
A_{\sigma,S}(\hat{q},\hat{p}) \psi = a \psi.
\end{equation*}

Similarly, every solution of the $\star_{\sigma,S}$-genvalue equation
\begin{equation}
\Psi \star_{\sigma,S} B = b \Psi
\label{eq:9.48}
\end{equation}
for $B \in \mathcal{A}_Q$ and $b \in \mathbb{C}$ is of the form
\begin{equation*}
\Psi = \sum_i \psi_i^* \otimes_{\sigma,S} \varphi_i,
\end{equation*}
where $\varphi_i \in L^2(\mathbb{R})$ are arbitrary and $\psi_i \in L^2(\mathbb{R})$ are the eigenvectors of the operator
$B_{\sigma,S}^\dagger(\hat{q},\hat{p})$ corresponding to the eigenvalue $b^*$ spanning the subspace of all eigenvectors of
$B_{\sigma,S}^\dagger(\hat{q},\hat{p})$ corresponding to the eigenvalue $b^*$, i.e. $\psi_i$ satisfy the eigenvalue equation
\begin{equation*}
B_{\sigma,S}^\dagger(\hat{q},\hat{p}) \psi_i = b^* \psi_i.
\end{equation*}
In particular, when $b^*$ is nondegenerate, every solution of (\ref{eq:9.48}) is of the form
\begin{equation*}
\Psi = \psi^* \otimes_{\sigma,S} \varphi,
\end{equation*}
where $\varphi \in L^2(\mathbb{R})$ is arbitrary and $\psi \in L^2(\mathbb{R})$ satisfies the eigenvalue equation
\begin{equation*}
B_{\sigma,S}^\dagger(\hat{q},\hat{p}) \psi = b^* \psi.
\end{equation*}
\end{corollary}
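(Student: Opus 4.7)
The plan is to reduce both star-genvalue equations to ordinary operator eigenvalue problems on $L^2(\mathbb{R})$ using the tensor-product identifications and the action formulas from Theorem \ref{thm:9.5}. Since every $\Psi\in\mathcal{H}$ lives in the $(\sigma,S)$-twisted tensor product $\left(L^2(\mathbb{R})\right)^* \otimes_{\sigma,S} L^2(\mathbb{R})$, I would first fix an orthonormal basis $\{\varphi_i\}$ of $L^2(\mathbb{R})$ and expand $\Psi = \sum_i \varphi_i^* \otimes_{\sigma,S} \psi_i$, where the $\psi_i \in L^2(\mathbb{R})$ are determined by the expansion $\Psi = \sum_{i,j} c_{ij} \Psi_{ij}$ via $\psi_i = \sum_j c_{ij}\varphi_j$. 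Assuming the needed domain condition $\psi_i \in D(A_{\sigma,S}(\hat{q},\hat{p}))$, Theorem \ref{thm:9.5} together with the (anti)linearity of $\otimes_{\sigma,S}$ gives
\begin{equation*}
A \star_{\sigma,S} \Psi = \sum_i \varphi_i^* \otimes_{\sigma,S} A_{\sigma,S}(\hat{q},\hat{p})\psi_i, \qquad a\Psi = \sum_i \varphi_i^* \otimes_{\sigma,S} (a\psi_i).
\end{equation*}

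Next, I would invoke the uniqueness of the expansion: since the family $\{\varphi_i^* \otimes_{\sigma,S} \chi : \chi\in L^2(\mathbb{R})\}_i$ is linearly independent (the first factors form an orthonormal family in the dual), the equality of the two sums forces $A_{\sigma,S}(\hat{q},\hat{p})\psi_i = a\psi_i$ for every $i$. Thus each nonzero $\psi_i$ lies in the eigenspace $\mathrm{Ker}(A_{\sigma,S}(\hat{q},\hat{p}) - a\,\mathrm{Id})$, which yields representation \eqref{eq:9.64} after choosing, if desired, a spanning family for this eigenspace and reindexing. Conversely, any $\Psi$ of the form \eqref{eq:9.64} satisfies \eqref{eq:9.46} by the same computation read backwards. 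The nondegenerate case is an immediate specialization: the eigenspace is one-dimensional and spanned by a single $\psi$, so all $\psi_i$ are scalar multiples of $\psi$ and $\Psi$ collapses to $\varphi^* \otimes_{\sigma,S} \psi$ with $\varphi = \sum_i \lambda_i^* \varphi_i$.

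For the right star-genvalue equation \eqref{eq:9.48} I would argue dually: fix an orthonormal basis $\{\varphi_i\}$ and expand $\Psi = \sum_i \psi_i^* \otimes_{\sigma,S} \varphi_i$, now using the $\varphi_i$ in the second slot. The right-action part of Theorem \ref{thm:9.5} gives $\Psi \star_{\sigma,S} B = \sum_i (B^\dagger_{\sigma,S}(\hat{q},\hat{p})\psi_i)^* \otimes_{\sigma,S} \varphi_i$, while $b\Psi = \sum_i (b^*\psi_i)^* \otimes_{\sigma,S} \varphi_i$ since scalar multiplication by $b$ on $\psi^*$ equals conjugation of multiplication by $b^*$ on $\psi$. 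Matching coefficients against the independent family $\{{\cdot}^* \otimes_{\sigma,S} \varphi_i\}$ yields $B^\dagger_{\sigma,S}(\hat{q},\hat{p})\psi_i = b^*\psi_i$, which is the claimed eigenvalue equation; the nondegenerate specialization is as before.

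The main obstacle I anticipate is not the algebra but the functional-analytic bookkeeping: verifying that the infinite sum decomposition lies in the domain of $A_{\sigma,S}(\hat{q},\hat{p})$ term by term, and that the coefficient-matching step in the tensor product is justified when only $L^2$-convergence (not pointwise) of the expansion is available. This can be handled by restricting first to the subspace on which $A_{\sigma,S}(\hat{q},\hat{p})$ (or $B^\dagger_{\sigma,S}(\hat{q},\hat{p})$) is defined, using the continuity of $\otimes_{\sigma,S}$ to pass to limits, and invoking that the partial maps $\chi \mapsto \varphi_i^* \otimes_{\sigma,S} \chi$ are bounded injections so that equality of sums in $\mathcal{H}$ implies equality of the coefficient vectors in $L^2(\mathbb{R})$.
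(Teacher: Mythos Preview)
Your approach is correct and takes a genuinely different route from the paper. The paper does not expand $\Psi$ in a tensor-product basis and invoke Theorem~\ref{thm:9.5} term by term; instead it goes back inside the proof of Theorem~\ref{thm:9.5} (in \ref{sec:2}), where the equation $A \star_{\sigma,S} \Psi = \Phi_L$ is shown, via the substitution $\Psi = S e^{c_1 xp}\Psi_1$, a partial inverse Fourier transform in $p$, and a linear change of variables, to be equivalent to $A_{\sigma,S}(\xi,-i\hbar\partial_\xi)\tilde{\Psi}_1(\xi,z) = \tilde{\Phi}_1(\xi,z)$. Setting $\Phi_L = a\Psi$ turns this into a parametric eigenvalue equation in $\xi$ for each fixed $z$, so $\tilde{\Psi}_1(\xi,z) = \sum_i \kappa_i(z)\psi_i(\xi)$ with $\{\psi_i\}$ a basis of the $a$-eigenspace; unwinding the transforms gives \eqref{eq:9.64} with $\varphi_i^*(x-\bar\sigma y)=\kappa_i(y-\bar\sigma^{-1}x)$. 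The paper's route has the advantage that the eigenvector structure \emph{emerges} from the transformed equation rather than being assumed: it never needs the a priori hypothesis that the coefficients $\psi_i$ in a basis expansion lie in $D(A_{\sigma,S}(\hat q,\hat p))$, which is precisely the domain obstacle you flag in your last paragraph. Your argument is more structural and transparent (it uses only the tensor-product identification $A\star_{\sigma,S}{} = \hat 1 \otimes_{\sigma,S} A_{\sigma,S}(\hat q,\hat p)$ together with linear independence of the $\varphi_i^*\otimes_{\sigma,S}{}$), but making it fully rigorous requires justifying the term-by-term application of Theorem~\ref{thm:9.5} to an infinite $L^2$-convergent sum when the operator is unbounded --- the step the paper's coordinate-level argument sidesteps.
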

\begin{proof}
Replacing $\Phi_L$, in the proof of Theorem \ref{thm:9.5}, by $a \Psi$ one gets from equation (\ref{eq:2.36a}) that the $\star_{\sigma,S}$-genvalue
equation (\ref{eq:9.46}) is equivalent to the following equation
\begin{equation*}
A_{\sigma,S}(\xi, -i\hbar \partial_\xi) \tilde{\Psi}_1(\xi,z) = a \tilde{\Psi}_1(\xi,z),
\end{equation*}
i.e. $\tilde{\Psi}_1(\xi,z)$ is an eigenvector of the operator $A_{\sigma,S}(\hat{q},\hat{p})$ for every $z$. If $\{ \psi_i \in L^2(\mathbb{R}) \}$
is the basis in the subspace of all eigenvectors of the operator $A_{\sigma,S}(\hat{q},\hat{p})$ corresponding to the eigenvalue $a$ then
$\tilde{\Psi}_1(\xi,z)$, for every $z$, can be written as a linear combination of the basis vectors $\psi_i$
\begin{equation}
\tilde{\Psi}_1(\xi,z) = \sum_i \kappa_i(z) \psi_i(\xi),
\label{eq:9.72}
\end{equation}
where the coefficients $\kappa_i(z) \in \mathbb{C}$ depend on $z$. Since $\Psi \in \mathcal{H}$ the functions $\kappa_i \in L^2(\mathbb{R})$. Now,
from equation (\ref{eq:9.72}), using the analogous arguments as in the proof of Theorem \ref{thm:9.5} it can be proved that $\Psi$ is of the form
(\ref{eq:9.64}) where $\varphi^*_i(x - \bar{\sigma} y) = \kappa_i(y - \bar{\sigma}^{-1} x)$. The second part of the corollary can be proved analogically.
\end{proof}

From Theorem \ref{thm:9.5} it follows that for $\Psi_1 = \varphi_1^* \otimes_{\sigma,S} \psi_1$ and
$\Psi_2 = \varphi_2^* \otimes_{\sigma,S} \psi_2$ where $\varphi_1, \psi_1, \varphi_2, \psi_2 \in L^2(\mathbb{R})$
\begin{subequations}
\label{eq:9.40}
\begin{align}
\braket{\Psi_1 | A_L \star_{\sigma,S} \Psi_2}_{\mathcal{H}} & = \braket{\varphi_2 | \varphi_1}_{L^2}
\braket{\psi_1 | A_{\sigma,S}(\hat{q},\hat{p}) \psi_2}_{L^2},
\label{eq:9.40a} \\
\braket{\Psi_1 | A_R \star_{\sigma,S} \Psi_2}_{\mathcal{H}} & = \braket{A^\dagger_{\sigma,S}(\hat{q},\hat{p}) \varphi_2 | \varphi_1}_{L^2}
\braket{\psi_1 | \psi_2}_{L^2}
= \braket{\varphi_2 | A_{\sigma,S}(\hat{q},\hat{p}) \varphi_1}_{L^2} \braket{\psi_1 | \psi_2}_{L^2}.
\label{eq:9.40b}
\end{align}
\end{subequations}
From equations (\ref{eq:6.4.17}) it follows that
\begin{subequations}
\label{eq:9.44}
\begin{align}
(A_L \star_{\sigma,S} {})^\dagger \Psi & = A^*_{\bar{\sigma},\bar{S}}(\hat{q}_{\sigma,S},\hat{p}_{\sigma,S}) \Psi
= \varphi^* \otimes_{\sigma,S} A^\dagger_{\sigma,S}(\hat{q},\hat{p}) \psi,
\label{eq:9.44a} \\
(A_R \star_{\sigma,S} {})^\dagger \Psi & = A^*_{\bar{\sigma},\bar{S}}(\hat{q}^*_{\bar{\sigma},S},\hat{p}^*_{\bar{\sigma},S}) \Psi
= \left( A_{\sigma,S}(\hat{q},\hat{p}) \varphi \right)^* \otimes_{\sigma,S} \psi,
\label{eq:9.44b}
\end{align}
\end{subequations}
for $\Psi = \varphi^* \otimes_{\sigma,S} \psi$ where $\varphi, \psi \in L^2(\mathbb{R})$. Note, that Corollary \ref{col:9.1} implies that in
the nondegenerate case the solution $\Psi$ to the following pair of $\star_{\sigma,S}$-genvalue equations
\begin{equation}
A \star_{\sigma,S} \Psi = a \Psi, \qquad \Psi \star_{\sigma,S} B = b \Psi,
\label{eq:9.70}
\end{equation}
is unique up to a multiplication constant and is of the form $\Psi = \varphi^* \otimes_{\sigma,S} \psi$, where $\varphi, \psi \in L^2(\mathbb{R})$
satisfy the following eigenvalue equations
\begin{align}
A_{\sigma,S}(\hat{q},\hat{p}) \psi = a \psi, \qquad B_{\sigma,S}^\dagger(\hat{q},\hat{p}) \varphi = b^* \varphi.
\label{eq:9.71}
\end{align}
Hence, the pair of $\star_{\sigma,S}$-genvalue equations (\ref{eq:9.70}) is equivalent to the pair of eigenvalue equations (\ref{eq:9.71}).
In particular, from formula (\ref{eq:9.44b}) it follows that a pair of $\star_{\sigma,S}$-genvalue equations
\begin{align*}
A_L \star_{\sigma,S} \Psi = a \Psi, \qquad (A_R \star_{\sigma,S} {})^\dagger \Psi = a^* \Psi
\end{align*}
have a solution $\Psi$ in the form of a pure state $\Psi = \varphi^* \otimes_{\sigma,S} \varphi$, where $\varphi \in L^2(\mathbb{R})$ is a
solution to the eigenvalue equation
\begin{equation*}
A_{\sigma,S}(\hat{q},\hat{p}) \varphi = a \varphi.
\end{equation*}

From Theorem \ref{thm:9.5} follows also that operators $\hat{A} = A \star_{\sigma,S} {}$, hence in particular observables, can be written as
\begin{equation*}
A \star_{\sigma,S} {} = A_{\sigma,S}(\hat{q}_{\sigma,S},\hat{p}_{\sigma,S})
= \hat{1} \otimes_{\sigma,S} A_{\sigma,S}(\hat{q},\hat{p}).
\end{equation*}
Hence, operators from $\hat{\mathcal{A}}_Q$ can be naturally identified with operator functions $A_{\sigma,S}(\hat{q},\hat{p})$ defined on the
Hilbert space $L^2(\mathbb{R})$. Moreover, from Theorems \ref{thm:9.8} and \ref{thm:9.5} it follows that the action of observables, treated as operator
functions $A_{\sigma,S}(\hat{q}_{\sigma,S},\hat{p}_{\sigma,S})$, on states, treated as operators $\hat{\Psi} \in \hat{\mathcal{H}}$,
is equivalent to the action of observables, treated as operator functions $A_{\sigma,S}(\hat{q},\hat{p})$, on states, treated as operators
$\hat{\rho} \in \mathcal{S}^2(L^2(\mathbb{R}))$. In fact, there holds
\begin{equation*}
A_{\sigma,S}(\hat{q}_{\sigma,S},\hat{p}_{\sigma,S}) \hat{\Psi} = \hat{1} \otimes_{\sigma,S} A_{\sigma,S}(\hat{q},\hat{p}) \hat{\rho}, \qquad
\hat{\Psi} A_{\sigma,S}(\hat{q}_{\sigma,S},\hat{p}_{\sigma,S}) = \hat{1} \otimes_{\sigma,S} \hat{\rho} A_{\sigma,S}(\hat{q},\hat{p}),
\end{equation*}
where $\hat{\Psi} = \hat{1} \otimes_{\sigma,S} \hat{\rho}$.

Using Theorem \ref{thm:9.5} a formula for the expectation value of observables represented as operators on the Hilbert space $L^2(\mathbb{R})$ can
be derived. Namely, there holds
\begin{theorem}
Let $\hat{A} \in \hat{\mathcal{A}}_Q$ be some observable and $A_{\sigma,S}(\hat{q},\hat{p})$ be a corresponding operator in the Hilbert space
$L^2(\mathbb{R})$. Moreover, let $\Psi = \sum_\lambda p_\lambda \Psi_{\mathrm{pure}}^{(\lambda)} = \sum_\lambda p_\lambda \left( \varphi^{(\lambda)}
\right)^* \otimes_{\sigma,S} \varphi^{(\lambda)} \in \mathcal{H}$ be some mixed state and $\hat{\rho} = \sum_\lambda p_\lambda
\braket{\varphi^{(\lambda)}|\,\cdot\,}_{L^2} \varphi^{(\lambda)}$ the corresponding density operator. Then there holds
\begin{equation}
\braket{\hat{A}}_{\Psi} = \sum_\lambda p_\lambda \braket{\varphi^{(\lambda)} | A_{\sigma,S}(\hat{q},\hat{p}) \varphi^{(\lambda)}}_{L^2}
= \tr(\hat{\rho} A_{\sigma,S}(\hat{q},\hat{p})).
\label{eq:9.50}
\end{equation}
\end{theorem}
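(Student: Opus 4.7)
The plan is to reduce to the pure-state case via linearity, apply Theorem \ref{thm:9.5} to convert the $\star_{\sigma,S}$-multiplication into an ordinary operator action on $L^2(\mathbb{R})$, and then identify the resulting bilinear expression with a trace of a rank-one operator against $A_{\sigma,S}(\hat{q},\hat{p})$.

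First I would unfold the definition of the expectation value, $\braket{\hat{A}}_\Psi = \iint (A \star_{\sigma,S} \rho)(x,p)\,\d{x}\d{p}$ with $\rho = \Psi/\sqrt{2\pi\hbar}$, and use linearity of $\star_{\sigma,S}$ in the second argument together with linearity of the integral to reduce the problem to establishing, for a single pure state $\Psi_{\mathrm{pure}}^{(\lambda)} = (\varphi^{(\lambda)})^* \otimes_{\sigma,S} \varphi^{(\lambda)}$, that
\begin{equation*}
\frac{1}{\sqrt{2\pi\hbar}} \iint \bigl( A \star_{\sigma,S} \Psi_{\mathrm{pure}}^{(\lambda)} \bigr)(x,p)\,\d{x}\d{p}
= \braket{\varphi^{(\lambda)} | A_{\sigma,S}(\hat{q},\hat{p})\,\varphi^{(\lambda)}}_{L^2}.
\end{equation*}
The exchange of $\sum_\lambda$ with the integral is the point that needs a brief justification (appealing to $L^2$-convergence of the mixed-state sum and continuity of $A \star_{\sigma,S} {}$ on its domain, in line with the paper's standing topological assumptions); this is the main, and only, technical wrinkle.

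Second, for a pure state Theorem \ref{thm:9.5} rewrites the integrand as a twisted tensor product,
\begin{equation*}
A \star_{\sigma,S} \bigl( (\varphi^{(\lambda)})^* \otimes_{\sigma,S} \varphi^{(\lambda)} \bigr)
= (\varphi^{(\lambda)})^* \otimes_{\sigma,S} A_{\sigma,S}(\hat{q},\hat{p})\,\varphi^{(\lambda)},
\end{equation*}
provided $\varphi^{(\lambda)} \in D(A_{\sigma,S}(\hat{q},\hat{p}))$. So the target reduces to computing $\iint (\varphi^* \otimes_{\sigma,S} \psi)(x,p)\,\d{x}\d{p}$ for general $\varphi,\psi \in L^2(\mathbb{R})$, which I would handle by the same mechanism used to prove Theorem \ref{thm:9.3}: expand the definition (\ref{eq:9.2}), push the $(x,p)$-integration inside $S$ by the assumption that $S$ vanishes under the integral sign, then perform the $p$-integration to generate $2\pi\hbar\,\delta(y)$, and finally use this delta to collapse $\varphi^*(x-\bar{\sigma} y)\psi(x+\sigma y)$ to $\varphi^*(x)\psi(x)$. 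The result is $\sqrt{2\pi\hbar}\,\braket{\varphi|\psi}_{L^2}$; setting $\varphi=\varphi^{(\lambda)}$ and $\psi = A_{\sigma,S}(\hat{q},\hat{p})\,\varphi^{(\lambda)}$ and cancelling the prefactor $1/\sqrt{2\pi\hbar}$ yields the first equality in (\ref{eq:9.50}).

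Finally, to get the trace form I would use that each summand $\hat{\rho}_\lambda := \braket{\varphi^{(\lambda)}|\,\cdot\,}_{L^2}\varphi^{(\lambda)}$ is a rank-one operator for which
\begin{equation*}
\tr\bigl(\hat{\rho}_\lambda A_{\sigma,S}(\hat{q},\hat{p})\bigr)
= \braket{\varphi^{(\lambda)} | A_{\sigma,S}(\hat{q},\hat{p})\,\varphi^{(\lambda)}}_{L^2},
\end{equation*}
obtained directly by evaluating the trace in any orthonormal basis completing $\varphi^{(\lambda)}$, or equivalently from the identity $\tr(|\psi\rangle\langle\varphi| B) = \braket{\varphi|B\psi}_{L^2}$. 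Summing over $\lambda$ with weights $p_\lambda$ and invoking linearity and continuity of the trace on $\mathcal{S}^2(L^2(\mathbb{R}))$ gives $\tr(\hat{\rho}\,A_{\sigma,S}(\hat{q},\hat{p})) = \sum_\lambda p_\lambda \braket{\varphi^{(\lambda)}|A_{\sigma,S}(\hat{q},\hat{p})\varphi^{(\lambda)}}_{L^2}$, completing the chain of equalities in (\ref{eq:9.50}).
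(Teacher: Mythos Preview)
Your proposal is correct and follows essentially the same route as the paper: linearity reduces to pure states, Theorem~\ref{thm:9.5} converts $A \star_{\sigma,S} \Psi_{\mathrm{pure}}^{(\lambda)}$ into $(\varphi^{(\lambda)})^* \otimes_{\sigma,S} A_{\sigma,S}(\hat{q},\hat{p})\varphi^{(\lambda)}$, the $p$-integration produces a $\delta(y)$ that collapses the twisted tensor product to $\braket{\varphi^{(\lambda)}|A_{\sigma,S}(\hat{q},\hat{p})\varphi^{(\lambda)}}_{L^2}$, and the trace identification is immediate from the rank-one form of $\hat{\rho}_\lambda$. You are in fact a bit more careful than the paper about the interchange of $\sum_\lambda$ with the integral and about spelling out the rank-one trace identity, but the argument is the same.
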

\begin{proof}
From Theorem \ref{thm:9.5} it follows
\begin{align*}
\braket{\hat{A}}_{\Psi} & = \frac{1}{\sqrt{2\pi\hbar}} \iint \d{x}\d{p} (A \star_{\sigma,S} \Psi)(x,p)
= \frac{1}{\sqrt{2\pi\hbar}} \sum_{\lambda} p_{\lambda} \iint \d{x}\d{p} (A \star_{\sigma,S} \Psi^{(\lambda)}_{\textrm{pure}})(x,p) \displaybreak[0] \\
& = \frac{1}{\sqrt{2\pi\hbar}} \sum_{\lambda} p_{\lambda} \iint \d{x}\d{p} \left( \big( \varphi^{(\lambda)} \big)^* \otimes_{\sigma,S}
A_{\sigma,S}(\hat{q},\hat{p}) \varphi^{(\lambda)} \right)(x,p) \displaybreak[0] \\
& = \frac{1}{2\pi\hbar} \sum_{\lambda} p_{\lambda} \iint \d{x}\d{p} \int \d{y} e^{-\frac{i}{\hbar} py} \big( \varphi^{(\lambda)} \big)^* (x - \bar{\sigma} y)
A_{\sigma,S}(\hat{q},\hat{p}) \varphi^{(\lambda)}(x + \sigma y) \displaybreak[0] \\
& = \sum_{\lambda} p_{\lambda} \iint \d{x}\d{y} \delta(y) \big( \varphi^{(\lambda)} \big)^* (x - \bar{\sigma} y) A_{\sigma,S}(\hat{q},\hat{p})
\varphi^{(\lambda)}(x + \sigma y) \displaybreak[0] \\
& = \sum_{\lambda} p_{\lambda} \int \d{x} \big( \varphi^{(\lambda)} \big)^* (x) A_{\sigma,S}(\hat{q},\hat{p}) \varphi^{(\lambda)}(x) \displaybreak[0] \\
& = \sum_{\lambda} p_{\lambda} \braket{\varphi^{(\lambda)}|A_{\sigma,S}(\hat{q},\hat{p}) \varphi^{(\lambda)}}_{L^2}
= \tr(\hat{\rho} A_{\sigma,S}(\hat{q},\hat{p})).
\end{align*}
\end{proof}

\begin{corollary}
\begin{equation}
\braket{\hat{A}}_{\Psi} = \sum_{\lambda} p_{\lambda} \braket{\Psi_{\mathrm{pure}}^{(\lambda)} | A_L \star_{\sigma,S}
\Psi_{\mathrm{pure}}^{(\lambda)}}_{\mathcal{H}}
= \sum_{\lambda} p_{\lambda} \braket{\Psi_{\mathrm{pure}}^{(\lambda)} | A_R \star_{\sigma,S} \Psi_{\mathrm{pure}}^{(\lambda)}}_{\mathcal{H}}
\label{eq:9.51}
\end{equation}
\end{corollary}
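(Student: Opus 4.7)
The plan is to reduce both equalities to the expectation value formula (\ref{eq:9.50}) that was just established, by direct application of the inner product formulas (\ref{eq:9.40a}) and (\ref{eq:9.40b}) specialised to the case where both arguments are the same pure state.

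First I would fix $\lambda$ and apply formula (\ref{eq:9.40a}) with $\Psi_1 = \Psi_2 = \Psi_{\mathrm{pure}}^{(\lambda)} = (\varphi^{(\lambda)})^* \otimes_{\sigma,S} \varphi^{(\lambda)}$. Since the corresponding $\varphi_1,\varphi_2,\psi_1,\psi_2$ from (\ref{eq:9.40a}) all coincide with $\varphi^{(\lambda)}$, and because $\varphi^{(\lambda)}$ is normalized, the factor $\braket{\varphi_2|\varphi_1}_{L^2}$ is equal to $1$, leaving
\begin{equation*}
\braket{\Psi_{\mathrm{pure}}^{(\lambda)} | A_L \star_{\sigma,S} \Psi_{\mathrm{pure}}^{(\lambda)}}_{\mathcal{H}}
= \braket{\varphi^{(\lambda)} | A_{\sigma,S}(\hat{q},\hat{p}) \varphi^{(\lambda)}}_{L^2}.
\end{equation*}

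Next I would do the same with (\ref{eq:9.40b}). This time the factor $\braket{\psi_1|\psi_2}_{L^2}$ becomes $1$ by normalization, and the remaining factor is again $\braket{\varphi^{(\lambda)}|A_{\sigma,S}(\hat{q},\hat{p})\varphi^{(\lambda)}}_{L^2}$. Hence the left and right $\star_{\sigma,S}$-multiplications yield identical pure-state matrix elements.

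Finally I would multiply both identities by $p_\lambda$ and sum over $\lambda$. Comparing with the previous theorem, which states that $\braket{\hat{A}}_{\Psi}$ equals exactly $\sum_\lambda p_\lambda \braket{\varphi^{(\lambda)}|A_{\sigma,S}(\hat{q},\hat{p})\varphi^{(\lambda)}}_{L^2}$ (see (\ref{eq:9.50})), both sums reproduce $\braket{\hat{A}}_\Psi$, establishing the chain of equalities. There is no real obstacle here: the only point worth being careful about is that the mixed-state decomposition $\Psi = \sum_\lambda p_\lambda \Psi_{\mathrm{pure}}^{(\lambda)}$ is pushed through the sesquilinear pairing using linearity in the second argument, which is legitimate because the $\star_{\sigma,S}$-product is continuous in $\mathcal{H}$ by (\ref{eq:6.5.7}), so the pairings $\braket{\cdot \,|\, A_{L/R} \star_{\sigma,S}\, \cdot\,}_{\mathcal{H}}$ are jointly continuous and the series converges.
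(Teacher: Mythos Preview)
Your proof is correct and follows essentially the same approach as the paper: both arguments combine the expectation value formula (\ref{eq:9.50}) with the inner product identities (\ref{eq:9.40a}) and (\ref{eq:9.40b}), using normalization of $\varphi^{(\lambda)}$ to kill the extra $L^2$ factor. The only cosmetic difference is direction---the paper starts from (\ref{eq:9.50}) and inserts $\braket{\varphi^{(\lambda)}|\varphi^{(\lambda)}}_{L^2}=1$ to reach the $\mathcal{H}$-inner product, while you start from the $\mathcal{H}$-inner product and reduce---and your closing remark about continuity is unnecessary here since the sum over $\lambda$ is identified term by term with the already-established sum in (\ref{eq:9.50}).
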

\begin{proof}
Equation (\ref{eq:9.51}) follows immediately from (\ref{eq:9.50}) and (\ref{eq:9.40}) as from one side
\begin{equation*}
\braket{\hat{A}}_{\Psi} = \sum_{\lambda} p_{\lambda} \braket{\varphi^{(\lambda)} | A_{\sigma,S}(\hat{q},\hat{p}) \varphi^{(\lambda)}}_{L^2}
= \sum_{\lambda} p_{\lambda} \braket{\varphi^{(\lambda)} | \varphi^{(\lambda)}}_{L^2}
\braket{\varphi^{(\lambda)} | A_{\sigma,S}(\hat{q},\hat{p}) \varphi^{(\lambda)}}_{L^2}
= \sum_{\lambda} p_{\lambda} \braket{\Psi_{\textrm{pure}}^{(\lambda)} | A_L \star_{\sigma,S} \Psi_{\textrm{pure}}^{(\lambda)}}_{\mathcal{H}}
\end{equation*}
and from the other side
\begin{equation*}
\braket{\hat{A}}_{\Psi} = \sum_{\lambda} p_{\lambda} \braket{\varphi^{(\lambda)} | A_{\sigma,S}(\hat{q},\hat{p}) \varphi^{(\lambda)}}_{L^2}
= \sum_{\lambda} p_{\lambda} \braket{A^\dagger_{\sigma,S}(\hat{q},\hat{p}) \varphi^{(\lambda)} | \varphi^{(\lambda)}}_{L^2}
\braket{\varphi^{(\lambda)} | \varphi^{(\lambda)}}_{L^2}
= \sum_{\lambda} p_{\lambda} \braket{\Psi_{\textrm{pure}}^{(\lambda)} | A_R \star_{\sigma,S} \Psi_{\textrm{pure}}^{(\lambda)}}_{\mathcal{H}}.
\end{equation*}
\end{proof}

Using the results of this section it is possible to prove a useful property of pure states used in Section \ref{subsec:6.7}. Namely
\begin{theorem}
\label{thm:9.6}
Let $A \in \mathcal{A}_Q$. A pure state function $\Psi = \varphi^* \otimes_{\sigma,S} \varphi \in \mathcal{H}$ satisfies the equation
\begin{equation}
[A,\Psi] = 0
\label{eq:9.52}
\end{equation}
if and only if it satisfies the following pair of $\star_{\sigma,S}$-genvalue equations
\begin{equation}
A \star_{\sigma,S} \Psi = a \Psi, \quad \Psi \star_{\sigma,S} A = a \Psi,
\label{eq:9.53}
\end{equation}
for some $a \in \mathbb{C}$.
\end{theorem}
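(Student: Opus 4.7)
The plan is to handle the two directions of the equivalence separately, with the easy one first. If (\ref{eq:9.53}) holds with a common scalar $a$, subtracting the two $\star_{\sigma,S}$-genvalue equations gives $[A,\Psi] = 0$ at once, so the only real content lies in the implication (\ref{eq:9.52}) $\Rightarrow$ (\ref{eq:9.53}).

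For that converse I would apply Theorem~\ref{thm:9.5} to the pure state $\Psi = \varphi^* \otimes_{\sigma,S} \varphi$, obtaining
\begin{equation*}
A \star_{\sigma,S} \Psi = \varphi^* \otimes_{\sigma,S} A_{\sigma,S}(\hat{q},\hat{p})\varphi, \qquad
\Psi \star_{\sigma,S} A = \bigl(A^\dagger_{\sigma,S}(\hat{q},\hat{p})\varphi\bigr)^* \otimes_{\sigma,S} \varphi.
\end{equation*}
Writing $\psi := A_{\sigma,S}(\hat{q},\hat{p})\varphi$ and $\chi := A^\dagger_{\sigma,S}(\hat{q},\hat{p})\varphi$, the hypothesis $[A,\Psi] = 0$ becomes the single tensor identity $\varphi^* \otimes_{\sigma,S} \psi = \chi^* \otimes_{\sigma,S} \varphi$ in $\mathcal{H}$.

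The crux of the proof is to extract from this identity a single scalar $a \in \mathbb{C}$ satisfying $\psi = a\varphi$ and $\chi = a^*\varphi$. I would pull the tensor identity back through the isomorphism $S\mathcal{F}_y T_\sigma$ of (\ref{eq:9.1}) into the ordinary tensor product $(L^2(\mathbb{R}))^* \otimes L^2(\mathbb{R}) \cong L^2(\mathbb{R}^2)$, where it reduces to the pointwise equality $\varphi^*(x)\psi(y) = \chi^*(x)\varphi(y)$. Excluding the trivial case $\varphi \equiv 0$ (in which $\Psi = 0$ and the claim is vacuous), fix any $x_0$ with $\varphi(x_0) \neq 0$ and set $a := \chi^*(x_0)/\varphi^*(x_0)$; the standard rank-one tensor uniqueness argument then yields $\psi = a\varphi$, and symmetrically $\chi = a^*\varphi$. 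Substituting these back into the formulas from Theorem~\ref{thm:9.5} gives $A \star_{\sigma,S} \Psi = a\Psi$ and $\Psi \star_{\sigma,S} A = (a^*\varphi)^* \otimes_{\sigma,S} \varphi = a\Psi$, which is exactly (\ref{eq:9.53}) with a common $a$. The only nontrivial step is the rank-one uniqueness, but that collapses to a one-line pointwise division once the twisted tensor product has been unwound via $S\mathcal{F}_y T_\sigma$; everything else is pure bookkeeping of Theorem~\ref{thm:9.5}.
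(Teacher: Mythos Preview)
Your proof is correct, but it takes a different route from the paper's. The paper argues purely within the $\star$-algebra: starting from $[A,\Psi]=0$, it right-multiplies by $\Psi$ and invokes the idempotence $\Psi \star_{\sigma,S} \Psi = \frac{1}{\sqrt{2\pi\hbar}}\Psi$ together with the product formula (\ref{eq:9.56}) for simple tensors to obtain directly
\[
A \star_{\sigma,S} \Psi \;=\; \sqrt{2\pi\hbar}\,\Psi \star_{\sigma,S} (A \star_{\sigma,S} \Psi)
\;=\; \braket{\varphi \,|\, A_{\sigma,S}(\hat{q},\hat{p})\varphi}_{L^2}\,\Psi,
\]
so that the eigenvalue $a$ appears explicitly as the expectation value. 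Your argument instead unwinds the twisted tensor product via $S\mathcal{F}_y T_\sigma$ and uses the elementary rank-one uniqueness $\varphi^*(x)\psi(y)=\chi^*(x)\varphi(y)\Rightarrow \psi=a\varphi,\ \chi=a^*\varphi$. This is equally valid and arguably more transparent from a linear-algebraic standpoint; it bypasses the idempotence axiom entirely, needing only that $\Psi$ is a simple tensor. The paper's route, on the other hand, stays inside the deformation-quantization formalism and yields the physically meaningful identification of $a$ without leaving the $\star$-calculus. One small caveat: your pointwise division at a chosen $x_0$ is a mild abuse for $L^2$ functions; to be fully rigorous you should phrase it via inner products (pair against $\varphi^*\otimes g$ and $f^*\otimes\varphi$) or invoke Fubini on a set of positive measure, but this is routine.
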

\begin{proof}
It is obvious that if $\Psi$ satisfies (\ref{eq:9.53}) then it also satisfies (\ref{eq:9.52}). Lets assume that $\Psi$ satisfies (\ref{eq:9.52}).
Hence, it also satisfies
\begin{equation*}
A \star_{\sigma,S} \Psi \star_{\sigma,S} \Psi = \Psi \star_{\sigma,S} A \star_{\sigma,S} \Psi.
\end{equation*}
From the idempotent property of pure states the above equation implies
\begin{equation}
\frac{1}{\sqrt{2\pi\hbar}} A \star_{\sigma,S} \Psi = \Psi \star_{\sigma,S} A \star_{\sigma,S} \Psi.
\label{eq:9.54}
\end{equation}
From Theorem \ref{thm:9.5} it follows that
\begin{equation}
A \star_{\sigma,S} \Psi = \varphi^* \otimes_{\sigma,S} A_{\sigma,S}(\hat{q},\hat{p}) \varphi.
\label{eq:9.55}
\end{equation}
Now, equations (\ref{eq:9.54}) and (\ref{eq:9.55}) with the help of (\ref{eq:9.56}) give
\begin{equation*}
A \star_{\sigma,S} \Psi = \sqrt{2\pi\hbar} \Psi \star_{\sigma,S} (A \star_{\sigma,S} \Psi)
= \braket{\varphi | A_{\sigma,S}(\hat{q},\hat{p}) \varphi}_{L^2} \Psi = a \Psi,
\end{equation*}
where $a = \braket{\varphi | A_{\sigma,S}(\hat{q},\hat{p}) \varphi}_{L^2}$. The second $\star_{\sigma,S}$-genvalue equation can be derived analogically.
\end{proof}

Finally, lets derive the time evolution of the observables and states represented as operators on the Hilbert space $L^2(\mathbb{R})$. In this case
the time evolution is governed by a Hermitian operator $H_{\sigma,S}(\hat{q},\hat{p})$ corresponding to the Hamiltonian $\hat{H}$. From the
time evolution equation (\ref{eq:6.7.1}) one receives the following evolution equation for density operators $\hat{\rho}$, called the \emph{von Neumann
equation}
\begin{equation}
i\hbar \frac{\partial \hat{\rho}}{\partial t} - [H_{\sigma,S}(\hat{q},\hat{p}),\hat{\rho}] = 0.
\label{eq:9.57}
\end{equation}

For a pure state density operator $\hat{\rho} = \braket{\varphi| \,\cdot\,}_{L^2} \varphi$ equation (\ref{eq:9.57}) takes the form
\begin{equation}
i\hbar \frac{\partial \varphi}{\partial t} = H_{\sigma,S}(\hat{q},\hat{p}) \varphi.
\label{eq:9.58}
\end{equation}
The above equation is called the \emph{time dependent Schr\"odinger equation}.

The equation for stationary states takes now the form
\begin{equation*}
[H_{\sigma,S}(\hat{q},\hat{p}),\hat{\rho}] = 0,
\end{equation*}
which for pure states $\hat{\rho} = \braket{\varphi| \,\cdot\,}_{L^2} \varphi$ is equivalent to such eigenvalue equation
\begin{equation*}
H_{\sigma,S}(\hat{q},\hat{p}) \varphi = E \varphi
\end{equation*}
called the \emph{stationary Schr\"odinger equation}.

The representation in the Hilbert space $L^2(\mathbb{R})$ of the one parameter group of unitary functions $U(t)$ from equation (\ref{eq:6.7.4}) is a
one parameter group of unitary operators
\begin{equation*}
U_{\sigma,S}(\hat{q},\hat{p},t) = e^{-\frac{i}{\hbar}t H_{\sigma,S}(\hat{q},\hat{p})}.
\end{equation*}
The time evolution of a density operator $\hat{\rho}$ can be alternatively expressed by the equation
\begin{equation*}
\hat{\rho}(t) = U_{\sigma,S}(\hat{q},\hat{p},t) \hat{\rho}(0) U_{\sigma,S}(\hat{q},\hat{p},-t).
\end{equation*}
It is then easy to check that the above equation is indeed a solution to the von Neumann equation (\ref{eq:9.57}). Using the unitary operators
$U_{\sigma,S}(\hat{q},\hat{p},t)$ also the time evolution of observables $A_{\sigma,S}(\hat{q},\hat{p})$ can be expressed, similarly as in equation
(\ref{eq:6.7.6})
\begin{equation*}
A_{\sigma,S}(\hat{q},\hat{p},t) = U_{\sigma,S}(\hat{q},\hat{p},-t) A_{\sigma,S}(\hat{q},\hat{p},0) U_{\sigma,S}(\hat{q},\hat{p},t).
\end{equation*}
The corresponding time evolution equation for observables $A_{\sigma,S}(\hat{q},\hat{p})$ from equation (\ref{eq:6.7.7}) takes the form
\begin{equation*}
i\hbar \frac{\d{}}{\d{t}}A_{\sigma,S}(\hat{q},\hat{p},t) - [A_{\sigma,S}(\hat{q},\hat{p},t), H_{\sigma,S}(\hat{q},\hat{p})] = 0.
\end{equation*}
The above equation is called the \emph{Heisenberg equation}.

%%%%%%%%%%%%%%%%%%%%%%%%%%%%%%%%%%%%%%%%%%%%%%%%%%%%%%%%%%%%%%%%%%%%%%%%%%%%%%%%%%%%%%%%%%%%%%%%%%%%%%%%%%%%%%%%%%%%%%%%%%%%%%%%%%%%%%%%%%%%%%%%%%%%%%%%%%
\section{Examples}
\label{sec:12}
In this section some examples of the presented formalism of the phase space quantum mechanics will be given. First a free particle will be considered and
then a simple harmonic oscillator.

\subsection{Free particle}
\label{subsec:12.1}
In this section a free particle will be considered. For simplicity a one dimensional particle (the case of $N = 1$) will be considered. The free particle
is a system, which time evolution is governed by a Hamiltonian $\hat{H}$ induced by the function
\begin{equation}
H(x,p) = \frac{1}{2} p^2,
\label{eq:12.1.1}
\end{equation}
where the mass of the particle $m = 1$. This Hamiltonian describes only the kinetic energy of the particle. It does not contain any terms describing the
potential energy, i.e. there are no forces acting on the particle (the particle is free).

First, lets find a time evolution of a free particle being initially in an arbitrary pure state. To do this it is necessary to solve the time evolution
equation (\ref{eq:6.7.1})
\begin{equation}
i\hbar \frac{\partial \Psi}{\partial t} - [H,\Psi] = 0,
\label{eq:12.1.2}
\end{equation}
with $H$ given by (\ref{eq:12.1.1}) and with the assumption that the solution $\Psi$ is in a form of a pure state, i.e.
$\Psi = \varphi^* \otimes_{\sigma,S} \varphi$ for some function $\varphi \in L^2(\mathbb{R})$. From Section \ref{sec:9} it is known that
the function $\Psi = \varphi^* \otimes_{\sigma,S} \varphi$ is a solution to (\ref{eq:12.1.2}) if and only if function
$\varphi$ is a solution to the Schr\"odinger equation (\ref{eq:9.58}), which for $H$ given by (\ref{eq:12.1.1}) takes the form
\begin{equation}
i\hbar \frac{\partial \varphi}{\partial t} = -\frac{\hbar^2}{2} \frac{\partial^2 \varphi}{\partial x^2}.
\label{eq:12.1.3}
\end{equation}
The simplest solution to equation (\ref{eq:12.1.3}) is of the form of a plain wave
\begin{equation*}
\varphi_p(x,t) = e^{\frac{i}{\hbar}(px - E(p)t)},
\end{equation*}
where $p \in \mathbb{R}$ and $E(p) = \frac{1}{2} p^2$. The general solution to (\ref{eq:12.1.3}) is in the form of a linear combination of the plain wave
solutions $\varphi_p$, i.e. in the form of a wave packet
\begin{equation}
\varphi(x,t) = \frac{1}{\sqrt{2\pi\hbar}} \int f(p) e^{\frac{i}{\hbar}(px - E(p)t)} \d{p}
= \frac{1}{\sqrt{2\pi\hbar}} \int g(p,t) e^{\frac{i}{\hbar} px} \d{p},
\label{eq:12.1.5}
\end{equation}
where $f \in L^2(\mathbb{R})$ and $g(p,t) = f(p) e^{-\frac{i}{\hbar} E(p)t}$. From (\ref{eq:12.1.5}) the solution $\Psi$ of (\ref{eq:12.1.2}) reads
\begin{equation}
\Psi(x,p,t) = (\varphi^* \otimes_{\sigma,S} \varphi)(x,p,t)
= \frac{1}{\sqrt{2\pi\hbar}} S \int \d{p_1} g^*(p - \sigma p_1,t) g(p + \bar{\sigma} p_1,t) e^{\frac{i}{\hbar}p_1 x}.
\label{eq:12.1.6}
\end{equation}

Lets consider some particular cases of the solution (\ref{eq:12.1.6}). Assume that
\begin{equation*}
f(p) = \frac{1}{(2\pi)^{1/4} (\Delta p)^{1/2}} e^{-\frac{(p - p_0)^2}{4(\Delta p)^2}}
\end{equation*}
is a Gaussian function. By (\ref{eq:12.1.5}) the solution $\varphi$ of the Schr\"odinger equation (\ref{eq:12.1.3}) takes the form
\begin{equation*}
\varphi(x,t) = \frac{1}{(2\pi)^{1/4} \sqrt{\Delta x + i\Delta p t}} \exp \left(-\frac{p_0^2}{4(\Delta p)^2} \right)
\exp \left( -\frac{(x - i\frac{\Delta x}{\Delta p} p_0)^2}{4(\Delta x)^2 + 4i\Delta x \Delta p t} \right),
\end{equation*}
where $\Delta x = \frac{\hbar}{2 \Delta p}$. This solution describes the time evolution of a wave packet initially in the form of a Gaussian-like function
\begin{equation*}
\varphi(x,0) = \frac{1}{(2\pi)^{1/4} (\Delta x)^{1/2}} e^{-\frac{x^2}{4(\Delta x)^2}} e^{\frac{i}{\hbar} p_0 x}.
\end{equation*}
It is now possible to calculate the solution $\Psi$ of the time evolution equation (\ref{eq:12.1.2}). For simplicity only the case of $S = 1$
will be considered. The function $\Psi$ takes the form
\begin{equation}
\Psi(x,p,t) = \frac{1}{\sqrt{2\pi \left( (\bar{\sigma}^2 + \sigma^2) \Delta x \Delta p + i(1 - 2\sigma)(\Delta p)^2 t \right)}}
\exp \left( -\frac{(p - p_0)^2}{2(\Delta p)^2} \right)
\exp \left( -\frac{(x - pt + i(1 - 2\sigma) \frac{\Delta x}{\Delta p}(p - p_0))^2}
{4(\bar{\sigma}^2 + \sigma^2)(\Delta x)^2 + 4i(1 - 2\sigma)\Delta x \Delta p t} \right).
\label{eq:12.1.10}
\end{equation}
This solution describes the time evolution of a free particle initially in the state
\begin{equation*}
\Psi(x,p,0) = \frac{1}{\sqrt{2\pi (\bar{\sigma}^2 + \sigma^2) \Delta x \Delta p}}
\exp \left( -\frac{(p - p_0)^2}{2(\Delta p)^2} \right)
\exp \left( -\frac{(x + i(1 - 2\sigma) \frac{\Delta x}{\Delta p}(p - p_0))^2}{4(\bar{\sigma}^2 + \sigma^2)(\Delta x)^2} \right).
\end{equation*}
The state $\Psi$ from (\ref{eq:12.1.10}) greatly simplifies in the case $\sigma = \frac{1}{2}$. For this special case the state $\Psi$ reads
\begin{equation*}
\Psi(x,p,t) = \frac{1}{\sqrt{\pi \Delta x \Delta p}} \exp \left( -\frac{(p - p_0)^2}{2(\Delta p)^2} \right)
\exp \left( -\frac{(x - pt)^2}{2(\Delta x)^2} \right).
\end{equation*}

Lets calculate the expectation values and uncertainties of the position $x$ and momentum $p$ of a free particle described by the state (\ref{eq:12.1.10}).
One easily calculates that
\begin{align*}
&\braket{x}_{\Psi(t)} = \iint x \star_\sigma \Psi(t) \d{x}\d{p} = p_0t, \qquad
\braket{x^2}_{\Psi(t)} = \iint x^2 \star_\sigma \Psi(t) \d{x}\d{p} = (\Delta x)^2 + (\Delta p)^2 t^2 + p_0^2 t^2, \displaybreak[0] \\
&\Delta x(t)  = \sqrt{\braket{x^2}_{\Psi(t)} - \braket{x}_{\Psi(t)}^2} = \sqrt{(\Delta x)^2 + (\Delta p)^2 t^2}, \displaybreak[0] \\
&\braket{p}_{\Psi(t)}  = \iint p \star_\sigma \Psi(t) \d{x}\d{p} = p_0, \qquad
\braket{p^2}_{\Psi(t)}  = \iint p^2 \star_\sigma \Psi(t) \d{x}\d{p} = (\Delta p)^2 + p_0^2, \displaybreak[0] \\
&\Delta p(t) = \sqrt{\braket{p^2}_{\Psi(t)} - \braket{p}_{\Psi(t)}^2} = \Delta p.
\end{align*}
Note, that during the time evolution the uncertainty of the momentum $\Delta p(t)$ of the free particle described by the state (\ref{eq:12.1.10}) do not
change in time and is equal to its initial value $\Delta p$, whereas the uncertainty of the position $\Delta x(t)$ initially equal $\Delta x$ increases
in time. Note also, that the uncertainties of the position and momentum satisfy the Heisenberg uncertainty principle, i.e. $\Delta x(t) \Delta p(t) \ge
\frac{\hbar}{2}$. Moreover, initially the free particle is in a state which minimizes the Heisenberg uncertainty principle since $\Delta x(0) \Delta p(0)
= \Delta x \Delta p = \frac{\hbar}{2}$. Worth noting is also the fact that the expectation value of the momentum $\braket{p}_{\Psi(t)}$ is constant and
equal $p_0$, whereas the expectation value of the position $\braket{x}_{\Psi(t)}$ is equal $p_0 t$. Hence, the time evolution of the free
particle described by the state (\ref{eq:12.1.10}) can be interpreted as the movement of the particle along a straight line with the constant momentum
equal $p_0$, similarly as in the classical case. The difference between the classical and quantum case is that in the quantum case there is some
uncertainty of the position and momentum, in contrast to the classical case where the position and momentum is known precisely.

It is interesting to calculate to which classical state converges the state (\ref{eq:12.1.10}) in the limit $\hbar \to 0^+$. Assume that $\Delta x \propto
\sqrt{\hbar}$ and $\Delta p \propto \sqrt{\hbar}$, then $\frac{\Delta p}{\Delta x} = c = \textrm{const}$. To calculate to which classical state converges
the state $\Psi$ from (\ref{eq:12.1.10}) in the limit $\hbar \to 0^+$ it is necessary to calculate the limit $\lim_{\hbar \to 0^+} \rho$, where
$\rho = \frac{1}{\sqrt{2\pi\hbar}} \Psi$ is the quantum distribution function induced by $\Psi$. The limit $\lim_{\hbar \to 0^+} \rho$ has to be
calculated in the distributional sense, i.e. one have to calculate the limit $\lim_{\hbar \to 0^+} \braket{\rho,\phi}$ for every test function $\phi$.
One easily calculates that
\begin{equation*}
\lim_{\hbar \to 0^+} \braket{\rho,\phi} = \phi(p_0 t, p_0).
\end{equation*}
Hence
\begin{equation*}
\lim_{\hbar \to 0^+} \rho(x,p,t) = \delta(x - p_0 t) \delta(p - p_0).
\end{equation*}
Above equation implies that the state $\Psi$ from (\ref{eq:12.1.10}) describing the free particle converges in the limit $\hbar \to 0^+$ to the classical
pure state describing the free particle moving along a straight line with the constant momentum equal $p_0$.

Lets consider now another particular case of the solution (\ref{eq:12.1.6}). Assume that
\begin{equation*}
f(p) = \delta(p - p_0).
\end{equation*}
By (\ref{eq:12.1.5}) the solution $\varphi$ of the Schr\"odinger equation (\ref{eq:12.1.3}) takes the form of a plain wave
\begin{equation}
\varphi(x,t) = \frac{1}{\sqrt{2\pi\hbar}} e^{\frac{i}{\hbar}(p_0 x - E(p_0)t)}.
\label{eq:12.1.16}
\end{equation}
From (\ref{eq:12.1.16}) the solution $\Psi$ of (\ref{eq:12.1.2}), for the case of the $\star_{\sigma,\alpha,\beta}$-product with $\beta > 0$, reads
\begin{equation}
\Psi(x,p,t) = (\varphi^* \otimes_{\sigma,\alpha,\beta} \varphi)(x,p,t)
= \frac{1}{2\pi\hbar \sqrt{\beta}} e^{-\frac{1}{2\hbar \beta}(p - p_0)^2}.
\label{eq:12.1.17}
\end{equation}
In the limit $\beta \to 0^+$ equation (\ref{eq:12.1.17}) takes the form
\begin{equation*}
\Psi(x,p,t) = \frac{1}{\sqrt{2\pi\hbar}} \delta(p - p_0).
\end{equation*}
Note, that $\Psi$ is not a proper state since it does not belong to the space of states $\mathcal{H}$. Hence, $\Psi$ does not describe a physical system.
It, however, describes an idealized situation of a particle with the momentum known precisely and the position not known at all. Note also, that $\Psi$
does not depend on time $t$, i.e. $\Psi$ can be thought of as a stationary state of the system. In fact, $\Psi$ is a formal
$\star_{\sigma,\alpha,\beta}$-genfunction of $p$ and $H$, i.e. $\Psi$ formally satisfies the following $\star_{\sigma,\alpha,\beta}$-genvalue equations
\begin{align*}
p \star_{\sigma,\alpha,\beta} \Psi & = p_0 \Psi, &
\Psi \star_{\sigma,\alpha,\beta} p & = p_0 \Psi, \\
H \star_{\sigma,\alpha,\beta} \Psi & = E(p_0) \Psi, &
\Psi \star_{\sigma,\alpha,\beta} H & = E(p_0) \Psi.
\end{align*}
Hence, $p_0$ and $E(p_0)$ can be interpreted as the momentum and energy of the particle.

%%%%%%%%%%%%%%%%%%%%%%%%%%%%%%%%%%%%%%%%%%%%%%%%%%%%%%%%%%%%%%%%%%%%%%%%%%%%%%%%%%%%%%%%%%%%%%%%%%%%%%%%%%%%%%%%%%%%%%%%%%%%%%%%%%%%%%%%%%%%%%%%%%%%%%%%%%
\subsection{Simple harmonic oscillator}
\label{subsec:12.2}
\subsubsection{Stationary states of the harmonic oscillator}
In the following example only the special case of the star-product will be considered, namely the $\star_{\sigma,\alpha,\beta}$-product.
Lets consider a Hamiltonian system describing a one dimensional ($N = 1$) simple harmonic oscillator. Its Hamiltonian $\hat{H}$ is induced by the
function
\begin{equation*}
H(x,p) = \frac{1}{2} \left( p^2 + \omega^2 x^2 \right),
\end{equation*}
where $\omega$ is the frequency of oscillations. Note that $H$ is a Hermitian function for every $(\sigma,\alpha,\beta)$-ordering, i.e.
$H \star_{\sigma,\alpha,\beta} {} = (H \star_{\sigma,\alpha,\beta} {})^\dagger$. Lets try to find stationary pure states of the harmonic oscillator.
From Section \ref{subsec:6.7} it is known that the stationary pure states are precisely the solutions of the following pair of
$\star_{\sigma,\alpha,\beta}$-genvalue equations
\begin{equation*}
H \star_{\sigma,\alpha,\beta} \Psi = E \Psi, \qquad
\Psi \star_{\sigma,\alpha,\beta} H = E \Psi,
\end{equation*}
for $E \in \mathbb{R}$. To solve the above equations it is convenient to introduce new coordinates called \emph{holomorphic coordinates}
\cite{Gracia-Bondia:1988}
\begin{equation*}
a(x,p) = \frac{\omega x + ip}{\sqrt{2\hbar\omega}}, \qquad \bar{a}(x,p) = \frac{\omega x - ip}{\sqrt{2\hbar\omega}}.
\end{equation*}
The functions $a$ and $\bar{a}$ are called the \emph{annihilation} and \emph{creation} functions since they decrease and increase the number of
excitations of the vibrational mode with frequency $\omega$ (annihilate and create the quanta of vibrations).
Note, that $a \star_{\sigma,\alpha,\beta} {} = (\bar{a} \star_{\sigma,\alpha,\beta} {})^\dagger$, $\bar{a} \star_{\sigma,\alpha,\beta} {}
= (a \star_{\sigma,\alpha,\beta} {})^\dagger$ and
\begin{equation*}
[a,\bar{a}] = a \star_{\sigma,\alpha,\beta} \bar{a} - \bar{a} \star_{\sigma,\alpha,\beta} a = 1.
\end{equation*}
In this new coordinates the function $H$ inducing the Hamiltonian $\hat{H}$ takes the form
\begin{equation*}
H(a,\bar{a}) = \hbar \omega a \bar{a} = \hbar \omega \left( \bar{a} \star_{\sigma,\alpha,\beta} a + \bar{\lambda} \right)
= \hbar \omega \left( a \star_{\sigma,\alpha,\beta} \bar{a} - \lambda \right),
\end{equation*}
where $\lambda = \frac{1}{2}(1 + \omega \alpha + \omega^{-1} \beta)$ and $\bar{\lambda} := 1 - \lambda = \frac{1}{2}(1 - \omega \alpha - \omega^{-1}
\beta)$.

Lets consider a more general problem of finding a solution to the following pair of $\star_{\sigma,\alpha,\beta}$-genvalue equations
\begin{subequations}
\label{eq:12.2.15}
\begin{align}
H \star_{\sigma,\alpha,\beta} \Psi_{mn} & = E_m \Psi_{mn},
\label{eq:12.2.15a} \\
\Psi_{mn} \star_{\sigma,\alpha,\beta} H & = E_n \Psi_{mn},
\label{eq:12.2.15b}
\end{align}
\end{subequations}
where $m,n$ are numbering the $\star_{\sigma,\alpha,\beta}$-genvalues of $H$. It can be shown that $m,n$ are non-negative integer numbers. The energy
levels $E_n$ of the harmonic oscillator are equal
\begin{equation*}
E_n = (n + \bar{\lambda}) \hbar \omega.
\end{equation*}
Since $H = \hbar \omega(\bar{a} \star_{\sigma,\alpha,\beta} a + \bar{\lambda})$, $\star_{\sigma,\alpha,\beta}$-genvalues of the function
$N := \bar{a} \star_{\sigma,\alpha,\beta} a$ are the natural numbers $n = 0,1,2,\ldots$ and $\star_{\sigma,\alpha,\beta}$-genfunctions are the
$\star_{\sigma,\alpha,\beta}$-genfunctions $\Psi_{mn}$ of $H$, i.e.
\begin{equation*}
N \star_{\sigma,\alpha,\beta} \Psi_{mn} = m \Psi_{mn}, \quad \Psi_{mn} \star_{\sigma,\alpha,\beta} N = n \Psi_{mn}.
\end{equation*}
Hence, the function $N = \bar{a} \star_{\sigma,\alpha,\beta} a$ can be interpreted as an observable of the number of excitations of the vibrational mode
with frequency $\omega$.

Moreover, the normalized solutions of equations (\ref{eq:12.2.15}) can be calculated from the ground state $\Psi_{00}$ according to the equation
\begin{equation}
\Psi_{mn} = \frac{1}{\sqrt{m!n!}} \underbrace{\bar{a} \star_{\sigma,\alpha,\beta} \ldots \star_{\sigma,\alpha,\beta} \bar{a}}_m
\star_{\sigma,\alpha,\beta} \Psi_{00} \star_{\sigma,\alpha,\beta} \underbrace{a \star_{\sigma,\alpha,\beta} \ldots \star_{\sigma,\alpha,\beta} a}_n.
\label{eq:12.2.28}
\end{equation}
On the other hand, the ground state $\Psi_{00}$ takes the form
\begin{subequations}
\label{eq:12.2.29}
\begin{align}
\Psi_{00}(a,\bar{a}) & = \frac{1}{\sqrt{2\pi\hbar}} \frac{\sqrt{(1 - 2\sigma)^2 + (1 + 2 \omega \alpha)(1 + 2 \omega^{-1} \beta)}}
{\bar{\sigma}^2 + \sigma^2 + 2 \alpha \beta + \omega \alpha + \omega^{-1} \beta}
\exp \left( \frac{-(1 + \omega \alpha + \omega^{-1} \beta) a \bar{a}}{\bar{\sigma}^2 + \sigma^2 + 2 \alpha \beta + \omega \alpha
+ \omega^{-1} \beta} \right) \nonumber \\
& \quad \cdot \exp \left( \frac{-\frac{1}{2}(1 - 2\sigma - \omega \alpha + \omega^{-1} \beta) a^2
+ \frac{1}{2}(1 - 2\sigma + \omega \alpha - \omega^{-1} \beta) \bar{a}^2}{\bar{\sigma}^2 + \sigma^2 + 2 \alpha \beta + \omega \alpha
+ \omega^{-1} \beta} \right)
\label{eq:12.2.29a}
\end{align}
or after the change of coordinates
\begin{equation}
\Psi_{00}(x,p) = \frac{1}{\sqrt{2\pi\hbar}} \frac{\sqrt{(1 - 2\sigma)^2 + (1 + 2 \omega \alpha)(1 + 2 \omega^{-1} \beta)}}
{\bar{\sigma}^2 + \sigma^2 + 2 \alpha \beta + \omega \alpha + \omega^{-1} \beta}
\exp \left( \frac{-(1 + 2 \omega^{-1} \beta) \omega^2 x^2
- (1 + 2 \omega \alpha) p^2 - i2(1 - 2\sigma) \omega xp}{2 \hbar \omega(\bar{\sigma}^2 + \sigma^2 + 2 \alpha \beta + \omega \alpha + \omega^{-1} \beta)}
\right).
\label{eq:12.2.29b}
\end{equation}
\end{subequations}

In what follows the $\star_{\sigma,\alpha,\beta}$-genfunctions $\Psi_{mn}$ will be calculated using equation (\ref{eq:12.2.28}). To simplify calculations
the special case of the $(\sigma,\alpha,\beta)$-ordering will be considered, namely the case with $\sigma = \frac{1}{2}$ and $\beta = \omega^2 \alpha$.
To simplify the notation the $\star_{\frac{1}{2},\alpha,\omega^2 \alpha}$-product will be denoted by $\star$. For this special ordering equations
(\ref{eq:12.2.29}) for the ground state take the form
\begin{equation*}
\Psi_{00}(a,\bar{a}) = \frac{1}{\sqrt{2\pi\hbar} \lambda} \exp \left( -\frac{a \bar{a}}{\lambda} \right), \qquad
\Psi_{00}(x,p) = \frac{1}{\sqrt{2\pi\hbar} \lambda} \exp \left( -\frac{p^2 + \omega^2 x^2}{2 \lambda \hbar \omega} \right),
\end{equation*}
where now $\lambda = \frac{1}{2}(1 + 2 \omega \alpha)$. The $\star_{\sigma,\alpha,\beta}$-genfunctions $\Psi_{mn}$ can be now calculated giving
\begin{equation}
\Psi_{mn}(a,\bar{a}) = \frac{1}{\sqrt{m!n!}} \sum_{k=0}^n (-1)^k k! \binom{m}{k} \binom{n}{k} \bar{\lambda}^k \left( \frac{1}{\lambda} \right)^{m+n-k}
\bar{a}^{m-k} a^{n-k} \Psi_{00}(a,\bar{a}).
\label{eq:12.2.40}
\end{equation}
Above equation can be written differently when passing to the polar coordinates $(r,\theta)$
\begin{equation*}
\omega x + ip = r e^{i\theta}.
\end{equation*}
In this new coordinates
\begin{gather*}
a(r,\theta) = \frac{1}{\sqrt{2\hbar\omega}} r e^{i\theta}, \quad \bar{a}(r,\theta) = \frac{1}{\sqrt{2\hbar\omega}} r e^{-i\theta},\quad
r^2 = p^2 + \omega^2 x^2,
\end{gather*}
and equation (\ref{eq:12.2.40}) takes the form
\begin{equation}
\Psi_{mn}(r,\theta) = \frac{1}{\sqrt{2\pi\hbar} \lambda} (-1)^n \sqrt{\frac{n!}{m!}} \frac{\bar{\lambda}^n}{\lambda^m}
\left( \frac{r}{\sqrt{2\hbar\omega}} \right)^{m-n} L_n^{m-n} \left( \frac{r^2}{2 \hbar \omega \lambda \bar{\lambda}} \right)
e^{-i(m-n)\theta} \exp \left( -\frac{r^2}{2 \hbar \omega \lambda} \right),
\label{eq:12.2.41}
\end{equation}
where
\begin{equation*}
L_n^s(x) = \frac{x^{-s} e^x}{n!} \frac{\d{}^n}{\d{x^n}} \left( e^{-x} x^{n+s} \right) = \sum_{k=0}^n (-1)^k \frac{(n+s)!}{(n-k)!(s+k)!k!} x^k
\end{equation*}
are the Laguerre's polynomials. This result for $\lambda =\frac{1}{2}$ was derived in \cite{Groenewold:1946, Bartlett:1949, Fairlie:1964}. The stationary
pure states of the harmonic oscillator are the functions
\begin{equation}
\Psi_{nn}(r,\theta) = \frac{1}{\sqrt{2\pi\hbar} \lambda} (-1)^n \left( \frac{\bar{\lambda}}{\lambda} \right)^n L_n \left( \frac{r^2}{2 \hbar \omega
\lambda \bar{\lambda}} \right) \exp \left( -\frac{r^2}{2 \hbar \omega \lambda} \right),
\label{eq:12.2.42}
\end{equation}
where $L_n(x) = L_n^0(x)$. Equation (\ref{eq:12.2.42}) can be also written in the following form, independent on a coordinate system
\begin{equation*}
\Psi_{nn} = \frac{1}{\sqrt{2\pi\hbar} \lambda} (-1)^n \left( \frac{\bar{\lambda}}{\lambda} \right)^n L_n \left( \frac{H}{\hbar \omega \lambda
\bar{\lambda}} \right) \exp \left( -\frac{H}{\hbar \omega \lambda} \right).
\end{equation*}

Equations (\ref{eq:12.2.41}) and (\ref{eq:12.2.42}) are valid for $\lambda \neq 0,1$ but it can be easily calculated how this equations look like in the
limits $\lambda \to 0$ and $\lambda \to 1$. For the case $\lambda \to 1$ one gets
\begin{align*}
\Psi_{mn}(r,\theta) & = \frac{1}{\sqrt{2\pi\hbar m! n!}} \left(\frac{r}{\sqrt{2\hbar\omega}} \right)^{m+n} e^{-i(m-n)\theta}
\exp \left( -\frac{r^2}{2 \hbar \omega} \right), \\
\Psi_{nn}(r,\theta) & = \frac{1}{\sqrt{2\pi\hbar} n!} \left(\frac{r^2}{2\hbar\omega} \right)^n \exp \left( -\frac{r^2}{2 \hbar \omega} \right).
\end{align*}
Moreover, for the case $\lambda \to 0$ one gets
\begin{align*}
\Psi_{mn}(a,\bar{a}) & = \frac{1}{2\hbar \sqrt{m!n!}} (-1)^{m+n} \sum_{k=0}^n k! \binom{m}{k} \binom{n}{k} \frac{\partial^{m-k}}{\partial a^{m-k}}
\frac{\partial^{n-k}}{\partial \bar{a}^{n-k}} \delta^{(2)}(a), \\
\Psi_{nn}(a,\bar{a}) & = \frac{1}{2\hbar} \sum_{k=0}^n \frac{1}{k!} \binom{n}{k} \frac{\partial^{2k}}{\partial a^k \partial \bar{a}^k} \delta^{(2)}(a).
\end{align*}
It is worth noting that the stationary states $\Psi_{nn}$, for the case $\lambda = 0$, are some distributions which cannot be identified with actual
functions. This shows that the space $\mathcal{H}$ is in general, for certain orderings (for $(\sigma,\alpha,\beta)$-orderings for which
$\alpha,\beta$ are negative), the space of distributions.

It is interesting to check to which classical states converge quantum states $\Psi_{nn}$ in the limit $\hbar \to 0^+$. A quantum distribution function
$\rho_n = \frac{1}{\sqrt{2\pi\hbar}} \Psi_{nn}$ reads
\begin{equation*}
\rho_n(x,p) = \frac{1}{2\pi\hbar \lambda} (-1)^n \left( \frac{\bar{\lambda}}{\lambda} \right)^n L_n \left( \frac{p^2 + \omega^2 x^2}{2 \hbar \omega
\lambda \bar{\lambda}} \right) \exp \left( -\frac{p^2 + \omega^2 x^2}{2 \hbar \omega \lambda} \right).
\end{equation*}
The limit $\hbar \to 0^+$ has to be calculated in a distributional sense, hence the limit $\lim_{\hbar \to 0^+} \braket{\rho_n, \phi}$ has to be
calculated for every test function $\phi$. One have that
\begin{equation*}
\lim_{\hbar \to 0^+} \braket{\rho_n, \phi} = \phi(0,0) = \braket{\delta_{(0,0)}, \phi}.
\end{equation*}
Hence
\begin{equation*}
\lim_{\hbar \to 0^+} \rho_n = \delta_{(0,0)},
\end{equation*}
i.e. the quantum stationary pure states $\rho_n$ of the harmonic oscillator converge, in the limit $\hbar \to 0^+$, to the classical state $(x = 0,p = 0)$
describing a particle with the position and momentum equal 0. This result is not surprising as the state $(x = 0,p = 0)$ is the only classical stationary
pure state of the harmonic oscillator.

\subsubsection{Coherent states of the harmonic oscillator}
As with the case of the stationary states of the harmonic oscillator in what follows only the special case of the $\star_{\sigma,\alpha,\beta}$-product
will be considered. Coherent states of the harmonic oscillator are functions $\Psi_{z_1,z_2} \in \mathcal{H}$ which satisfy the following
$\star_{\sigma,\alpha,\beta}$-genvalue equations
\begin{subequations}
\label{eq:12.2.47}
\begin{align}
a_L \star_{\sigma,\alpha,\beta} \Psi_{z_1,z_2} & = z_1 \Psi_{z_1,z_2},
\label{eq:12.2.47a} \\
\bar{a}_R \star_{\sigma,\alpha,\beta} \Psi_{z_1,z_2} & = z_2^* \Psi_{z_1,z_2},
\label{eq:12.2.47b}
\end{align}
\end{subequations}
where $z_1,z_2 \in \mathbb{C}$. Functions $\Psi_z := \Psi_{z,z}$ are then the admissible pure states. It will be shown later that coherent states are
states which resemble the classical pure states the most. In fact, their time evolution is close to the time evolution of the classical pure states.
Moreover, it can be shown that coherent states minimize the Heisenberg uncertainty principle, i.e. $\Delta x \Delta p = \hbar/2$. This once again shows
that coherent states are the best realization of the classical pure states. Indeed, the classical pure states are those states for which the position and
momentum is known precisely. However, in quantum mechanics we cannot know the precise position and momentum of a particle due to the Heisenberg
uncertainty principle $\Delta x \Delta p \ge \hbar/2$, hence the states which minimize the uncertainty principle are the best realizations of the
classical pure states.

Equations (\ref{eq:12.2.47}), for $z_1 = (\omega x_1 + ip_1) / \sqrt{2 \hbar \omega}$ and $z_2 = (\omega x_2 + ip_2)
/ \sqrt{2 \hbar \omega}$, are equivalent to a system of the following two differential equations:
\begin{subequations}
\label{eq:12.2.49}
\begin{align}
(\omega (x - x_1) + i(p - p_1)) \Psi_{z_1,z_2} + \hbar ((\bar{\sigma} + \omega \alpha) \partial_x + i(\sigma \omega + \beta) \partial_p) \Psi_{z_1,z_2}
& = 0,
\label{eq:12.2.49a} \\
(\omega (x - x_2) - i(p - p_2)) \Psi_{z_1,z_2} + \hbar ((\sigma + \omega \alpha) \partial_x - i(\bar{\sigma} \omega + \beta) \partial_p) \Psi_{z_1,z_2}
& = 0.
\label{eq:12.2.49b}
\end{align}
\end{subequations}

To not receive to complicated equations only the case of $\alpha = \beta = 0$ will be considered. So, the solution to the system of differential
equations (\ref{eq:12.2.49}) for $\alpha = \beta = 0$ reads
\begin{align}
\Psi_{z_1,z_2}(x,p) & = \frac{1}{\sqrt{\pi \hbar \omega (\bar{\sigma}^2 + \sigma^2)}} \exp \left( \frac{i(p_1 - p_2)x}{\hbar} \right)
\exp \left( -\omega \frac{(x - x_1)^2 + (x - x_2)^2}{2 \hbar} \right) \nonumber \\
& \quad \cdot \exp \left( \frac{(\sigma(\omega(x - x_1) + i(p - p_1)) - \bar{\sigma}(\omega(x - x_2) - i(p - p_2)))^2}{2 \hbar \omega(\bar{\sigma}^2
+ \sigma^2)} \right).
\label{eq:12.2.51}
\end{align}
Functions $\Psi_{z_1,z_2}$ which correspond to actual states are those for which $z_1 = z_2 = \bar{z} = (\omega \bar{x} + i\bar{p})/\sqrt{2\hbar \omega}$.
The equation (\ref{eq:12.2.51}) can be written then in the form
\begin{equation*}
\Psi_{\bar{z}}(x,p) = \frac{1}{\sqrt{\pi \hbar \omega(\bar{\sigma}^2 + \sigma^2)}} \exp \left( -\frac{\omega^2 (x - \bar{x})^2}
{2 \hbar \omega(\bar{\sigma}^2 + \sigma^2)} \right) \exp \left( -\frac{(p - \bar{p})^2}{2 \hbar \omega(\bar{\sigma}^2 + \sigma^2)} \right)
\exp \left( i\frac{2(2\sigma - 1)\omega(x - \bar{x})(p - \bar{p})}{2 \hbar \omega(\bar{\sigma}^2 + \sigma^2)} \right).
\end{equation*}
A quantum distribution function induced by $\Psi_{\bar{z}}$ is then given by
\begin{align}
\rho(x,p) & = \frac{1}{\sqrt{2 \pi \hbar}} \Psi_{\bar{z}}(x,p) \nonumber \\
& = \frac{1}{\pi \hbar \sqrt{2 \omega(\bar{\sigma}^2 + \sigma^2)}}
\exp \left( -\frac{\omega^2 (x - \bar{x})^2}{2 \hbar \omega(\bar{\sigma}^2 + \sigma^2)} \right) \exp \left( -\frac{(p - \bar{p})^2}
{2 \hbar \omega(\bar{\sigma}^2 + \sigma^2)} \right)
\exp \left( i\frac{2(2\sigma - 1)\omega(x - \bar{x})(p - \bar{p})}{2 \hbar \omega(\bar{\sigma}^2 + \sigma^2)} \right).
\label{eq:12.2.53}
\end{align}
Note that the expectation value of the position and momentum in a coherent state $\Psi_{\bar{z}}$ is equal respectively $\bar{x}$ and $\bar{p}$.

Lets consider now the time evolution of the quantum distribution functions $\rho$ from equation (\ref{eq:12.2.53}). To find out how the functions $\rho$
develop in time it is necessary to solve the time evolution equation (\ref{eq:6.7.1})
\begin{equation}
i\hbar \frac{\partial \rho}{\partial t} - [H,\rho] = 0,
\label{eq:12.2.54}
\end{equation}
where $H(x,p) = \frac{1}{2}(\omega^2 x^2 + p^2)$. From the definition of the $\star_\sigma$-product it easily follows that the time evolution equation
(\ref{eq:12.2.54}) takes the form
\begin{equation*}
\frac{\partial \rho}{\partial t} - \omega^2 x \frac{\partial \rho}{\partial p} + p \frac{\partial \rho}{\partial x}
- i\hbar \omega^2 \frac{1}{2} (2\sigma - 1) \frac{\partial^2 \rho}{\partial p^2}
+ i\hbar \frac{1}{2} (2\sigma - 1) \frac{\partial^2 \rho}{\partial x^2} = 0.
\end{equation*}
The solution of the above equation with the initial condition equal
\begin{equation*}
\rho(x,p,0) = \frac{1}{\pi \hbar \sqrt{2 \omega(\bar{\sigma}^2 + \sigma^2)}}
\exp \left( -\frac{\omega^2 (x - x_0)^2}{2 \hbar \omega(\bar{\sigma}^2 + \sigma^2)} \right) \exp \left( -\frac{(p - p_0)^2}
{2 \hbar \omega(\bar{\sigma}^2 + \sigma^2)} \right)
\exp \left( i\frac{2(2\sigma - 1)\omega(x - x_0)(p - p_0)}{2 \hbar \omega(\bar{\sigma}^2 + \sigma^2)} \right),
\end{equation*}
reads
\begin{equation*}
\rho(x,p,t) = \frac{1}{\pi \hbar \sqrt{2 \omega(\bar{\sigma}^2 + \sigma^2)}}
\exp \left( -\frac{\omega^2 (x - \bar{x}(t))^2}{2 \hbar \omega(\bar{\sigma}^2 + \sigma^2)} \right) \exp \left( -\frac{(p - \bar{p}(t))^2}
{2 \hbar \omega(\bar{\sigma}^2 + \sigma^2)} \right)
\exp \left( i\frac{2(2\sigma - 1)\omega(x - \bar{x}(t))(p - \bar{p}(t))}{2 \hbar \omega(\bar{\sigma}^2 + \sigma^2)} \right),
\end{equation*}
where
\begin{equation*}
\bar{x}(t)  = x_0 \cos \omega t + \frac{p_0}{\omega} \sin \omega t, \qquad
\bar{p}(t) = -\omega x_0 \sin \omega t + p_0 \cos \omega t.
\end{equation*}
Hence, it can be seen that the expectation values $\bar{x}(t)$ and $\bar{p}(t)$ of the position and momentum evolve in time like classical pure states
of the harmonic oscillator. But, it has to be remembered that the coherent states have some uncertainty $\Delta x$ and $\Delta p$ of the position and
momentum in contrary to the classical pure states. Hence, even though the coherent states resemble the classical pure states they differ from them.

It can be shown that the coherent states $\rho$ from equation (\ref{eq:12.2.53}) converge, in the limit $\hbar \to 0^+$, to the classical
pure states $(\bar{x},\bar{p})$ describing a particle with the position and momentum equal $\bar{x}$ and $\bar{p}$. This can be proved by showing that
\begin{equation*}
\lim_{\hbar \to 0^+} \braket{\rho,\phi} = \braket{\delta_{(\bar{x},\bar{p})},\phi} = \phi(\bar{x},\bar{p})
\end{equation*}
for every test function $\phi$, which further implies that
\begin{equation*}
\lim_{\hbar \to 0^+} \rho = \delta_{(\bar{x},\bar{p})}.
\end{equation*}

%%%%%%%%%%%%%%%%%%%%%%%%%%%%%%%%%%%%%%%%%%%%%%%%%%%%%%%%%%%%%%%%%%%%%%%%%%%%%%%%%%%%%%%%%%%%%%%%%%%%%%%%%%%%%%%%%%%%%%%%%%%%%%%%%%%%%%%%%%%%%%%%%%%%%%%%%%
\section{Final remarks}
\label{sec:15}
In the paper the broad family of mathematically as well as physically equivalent quantizations of classical mechanics was presented. The equivalence of
quantizations is understood in the sense that for two given quantizations of classical mechanics there exist an isomorphism $S$ between algebras of
observables as well as an isomorphism, also denoted by $S$, between spaces of states of the two given quantizations. The isomorphism $S$ preserves the
linear structure, star-multiplication, deformed Poisson bracket and involution of the algebras of observables. Moreover, the isomorphism $S$ preserves
the structure of the Hilbert algebra of the spaces of states. This implies that for a given observable $A \in \mathcal{A}_Q$ describing some physical
quantity and associated to the first quantization scheme the corresponding observable, describing the same physical quantity, associated to the second
quantization scheme will be equal $SA$. The same thing holds for states. If $\Psi \in \mathcal{H}$ is some given state in the first quantization scheme
then $S\Psi$ is a corresponding state in the second quantization scheme.

Note that two equivalent observables $A$ and $A' = SA$ from two quantization schemes have the same spectra. Indeed, a star-genvalue equation of the
observable $A$ reads
\begin{equation*}
A \star \Psi = a \Psi.
\end{equation*}
Acting with the isomorphism $S$ to the both sides of the above equation gives
\begin{equation*}
SA \star' S\Psi = a S\Psi \quad \iff \quad A' \star' \Psi' = a \Psi',
\end{equation*}
which is the star-genvalue equation of the observable $A'$ corresponding to $A$. Since the star-genvalues of both star-genvalue equations are the same
the spectra of $A$ and $A'$ are the same.

Moreover, the expectation values of two equivalent observables $A$ and $A'$ in equivalent states $\Psi$ and $\Psi'$ respectively are the same. This
fact can be easily proved by using the assumption that $S$ vanishes under the integral sign. Note also that the time development in the two quantization
schemes is the same if it is described by two equivalent Hamiltonians $H$ and $H' = SH$.

The above three observations show that two given quantum Hamiltonian systems describe the same physical system if Hamiltonians of this two systems are
equivalent. Such quantum Hamiltonian systems will be referred to as equivalent. Note that two equivalent quantum Hamiltonian systems reduce to the same
classical Hamiltonian system in the limit $\hbar \to 0$. Worth noting is also the fact that with a given classical Hamiltonian system can be associated
the whole family of quantum Hamiltonian systems, described in the same quantization scheme, which reduce to the given classical system. The Hamiltonians
of this family of quantum Hamiltonian systems are, in general, $\hbar$-dependent and of course reduce, in the limit $\hbar \to 0$, to the Hamiltonian
of the classical Hamiltonian system.

As an example lets consider the Moyal quantization scheme, i.e. the case of $\sigma = \frac{1}{2}$. In this case all admissible quantum Hamiltonians are
real valued functions from $\mathcal{A}_Q$. Lets consider some classical Hamiltonian $H_C$. The admissible family of quantum Hamiltonians, reducing in
the limit $\hbar \to 0$ to the classical Hamiltonian $H_C$, is of the form
\begin{equation}
H_Q = H_C + \sum_{n = 1}^\infty \hbar^n H^{(n)},\label{eq:12.2.540}
\end{equation}
where $H^{(n)}$ are arbitrary smooth real valued functions which do not depend on $\hbar$. All quantum Hamiltonian systems corresponding to the
Hamiltonians $H_Q$ are proper quantizations of the classical Hamiltonian system corresponding to $H_C$. Of course, in general Hamiltonians $H_Q$ are
non-equivalent and have different spectra.

On the other hand, choosing any $H_C$ and admissible quantization $\star_{\frac{1}{2},S}$, i.e. such that $H_C$ is self-adjoint with respect to
$\star_{\frac{1}{2},S}$ (what means in our case that $S=\bar{S}$), one can re-express the quantization procedure in terms of Moyal quantization
$\star_{\frac{1}{2}}$ with an appropriate deformed Hamiltonian $H_Q=SH_C$, belonging to the family (\ref{eq:12.2.540}).

An interesting example is the $\sigma$-family of quantum Hamiltonian systems all endowed with the same Hamiltonian $H$, namely a natural Hamiltonian
\begin{equation*}
H(x,p) = \frac{1}{2}p^2 + V(x),
\end{equation*}
where the potential $V$ is some real valued smooth function. All systems from this family are equivalent since $S_{\sigma' - \sigma}H = H$. In general
however, different quantum Hamiltonian systems are endowed with different Hamiltonians which can be even complex.

%%%%%%%%%%%%%%%%%%%%%%%%%%%%%%%%%%%%%%%%%%%%%%%%%%%%%%%%%%%%%%%%%%%%%%%%%%%%%%%%%%%%%%%%%%%%%%%%%%%%%%%%%%%%%%%%%%%%%%%%%%%%%%%%%%%%%%%%%%%%%%%%%%%%%%%%%%
\section{Conclusions}
\label{sec:14}
In the paper the quantization procedure of a classical Hamiltonian system was presented. In full details the quantization was performed in the canonical
coordinates for the case of the Hamiltonian system on
$\mathbb{R}^2$. In addition to this, it was shown that in the canonical regime, from the presented quantization scheme, immediately follows the
ordinary description of quantum mechanics developed by Schr\"odinger, Dirac and Heisenberg. Finally, there were given some examples of the presented
formalism, namely the free particle and the simple harmonic oscillator.

In the paper the admissible states of the quantum Hamiltonian system were defined as pseudo-probability distributions. It is however possible to associate
quantum states with probability distributions known as the \emph{tomographic probability distributions} or \emph{tomograms} \cite{Mancini:1996,
Manko.Stern:2006, Manko:2006, Ibort:2009}. This is the restatement of the Pauli problem \cite{Pauli:1990}, whether it is possible to reconstruct a quantum
state from the knowledge of the probability distributions for the position and momentum. The answer to the Pauli problem is negative \cite{Reichenbach:1942}
but it can be made positive by weakening the assumptions, namely it is possible to reconstruct a quantum state from the knowledge of a two-parameter
family of the tomographic probability distributions.

The tomographic probability representation of quantum states is strongly related to the description of quantum mechanics on the phase space as tomograms
are just the Radon transforms \cite{Radon:1917} of quantum distribution functions on the phase space. The quantum distribution functions can be then
reconstructed from the knowledge of the two-parameter family of tomograms by using the inverse Radon transform.

The tomographic probability representation of quantum states can be useful in some applications. It is particularly useful in quantum optics and quantum
information. For example measurements of photon quantum states often involve measurements of optical tomograms \cite{Smithey:1993, Mlynek:1996}, which are
easier to perform than straightforward measurements of photon quantum distribution functions.

In conclusion, the paper presents the admissible family of quantization schemes of classical Hamiltonian systems in canonical regime. It
is natural to develop the presented formalism to general Hamiltonian system with constrains. It is also natural to check how this formalism would look
like after the change of coordinates. Especially interesting would be non-canonical formulation of quantum mechanics. The further development of the
presented formalism would be an incorporation of the spin degree of freedom. Some attempts to generalize the presented quantization scheme
were already made \cite{Antonsen:1997b, Hirshfeld:2002b, Galaviz:2008}, but they still need to be systematize and refined.

%%%%%%%%%%%%%%%%%%%%%%%%%%%%%%%%%%%%%%%%%%%%%%%%%%%%%%%%%%%%%%%%%%%%%%%%%%%%%%%%%%%%%%%%%%%%%%%%%%%%%%%%%%%%%%%%%%%%%%%%%%%%%%%%%%%%%%%%%%%%%%%%%%%%%%%%%%
\appendix
\section{Notation and used conventions}
\label{sec:8}
In the paper the Einstein summation convention is used, i.e. the summation symbol is skipped in terms where the summation index is written in the
subscript and superscript, e.g. $a_i b^i \equiv \sum_i a_i b^i$. Moreover, for $\sigma \in \mathbb{R}$ the symbol $\bar{\sigma}$ denotes the number
$\bar{\sigma} = 1 - \sigma$.

The following notations and conventions for the Fourier transform and convolution are used in the paper. For $f \in L^2(\mathbb{R}^2)$ the Fourier
transform $\mathcal{F}f = g$ and the inverse Fourier transform $\mathcal{F}^{-1}g = f$ are defined by the equations
\begin{align*}
\mathcal{F}f(\xi,\eta) & := \frac{1}{2\pi\hbar} \iint f(x,p) e^{-\frac{i}{\hbar}(\xi x - \eta p)} \d{x}\d{p}, \\
\mathcal{F}^{-1} g(x,p) & := \frac{1}{2\pi\hbar} \iint g(\xi,\eta) e^{\frac{i}{\hbar}(\xi x - \eta p)} \d{\xi}\d{\eta}.
\end{align*}
Also the partial Fourier transforms $\mathcal{F}_1 f = g$, $\mathcal{F}_2 f = h$ and they inverses $\mathcal{F}_1^{-1} g = f$, $\mathcal{F}_2^{-1} h = f$
are defined by the equations
\begin{align*}
\mathcal{F}_1 f(p,y) & := \frac{1}{\sqrt{2\pi\hbar}} \int f(x,y) e^{-\frac{i}{\hbar} xp} \d{x}, &
\mathcal{F}_1^{-1} g(x,y) & := \frac{1}{\sqrt{2\pi\hbar}} \int g(p,y) e^{\frac{i}{\hbar} xp} \d{p}, \displaybreak[0] \\
\mathcal{F}_2 f(x,p) & := \frac{1}{\sqrt{2\pi\hbar}} \int f(x,y) e^{-\frac{i}{\hbar} yp} \d{y}, &
\mathcal{F}_2^{-1} h(x,y) & := \frac{1}{\sqrt{2\pi\hbar}} \int h(x,p) e^{\frac{i}{\hbar} yp} \d{p}.
\end{align*}
Note that $\mathcal{F} f = \mathcal{F}_1 \mathcal{F}_2^{-1} f$ and $\mathcal{F}^{-1} f = \mathcal{F}_1^{-1} \mathcal{F}_2 f$. The partial Fourier
transforms $\mathcal{F}_1$, $\mathcal{F}_1^{-1}$, $\mathcal{F}_2$, $\mathcal{F}_2^{-1}$ are also denoted by $\mathcal{F}_x$, $\mathcal{F}_p^{-1}$,
$\mathcal{F}_y$, $\mathcal{F}_p^{-1}$. The Fourier transform have the following properties
\begin{align*}
\mathcal{F}(\partial_x^n \partial_p^m f)(\xi,\eta) & = \left( \frac{i}{\hbar} \xi \right)^n \left( -\frac{i}{\hbar} \eta \right)^m
\mathcal{F} f(\xi,\eta), \\
\mathcal{F}^{-1}(\partial_{\xi}^n \partial_{\eta}^m g)(x,p) & = \left( -\frac{i}{\hbar} x \right)^n \left( \frac{i}{\hbar} p \right)^m
\mathcal{F}^{-1} g(x,p), \displaybreak[0] \\
\mathcal{F}(x^n p^m f)(\xi,\eta) & = \left( i\hbar \partial_{\xi} \right)^n \left( -i\hbar \partial_{\eta} \right)^m \mathcal{F} f(\xi,\eta), \\
\mathcal{F}^{-1}(\xi^n \eta^m g)(x,p) & = \left( -i\hbar \partial_x \right)^n \left( i\hbar \partial_p \right)^m \mathcal{F}^{-1} g(x,p).
\end{align*}
Moreover, the following convention for Dirac delta distribution is used
\begin{align*}
\delta(x - x_0) & = \frac{1}{2\pi\hbar} \int e^{\frac{i}{\hbar} \xi (x - x_0)} \d{\xi}, \qquad
\delta(p - p_0) = \frac{1}{2\pi\hbar} \int e^{-\frac{i}{\hbar} \eta (p - p_0)} \d{\eta}, \displaybreak[0] \\
\delta(\xi - \xi_0) & = \frac{1}{2\pi\hbar} \int e^{-\frac{i}{\hbar} x (\xi - \xi_0)} \d{x}, \qquad
\delta(\eta - \eta_0) = \frac{1}{2\pi\hbar} \int e^{\frac{i}{\hbar} p (\eta - \eta_0)} \d{p}.
\end{align*}

A convolution of functions $f,g \in L^2(\mathbb{R}^2)$ is denoted by $f * g$ and defined by the equation
\begin{equation*}
(f * g)(x,p) := \iint f(x',p') g(x - x',p - p') \d{x'}\d{p'} \equiv \iint f(x - x',p - p') g(x',p') \d{x'}\d{p'}.
\end{equation*}
There holds the convolution theorem
\begin{equation*}
\mathcal{F}(f \cdot g) = \frac{1}{2\pi\hbar} \mathcal{F}f * \mathcal{F}g, \qquad
\mathcal{F}(f * g) = 2\pi\hbar \mathcal{F}f \cdot \mathcal{F}g.
\end{equation*}

The scalar product in some Hilbert space $\mathcal{H}$ is denoted by $\braket{\,\cdot\,|\,\cdot\,}_{\mathcal{H}}$. Moreover, the symbol
$\braket{\,\cdot\, , \,\cdot\,}$ denotes the bilinear map defined on the Schwartz space $\mathcal{S}(\mathbb{R}^2)$ by the formula
\begin{equation*}
\braket{f,g} := \iint f(x,p) g(x,p) \d{x}\d{p},
\end{equation*}
for $f,g \in \mathcal{S}(\mathbb{R}^2)$.

%%%%%%%%%%%%%%%%%%%%%%%%%%%%%%%%%%%%%%%%%%%%%%%%%%%%%%%%%%%%%%%%%%%%%%%%%%%%%%%%%%%%%%%%%%%%%%%%%%%%%%%%%%%%%%%%%%%%%%%%%%%%%%%%%%%%%%%%%%%%%%%%%%%%%%%%%%
\section{Baker-Campbell-Hausdorff formulae}
\label{sec:1}
For two operators $\hat{A}$ and $\hat{B}$ defined on some Hilbert space, such that $[\hat{A},[\hat{A},\hat{B}]] = \textrm{const}$ and
$[\hat{B},[\hat{A},\hat{B}]] = \textrm{const}$, there holds
\begin{align*}
e^{\hat{A} + \hat{B}} & = e^{\hat{A}} e^{\hat{B}} e^{-\frac{1}{2} [\hat{A},\hat{B}]}
e^{\frac{1}{6} [\hat{A},[\hat{A},\hat{B}]] + \frac{1}{3} [\hat{B},[\hat{A},\hat{B}]]}, \\
e^{\hat{A}} e^{\hat{B}} & = e^{\hat{B}} e^{\hat{A}} e^{[\hat{A},\hat{B}]}
e^{-\frac{1}{2} [\hat{A},[\hat{A},\hat{B}]] - \frac{1}{2} [\hat{B},[\hat{A},\hat{B}]]}.
\end{align*}
Above equations are called the Baker-Campbell-Hausdorff formulae.

%%%%%%%%%%%%%%%%%%%%%%%%%%%%%%%%%%%%%%%%%%%%%%%%%%%%%%%%%%%%%%%%%%%%%%%%%%%%%%%%%%%%%%%%%%%%%%%%%%%%%%%%%%%%%%%%%%%%%%%%%%%%%%%%%%%%%%%%%%%%%%%%%%%%%%%%%%
\section{Jensen's inequality}
\label{sec:3}
Let $\mu$ be a positive measure on a $\sigma$-algebra $\mathfrak{M}$ in a set $\Omega$, so that $\mu(\Omega) = 1$. If $f$ is a real function in
$L^1(\Omega,\mu)$ and if $\varphi \colon \mathbb{R} \to \mathbb{R}$ is convex, then \cite{Rudin:1970}
\begin{equation*}
\varphi \left( \int_\Omega f \d{\mu} \right) \le \int_\Omega (\varphi \circ f) \d{\mu}.
\end{equation*}
Above inequality is called the \emph{Jensen's inequality}.

From Jensen's inequality another useful inequality can be derived. Namely, for $f,g \in \mathcal{S}(\mathbb{R}^{2N})$ there holds
\begin{equation*}
\left| \iint f(x,p) g(x,p) \d{x} \d{p} \right|^2 \le \iint |g(x,p)| \d{x} \d{p} \iint |f(x,p)|^2 |g(x,p)| \d{x} \d{p}.
\end{equation*}
Indeed, by taking $\Omega = \mathbb{R}^{2N}$, $\mathfrak{M} = \mathfrak{B}(\mathbb{R}^{2N})$, $\varphi(x) = x^2$ and
\begin{equation*}
\d{\mu(x,p)} = \left( \iint |g(x,p)| \d{x}\d{p} \right)^{-1} |g(x,p)| \d{x}\d{p}
\end{equation*}
from Jensen's inequality there holds
\begin{align*}
\left| \iint f(x,p) g(x,p) \d{x} \d{p} \right|^2 & \le \left( \iint |f(x,p)| |g(x,p)| \d{x} \d{p} \right)^2 \nonumber \displaybreak[0] \\
& = \left( \iint |g(x,p)| \d{x}\d{p} \right)^2 \left( \iint |f(x,p)| \d{\mu(x,p)} \right)^2 \nonumber \displaybreak[0] \\
& \le \left( \iint |g(x,p)| \d{x}\d{p} \right)^2 \iint |f(x,p)|^2 \d{\mu(x,p)} \nonumber \displaybreak[0] \\
& = \iint |g(x,p)| \d{x} \d{p} \iint |f(x,p)|^2 |g(x,p)| \d{x} \d{p}.
\end{align*}

%%%%%%%%%%%%%%%%%%%%%%%%%%%%%%%%%%%%%%%%%%%%%%%%%%%%%%%%%%%%%%%%%%%%%%%%%%%%%%%%%%%%%%%%%%%%%%%%%%%%%%%%%%%%%%%%%%%%%%%%%%%%%%%%%%%%%%%%%%%%%%%%%%%%%%%%%%
\section{Proof of Theorem \ref{thm:9.5}}
\label{sec:2}
For some $\Psi \in \mathcal{H}$ let
\begin{subequations}
\label{eq:2.28}
\begin{align}
A_L \star_{\sigma,S} \Psi & = A_{\sigma,S}(\hat{q}_{\sigma,S},\hat{p}_{\sigma,S}) \Psi = \Phi_L,
\label{eq:2.28a} \\
A_R \star_{\sigma,S} \Psi & = A_{\sigma,S}(\hat{q}^*_{\bar{\sigma},S},\hat{p}^*_{\bar{\sigma},S}) \Psi = \Phi_R.
\label{eq:2.28b}
\end{align}
\end{subequations}
For equation~(\ref{eq:2.28a}) let
\begin{equation*}
\Psi(x,p) = S e^{c_1 xp} \Psi_1(x,p), \qquad
\Phi_L(x,p) = S e^{c_1 xp} \Phi_1(x,p)
\end{equation*}
and for equation~(\ref{eq:2.28b}) respectively
\begin{equation*}
\Psi(x,p) = S e^{c_2 xp} \Psi_2(x,p), \qquad
\Phi_R(x,p) = S e^{c_2 xp} \Phi_2(x,p),
\end{equation*}
where $c_1, c_2$ are some complex numbers. Notice that by using the Baker-Campbell-Hausdorff formulae one finds
\begin{align*}
e^{\frac{i}{\hbar} \xi \hat{q}_{\sigma,S}} e^{-\frac{i}{\hbar} \eta \hat{p}_{\sigma,S}} S e^{c_1 xp}
& = e^{\frac{i}{\hbar} \xi S \hat{q}_\sigma S^{-1}} e^{-\frac{i}{\hbar} \eta S \hat{p}_\sigma S^{-1}} S e^{c_1 xp}
= S e^{\frac{i}{\hbar} \xi \hat{q}_\sigma} S^{-1} S e^{-\frac{i}{\hbar} \eta \hat{p}_\sigma} S^{-1} S e^{c_1 xp} \\
& = S e^{\frac{i}{\hbar} \xi \hat{q}_\sigma} e^{-\frac{i}{\hbar} \eta \hat{p}_\sigma} e^{c_1 xp}
= S e^{c_1 xp} e^{\frac{i}{\hbar} \xi \hat{Q}_\sigma} e^{-\frac{i}{\hbar} \eta \hat{P}_\sigma},
\end{align*}
where $\hat{Q}_\sigma = \hat{q}_\sigma + i\hbar \sigma c_1 x$, $\hat{P}_\sigma = \hat{p}_\sigma - i\hbar \bar{\sigma} c_1 p$,
$[\hat{Q}_\sigma, \hat{P}_\sigma] = i\hbar$ and similarly
\begin{align*}
e^{\frac{i}{\hbar} \xi \hat{q}^*_{\bar{\sigma},S}} e^{-\frac{i}{\hbar} \eta \hat{p}^*_{\bar{\sigma},S}} S e^{c_2 xp}
& = e^{\frac{i}{\hbar} \xi S \hat{q}^*_{\bar{\sigma}} S^{-1}} e^{-\frac{i}{\hbar} \eta S \hat{p}^*_{\bar{\sigma}} S^{-1}} S e^{c_2 xp}
= S e^{\frac{i}{\hbar} \xi \hat{q}^*_{\bar{\sigma}}} S^{-1} S e^{-\frac{i}{\hbar} \eta \hat{p}^*_{\bar{\sigma}}} S^{-1} S e^{c_2 xp} \\
& = S e^{\frac{i}{\hbar} \xi \hat{q}^*_{\bar{\sigma}}} e^{-\frac{i}{\hbar} \eta \hat{p}^*_{\bar{\sigma}}} e^{c_2 xp}
= S e^{c_2 xp} e^{\frac{i}{\hbar} \xi \hat{Q}^*_{\bar{\sigma}}} e^{-\frac{i}{\hbar} \eta \hat{P}^*_{\bar{\sigma}}},
\end{align*}
where $\hat{Q}^*_{\bar{\sigma}} = \hat{q}^*_{\bar{\sigma}} - i\hbar \bar{\sigma} c_2 x$, $\hat{P}^*_{\bar{\sigma}} = \hat{p}^*_{\bar{\sigma}}
+ i\hbar \sigma c_2 p$, $[\hat{Q}^*_{\bar{\sigma}}, \hat{P}^*_{\bar{\sigma}}] = -i\hbar$. Now, from (\ref{eq:2.28}) one gets
\begin{subequations}
\label{eq:2.30}
\begin{align}
A_{\sigma,S}(\hat{Q}_\sigma,\hat{P}_\sigma) \Psi_1 & = \Phi_1,
\label{eq:2.30a} \\
A_{\sigma,S}(\hat{Q}^*_{\bar{\sigma}},\hat{P}^*_{\bar{\sigma}}) \Psi_2 & = \Phi_2,
\label{eq:2.30b}
\end{align}
\end{subequations}
where
\begin{align*}
\hat{Q}_\sigma & = x + i\hbar \sigma c_1 x + i\hbar \sigma \partial_p, &
\hat{P}_\sigma & = p - i\hbar \bar{\sigma} c_1 p - i\hbar \bar{\sigma} \partial_x, \\
\hat{Q}^*_{\bar{\sigma}} & = x - i\hbar \bar{\sigma} c_2 x - i\hbar \bar{\sigma} \partial_p, &
\hat{P}^*_{\bar{\sigma}} & = p + i\hbar \sigma c_2 p + i\hbar \sigma \partial_x.
\end{align*}
Under the choice $c_1 = -\frac{i}{\hbar} \bar{\sigma}^{-1}$ and $c_2 = \frac{i}{\hbar} \sigma^{-1}$ formulae (\ref{eq:2.30}) take a form
\begin{subequations}
\label{eq:2.32}
\begin{align}
A_{\sigma,S}(\bar{\sigma}^{-1} x + i\hbar \sigma \partial_p, -i\hbar \bar{\sigma} \partial_x) \Psi_1 & = \Phi_1,
\label{eq:2.32a} \\
A_{\sigma,S}(\sigma^{-1} x - i\hbar \bar{\sigma} \partial_p, i\hbar \sigma \partial_x) \Psi_2 & = \Phi_2.
\label{eq:2.32b}
\end{align}
\end{subequations}
Now, taking the inverse Fourier transform of both equations (\ref{eq:2.32}) with respect to $p$ variable one gets
\begin{subequations}
\label{eq:2.34}
\begin{align}
A_{\sigma,S}(\bar{\sigma}^{-1} x + \sigma z, -i\hbar \bar{\sigma} \partial_x) \tilde{\Psi}_1(x,z) & = \tilde{\Phi}_1(x,z),
\label{eq:2.34a} \\
A_{\sigma,S}(\sigma^{-1} x - \bar{\sigma} z, i\hbar \sigma \partial_x) \tilde{\Psi}_2(x,z) & = \tilde{\Phi}_2(x,z).
\label{eq:2.34b}
\end{align}
\end{subequations}
Lets introduce new coordinates
\begin{align*}
\xi  & = \bar{\sigma}^{-1} x + \sigma z, & z & = z, & \Rightarrow \qquad \partial_x & = \bar{\sigma}^{-1} \partial_\xi, \\
\eta & = \sigma^{-1} x - \bar{\sigma} z, & z & = z, & \Rightarrow \qquad \partial_x & = \sigma^{-1} \partial_\eta,
\end{align*}
then equations (\ref{eq:2.34}) can be written as
\begin{subequations}
\label{eq:2.36}
\begin{align}
A_{\sigma,S}(\xi, -i\hbar \partial_\xi) \tilde{\Psi}_1(\xi,z) & = \tilde{\Phi}_1(\xi,z),
\label{eq:2.36a} \\
A_{\sigma,S}(\eta, i\hbar \partial_\eta) \tilde{\Psi}_2(\eta,z) & = \tilde{\Phi}_2(\eta,z).
\label{eq:2.36b}
\end{align}
\end{subequations}
Now, lets restrict functions $\Psi$ to the class for which $\tilde{\Psi}_1(\xi,z) = \varphi_1(\xi) \kappa_1(z)$ and $\tilde{\Psi}_2(\eta,z)
= \varphi^*_2(\eta) \kappa_2(z)$, where $\varphi_1, \varphi_2, \kappa_1, \kappa_2 \in L^2(\mathbb{R})$. Then, obviously $\tilde{\Phi}_1(\xi,z)
= \psi_1(\xi) \kappa_1(z)$ and $\tilde{\Phi}_2(\eta,z) = \psi^*_2(\eta) \kappa_2(z)$ for some $\psi_1, \psi_2 \in L^2(\mathbb{R})$ and equations
(\ref{eq:2.36}) reduce to the form
\begin{subequations}
\label{eq:2.38}
\begin{align}
A_{\sigma,S}(\xi, -i\hbar \partial_\xi) \varphi_1(\xi) & = \psi_1(\xi),
\label{eq:2.38a} \\
A^\dagger_{\sigma,S}(\eta, -i\hbar \partial_\eta) \varphi_2(\eta) & = \psi_2(\eta),
\label{eq:2.38b}
\end{align}
\end{subequations}
as in $L^2(\mathbb{R})$ $(A_{\sigma,S}(\eta, i\hbar \partial_\eta))^* = A^\dagger_{\sigma,S}(\eta, -i\hbar \partial_\eta)$.

Function $\Psi$ can be reconstructed either from $\Psi_1$ or from $\Psi_2$. Indeed, from one side one have
\begin{align*}
\Psi(x,p) & = S e^{-\frac{i}{\hbar} \bar{\sigma}^{-1} xp} \frac{1}{\sqrt{2\pi\hbar}} \int \varphi_1(\bar{\sigma}^{-1} x + \sigma z) \kappa_1(z)
e^{-\frac{i}{\hbar} pz} \d{z} \\
& = S \frac{1}{\sqrt{2\pi\hbar}} \int \varphi_1(\bar{\sigma}^{-1} x + \sigma z) \kappa_1(z)
e^{-\frac{i}{\hbar}(z + \bar{\sigma}^{-1}x)p} \d{z} \\
& = S \frac{1}{\sqrt{2\pi\hbar}} \int \varphi_1(x + \sigma y) \kappa_1(y - \bar{\sigma}^{-1}x)
e^{-\frac{i}{\hbar} yp} \d{y},
\end{align*}
where the new variable $y = z + \bar{\sigma}^{-1}x$ was introduced. From the other side one have
\begin{align*}
\Psi(x,p) & = S e^{\frac{i}{\hbar} \sigma^{-1} xp} \frac{1}{\sqrt{2\pi\hbar}} \int \varphi^*_2(\sigma^{-1} x - \bar{\sigma} z) \kappa_2(z)
e^{-\frac{i}{\hbar} pz} \d{z} \\
& = S \frac{1}{\sqrt{2\pi\hbar}} \int \varphi^*_2(\sigma^{-1} x - \bar{\sigma} z) \kappa_2(z)
e^{-\frac{i}{\hbar}(z - \sigma^{-1}x)p} \d{z} \\
& = S \frac{1}{\sqrt{2\pi\hbar}} \int \varphi^*_2(x - \bar{\sigma} y) \kappa_2(y + \sigma^{-1}x)
e^{-\frac{i}{\hbar} yp} \d{z},
\end{align*}
where the new variable $y = z - \sigma^{-1}x$ was introduced. Thus,
\begin{align}
\kappa_1(y - \bar{\sigma}^{-1} x) & = \varphi^*_2(x - \bar{\sigma} y), & \kappa_2(y + \sigma^{-1} x) & = \varphi_1(x + \sigma y)
\end{align}
and
\begin{equation*}
\Psi(x,p) = S \frac{1}{\sqrt{2\pi\hbar}} \int \varphi_1(x + \sigma y) \varphi^*_2(x - \bar{\sigma} y) e^{-\frac{i}{\hbar} yp} \d{y}
= (\varphi_2^* \otimes_{\sigma,S} \varphi_1)(x,p).
\end{equation*}
Now,
\begin{align*}
\Phi_L(x,p) & = S e^{-\frac{i}{\hbar} \bar{\sigma}^{-1} xp} \frac{1}{\sqrt{2\pi\hbar}} \int \psi_1(\bar{\sigma}^{-1} x + \sigma z) \kappa_1(z)
e^{-\frac{i}{\hbar} pz} \d{z} \\
& = S \frac{1}{\sqrt{2\pi\hbar}} \int \psi_1(\bar{\sigma}^{-1} x + \sigma z) \kappa_1(z)
e^{-\frac{i}{\hbar}(z + \bar{\sigma}^{-1}x)p} \d{z} \\
& = S \frac{1}{\sqrt{2\pi\hbar}} \int \psi_1(x + \sigma y) \kappa_1(y - \bar{\sigma}^{-1}x)
e^{-\frac{i}{\hbar} yp} \d{y} \\
& = S \frac{1}{\sqrt{2\pi\hbar}} \int \psi_1(x + \sigma y) \varphi^*_2(x - \bar{\sigma} y)
e^{-\frac{i}{\hbar} yp} \d{y} \\
& = (\varphi_2^* \otimes_{\sigma,S} \psi_1)(x,p),
\end{align*}
where the new variable $y = z + \bar{\sigma}^{-1}x$ was used. In a similar way one can show that
\begin{equation*}
\Phi_R(x,p) = (\psi_2^* \otimes_{\sigma,S} \varphi_1)(x,p).
\end{equation*}

Above calculations show that for $\Psi = \varphi_2^* \otimes_{\sigma,S} \varphi_1$ one have
\begin{align*}
A_L \star_{\sigma,S} \Psi & = A_{\sigma,S}(\hat{q}_{\sigma,S},\hat{p}_{\sigma,S}) \Psi
= \varphi_2^* \otimes_{\sigma,S} \psi_1
= \varphi_2^* \otimes_{\sigma,S} A_{\sigma,S}(\hat{q},\hat{p}) \varphi_1, \\
A_R \star_{\sigma,S} \Psi & = A_{\sigma,S}(\hat{q}^*_{\bar{\sigma},S},\hat{p}^*_{\bar{\sigma},S}) \Psi
= \psi_2^* \otimes_{\sigma,S} \varphi_1
= \left( A^\dagger_{\sigma,S}(\hat{q},\hat{p}) \varphi_2 \right)^* \otimes_{\sigma,S} \varphi_1.
\end{align*}

%%%%%%%%%%%%%%%%%%%%%%%%%%%%%%%%%%%%%%%%%%%%%%%%%%%%%%%%%%%%%%%%%%%%%%%%%%%%%%%%%%%%%%%%%%%%%%%%%%%%%%%%%%%%%%%%%%%%%%%%%%%%%%%%%%%%%%%%%%%%%%%%%%%%%%%%%%
\bibliographystyle{model1a-num-names}
\bibliography{aps-jour,dq}

\begin{thebibliography}{104}
\expandafter\ifx\csname natexlab\endcsname\relax\def\natexlab#1{#1}\fi
\providecommand{\bibinfo}[2]{#2}
\ifx\xfnm\relax \def\xfnm[#1]{\unskip,\space#1}\fi
%Type = Article
\bibitem[{Bayen et~al.(1978{\natexlab{a}})Bayen, Flato, Fr{\o}nsdal,
  Lichnerowicz, and Sternheimer}]{Bayen:1978a}
\bibinfo{author}{F.~Bayen}, \bibinfo{author}{M.~Flato},
  \bibinfo{author}{C.~Fr{\o}nsdal}, \bibinfo{author}{A.~Lichnerowicz},
  \bibinfo{author}{D.~Sternheimer}, \bibinfo{journal}{Ann. Phys. (NY)}
  \bibinfo{volume}{111} (\bibinfo{year}{1978}{\natexlab{a}})
  \bibinfo{pages}{61--110}.
%Type = Article
\bibitem[{Bayen et~al.(1978{\natexlab{b}})Bayen, Flato, Fr{\o}nsdal,
  Lichnerowicz, and Sternheimer}]{Bayen:1978b}
\bibinfo{author}{F.~Bayen}, \bibinfo{author}{M.~Flato},
  \bibinfo{author}{C.~Fr{\o}nsdal}, \bibinfo{author}{A.~Lichnerowicz},
  \bibinfo{author}{D.~Sternheimer}, \bibinfo{journal}{Ann. Phys. (NY)}
  \bibinfo{volume}{111} (\bibinfo{year}{1978}{\natexlab{b}})
  \bibinfo{pages}{111--151}.
%Type = Inproceedings
\bibitem[{Dito and Sternheimer(2002)}]{Dito.Sternheimer:2002}
\bibinfo{author}{G.~Dito}, \bibinfo{author}{D.~Sternheimer}, in:
  \cite{Halbout:2002}, pp. \bibinfo{pages}{9--54}.
%Type = Inproceedings
\bibitem[{Gutt(2000)}]{Gutt:2000}
\bibinfo{author}{S.~Gutt}, in: \bibinfo{editor}{G.~Dito},
  \bibinfo{editor}{D.~Sternheimer} (Eds.), \bibinfo{booktitle}{Conf{\'e}rence
  {M}osh{\'e} {F}lato 1999: quantization, deformations, and symmetries,
  mathematical physics studies}, \bibinfo{number}{21},
  \bibinfo{publisher}{Kluwer Academic Publishers}, \bibinfo{address}{Dordrecht,
  Boston, London}, \bibinfo{year}{2000}, pp. \bibinfo{pages}{217--254}.
%Type = Inproceedings
\bibitem[{Weinstein(1994)}]{Weinstein:1994b}
\bibinfo{author}{A.~Weinstein}, volume \bibinfo{volume}{789},
  \bibinfo{organization}{S{\'e}minaire Bourbaki 46'{\`e}me ann{\'e}e}.
%Type = Article
\bibitem[{Weyl(1927)}]{Weyl:1927}
\bibinfo{author}{H.~Weyl}, \bibinfo{journal}{Z. Phys.} \bibinfo{volume}{46}
  (\bibinfo{year}{1927}) \bibinfo{pages}{1--46}.
%Type = Book
\bibitem[{Weyl(1931)}]{Weyl:1931}
\bibinfo{author}{H.~Weyl}, \bibinfo{title}{The Theory of Groups and Quantum
  Mechanics}, \bibinfo{publisher}{Dover Publications Inc.},
  \bibinfo{address}{New York}, \bibinfo{year}{1931}.
%Type = Article
\bibitem[{Wigner(1932)}]{Wigner:1932}
\bibinfo{author}{E.~P. Wigner}, \bibinfo{journal}{Phys. Rev.}
  \bibinfo{volume}{40} (\bibinfo{year}{1932}) \bibinfo{pages}{749--759}.
%Type = Article
\bibitem[{Groenewold(1946)}]{Groenewold:1946}
\bibinfo{author}{H.~J. Groenewold}, \bibinfo{journal}{Physica (Utrecht)}
  \bibinfo{volume}{12} (\bibinfo{year}{1946}) \bibinfo{pages}{405--460}.
%Type = Article
\bibitem[{Moyal(1949)}]{Moyal:1949}
\bibinfo{author}{J.~E. Moyal}, \bibinfo{journal}{Proc. Cambridge Philos. Soc.}
  \bibinfo{volume}{45} (\bibinfo{year}{1949}) \bibinfo{pages}{99--124}.
%Type = Article
\bibitem[{Berezin(1975{\natexlab{a}})}]{Berezin:1975a}
\bibinfo{author}{F.~A. Berezin}, \bibinfo{journal}{Commun. Math. Phys.}
  \bibinfo{volume}{40} (\bibinfo{year}{1975}{\natexlab{a}})
  \bibinfo{pages}{153--174}.
%Type = Article
\bibitem[{Berezin(1975{\natexlab{b}})}]{Berezin:1975b}
\bibinfo{author}{F.~A. Berezin}, \bibinfo{journal}{Math. USSR Izv.}
  \bibinfo{volume}{8} (\bibinfo{year}{1975}{\natexlab{b}})
  \bibinfo{pages}{1109--1165}.
%Type = Article
\bibitem[{Berezin(1975{\natexlab{c}})}]{Berezin:1975c}
\bibinfo{author}{F.~A. Berezin}, \bibinfo{journal}{Math. USSR Izv.}
  \bibinfo{volume}{9} (\bibinfo{year}{1975}{\natexlab{c}})
  \bibinfo{pages}{341--379}.
%Type = Article
\bibitem[{Gerstenhaber(1963)}]{Gerstenhaber:1963}
\bibinfo{author}{M.~Gerstenhaber}, \bibinfo{journal}{Ann. Math.}
  \bibinfo{volume}{78} (\bibinfo{year}{1963}) \bibinfo{pages}{267--288}.
%Type = Article
\bibitem[{Gerstenhaber(1964)}]{Gerstenhaber:1964}
\bibinfo{author}{M.~Gerstenhaber}, \bibinfo{journal}{Ann. Math.}
  \bibinfo{volume}{79} (\bibinfo{year}{1964}) \bibinfo{pages}{59--103}.
%Type = Article
\bibitem[{Gerstenhaber(1966)}]{Gerstenhaber:1966}
\bibinfo{author}{M.~Gerstenhaber}, \bibinfo{journal}{Ann. Math.}
  \bibinfo{volume}{84} (\bibinfo{year}{1966}) \bibinfo{pages}{1--19}.
%Type = Article
\bibitem[{Gerstenhaber(1968)}]{Gerstenhaber:1968}
\bibinfo{author}{M.~Gerstenhaber}, \bibinfo{journal}{Ann. Math.}
  \bibinfo{volume}{88} (\bibinfo{year}{1968}) \bibinfo{pages}{1--34}.
%Type = Article
\bibitem[{Gerstenhaber(1974)}]{Gerstenhaber:1974}
\bibinfo{author}{M.~Gerstenhaber}, \bibinfo{journal}{Ann. Math.}
  \bibinfo{volume}{99} (\bibinfo{year}{1974}) \bibinfo{pages}{257--276}.
%Type = Inproceedings
\bibitem[{Gerstenhaber and Schack(1988)}]{Gerstenhaber:1988}
\bibinfo{author}{M.~Gerstenhaber}, \bibinfo{author}{S.~D. Schack}, in:
  \bibinfo{editor}{M.~Hazewinkel}, \bibinfo{editor}{M.~Gerstenhaber} (Eds.),
  \bibinfo{booktitle}{Deformation theory of algebras and structures and
  applications}, \bibinfo{publisher}{Kluwer Academic Press},
  \bibinfo{address}{Dordrecht}, \bibinfo{year}{1988}, pp.
  \bibinfo{pages}{13--264}.
%Type = Article
\bibitem[{Fairlie(1964)}]{Fairlie:1964}
\bibinfo{author}{D.~B. Fairlie}, \bibinfo{journal}{Proc. Cambridge Philos.
  Soc.} \bibinfo{volume}{60} (\bibinfo{year}{1964}) \bibinfo{pages}{581--586}.
%Type = Article
\bibitem[{Omori et~al.(1991)Omori, Maeda, and Yoshioka}]{Omori:1991}
\bibinfo{author}{H.~Omori}, \bibinfo{author}{Y.~Maeda},
  \bibinfo{author}{A.~Yoshioka}, \bibinfo{journal}{Adv. Math.}
  \bibinfo{volume}{85} (\bibinfo{year}{1991}) \bibinfo{pages}{224--255}.
%Type = Article
\bibitem[{Curtright et~al.(1998)Curtright, Fairlie, and
  Zachos}]{Curtright:1998}
\bibinfo{author}{T.~Curtright}, \bibinfo{author}{D.~B. Fairlie},
  \bibinfo{author}{C.~Zachos}, \bibinfo{journal}{Phys. Rev. D}
  \bibinfo{volume}{58} (\bibinfo{year}{1998}) \bibinfo{pages}{25002--25015}.
%Type = Article
\bibitem[{Curtright and Zachos(1999)}]{Curtright:1999a}
\bibinfo{author}{T.~Curtright}, \bibinfo{author}{C.~Zachos},
  \bibinfo{journal}{J. Phys. A} \bibinfo{volume}{32} (\bibinfo{year}{1999})
  \bibinfo{pages}{771--779}.
%Type = Article
\bibitem[{Zachos(2002)}]{Zachos:2002}
\bibinfo{author}{C.~Zachos}, \bibinfo{journal}{Int. J. Mod. Phys. A}
  \bibinfo{volume}{17} (\bibinfo{year}{2002}) \bibinfo{pages}{297--316}.
%Type = Book
\bibitem[{Zachos et~al.(2005)Zachos, Fairlie, and Curtright}]{Zachos:2005}
\bibinfo{editor}{C.~K. Zachos}, \bibinfo{editor}{D.~B. Fairlie},
  \bibinfo{editor}{T.~L. Curtright} (Eds.), \bibinfo{title}{Quantum Mechanics
  in Phase Space. An Overview with Selected Papers},
  volume~\bibinfo{volume}{34} of \textit{\bibinfo{series}{World Scientific
  Series in 20th Century Physics}}, \bibinfo{publisher}{World Scientific
  Publishing Co.}, \bibinfo{address}{New Jersey, Hackensack},
  \bibinfo{year}{2005}.
%Type = Article
\bibitem[{Hirshfeld and Henselder(2002{\natexlab{a}})}]{Hirshfeld:2002a}
\bibinfo{author}{A.~C. Hirshfeld}, \bibinfo{author}{P.~D. Henselder},
  \bibinfo{journal}{Am. J. Phys.} \bibinfo{volume}{70}
  (\bibinfo{year}{2002}{\natexlab{a}}) \bibinfo{pages}{537--547}.
%Type = Article
\bibitem[{Hirshfeld and Henselder(2002{\natexlab{b}})}]{Hirshfeld:2002b}
\bibinfo{author}{A.~C. Hirshfeld}, \bibinfo{author}{P.~D. Henselder},
  \bibinfo{journal}{Ann. Phys. (NY)} \bibinfo{volume}{302}
  (\bibinfo{year}{2002}{\natexlab{b}}) \bibinfo{pages}{59--77}.
%Type = Article
\bibitem[{Bordemann and Waldmann(1998)}]{Bordemann:1998}
\bibinfo{author}{M.~Bordemann}, \bibinfo{author}{S.~Waldmann},
  \bibinfo{journal}{Commun. Math. Phys.} \bibinfo{volume}{195}
  (\bibinfo{year}{1998}) \bibinfo{pages}{549--583}.
%Type = Article
\bibitem[{Waldmann(2005)}]{Waldmann:2005}
\bibinfo{author}{S.~Waldmann}, \bibinfo{journal}{Rev. Math. Phys.}
  \bibinfo{volume}{17} (\bibinfo{year}{2005}) \bibinfo{pages}{15--75}.
%Type = Article
\bibitem[{Dias and Prata(2004{\natexlab{a}})}]{Dias:2004a}
\bibinfo{author}{N.~C. Dias}, \bibinfo{author}{J.~N. Prata},
  \bibinfo{journal}{Ann. Phys. (NY)} \bibinfo{volume}{311}
  (\bibinfo{year}{2004}{\natexlab{a}}) \bibinfo{pages}{120--151}.
%Type = Article
\bibitem[{Dias and Prata(2004{\natexlab{b}})}]{Dias:2004b}
\bibinfo{author}{N.~C. Dias}, \bibinfo{author}{J.~N. Prata},
  \bibinfo{journal}{Ann. Phys. (NY)} \bibinfo{volume}{313}
  (\bibinfo{year}{2004}{\natexlab{b}}) \bibinfo{pages}{110--146}.
%Type = Inproceedings
\bibitem[{Sternheimer(1998)}]{Sternheimer:1998}
\bibinfo{author}{D.~Sternheimer}, in: \bibinfo{booktitle}{Particles, Fields,
  and Gravitation}, volume \bibinfo{volume}{453}, \bibinfo{organization}{Amer.
  Inst. Phys.}, \bibinfo{publisher}{AIP Conf. Proc.},
  \bibinfo{address}{Woodbury, New York}, \bibinfo{year}{1998}, pp.
  \bibinfo{pages}{107--145}.
%Type = Article
\bibitem[{Nasiri et~al.(2006)Nasiri, Sobouti, and Taati}]{Nasiri:2006}
\bibinfo{author}{S.~Nasiri}, \bibinfo{author}{Y.~Sobouti},
  \bibinfo{author}{F.~Taati}, \bibinfo{journal}{J. Math. Phys.}
  \bibinfo{volume}{47} (\bibinfo{year}{2006}) \bibinfo{pages}{092106}.
%Type = Article
\bibitem[{Cohen(1966)}]{Cohen:1966}
\bibinfo{author}{L.~Cohen}, \bibinfo{journal}{J. Math. Phys.}
  \bibinfo{volume}{7} (\bibinfo{year}{1966}) \bibinfo{pages}{781--786}.
%Type = Article
\bibitem[{Dias and Prata(2001)}]{Dias:2001}
\bibinfo{author}{N.~C. Dias}, \bibinfo{author}{J.~N. Prata},
  \bibinfo{journal}{J. Math. Phys.} \bibinfo{volume}{42} (\bibinfo{year}{2001})
  \bibinfo{pages}{5565--5579}.
%Type = Article
\bibitem[{de~Gosson(2005)}]{Gosson:2005}
\bibinfo{author}{M.~de~Gosson}, \bibinfo{journal}{J. Phys. A}
  \bibinfo{volume}{38} (\bibinfo{year}{2005}) \bibinfo{pages}{9263--9287}.
%Type = Article
\bibitem[{de~Gosson and Luef(2008)}]{Gosson:2008}
\bibinfo{author}{M.~de~Gosson}, \bibinfo{author}{F.~Luef},
  \bibinfo{journal}{Lett. Math. Phys.} \bibinfo{volume}{85}
  (\bibinfo{year}{2008}) \bibinfo{pages}{173--183}.
%Type = Article
\bibitem[{Bracken and Watson(2010)}]{Bracken:2010}
\bibinfo{author}{A.~J. Bracken}, \bibinfo{author}{P.~Watson},
  \bibinfo{journal}{J. Phys. A} \bibinfo{volume}{43} (\bibinfo{year}{2010})
  \bibinfo{pages}{395304--395322}.
%Type = Article
\bibitem[{Smith(2006)}]{Smith:2006}
\bibinfo{author}{T.~B. Smith}, \bibinfo{journal}{J. Phys. A}
  \bibinfo{volume}{39} (\bibinfo{year}{2006}) \bibinfo{pages}{1469--1484}.
%Type = Article
\bibitem[{Dias et~al.(2010)Dias, de~Gosson, Luef, and Prata}]{Dias:2010}
\bibinfo{author}{N.~C. Dias}, \bibinfo{author}{M.~de~Gosson},
  \bibinfo{author}{F.~Luef}, \bibinfo{author}{J.~N. Prata},
  \bibinfo{journal}{J. Math. Phys.} \bibinfo{volume}{51} (\bibinfo{year}{2010})
  \bibinfo{pages}{072101--072112}.
%Type = Book
\bibitem[{de~Gosson(2006)}]{Gosson:2006}
\bibinfo{author}{M.~de~Gosson}, \bibinfo{title}{Symplectic Geometry and Quantum
  Mechanics}, volume \bibinfo{volume}{166} of \textit{\bibinfo{series}{Operator
  Theory: Advances and Applications}}, \bibinfo{publisher}{Birkh{\"a}user},
  \bibinfo{address}{Basel}, \bibinfo{year}{2006}.
%Type = Article
\bibitem[{Curtright and Zachos(1999)}]{Curtright:1999b}
\bibinfo{author}{T.~Curtright}, \bibinfo{author}{C.~Zachos},
  \bibinfo{journal}{Prog. Theor. Phys.} \bibinfo{volume}{135}
  (\bibinfo{year}{1999}) \bibinfo{pages}{244--258}.
%Type = Article
\bibitem[{Antonsen(1997)}]{Antonsen:1997a}
\bibinfo{author}{F.~Antonsen}, \bibinfo{journal}{Phys. Rev. D}
  \bibinfo{volume}{56} (\bibinfo{year}{1997}) \bibinfo{pages}{920--935}.
%Type = Inproceedings
\bibitem[{Dito(2002)}]{Dito:2002}
\bibinfo{author}{G.~Dito}, in:  \cite{Halbout:2002}, pp.
  \bibinfo{pages}{55--66}.
%Type = Article
\bibitem[{Dito(1990)}]{Dito:1990}
\bibinfo{author}{J.~Dito}, \bibinfo{journal}{Lett. Math. Phys.}
  \bibinfo{volume}{20} (\bibinfo{year}{1990}) \bibinfo{pages}{125--134}.
%Type = Article
\bibitem[{Dito(1992)}]{Dito:1992}
\bibinfo{author}{J.~Dito}, \bibinfo{journal}{J. Math. Phys.}
  \bibinfo{volume}{33} (\bibinfo{year}{1992}) \bibinfo{pages}{791--801}.
%Type = Article
\bibitem[{Fairlie(1998)}]{Fairlie:1998}
\bibinfo{author}{D.~B. Fairlie}, \bibinfo{journal}{Mod. Phys. Lett. A}
  \bibinfo{volume}{13} (\bibinfo{year}{1998}) \bibinfo{pages}{263--274}.
%Type = Article
\bibitem[{Baker and Fairlie(1999)}]{Baker:1999}
\bibinfo{author}{L.~Baker}, \bibinfo{author}{D.~B. Fairlie},
  \bibinfo{journal}{J. Math. Phys.} \bibinfo{volume}{40} (\bibinfo{year}{1999})
  \bibinfo{pages}{2539--2548}.
%Type = Article
\bibitem[{Pinzul and Stern(2001)}]{Pinzul:2001}
\bibinfo{author}{A.~Pinzul}, \bibinfo{author}{A.~Stern}, \bibinfo{journal}{J.
  High Energy Phys.} \bibinfo{volume}{2001} (\bibinfo{year}{2001})
  \bibinfo{pages}{023}.
%Type = Book
\bibitem[{Connes(1994)}]{Connes:1994}
\bibinfo{author}{A.~Connes}, \bibinfo{title}{Noncommutative Geometry},
  \bibinfo{publisher}{Academic Press}, \bibinfo{address}{San Diego, New York,
  London}, \bibinfo{year}{1994}.
%Type = Article
\bibitem[{Bahns et~al.(2003)Bahns, Doplicher, Fredenhagen, and
  Piacitelli}]{Bahns:2003}
\bibinfo{author}{D.~Bahns}, \bibinfo{author}{S.~Doplicher},
  \bibinfo{author}{K.~Fredenhagen}, \bibinfo{author}{G.~Piacitelli},
  \bibinfo{journal}{Commun. Math. Phys.} \bibinfo{volume}{237}
  (\bibinfo{year}{2003}) \bibinfo{pages}{221--241}.
%Type = Article
\bibitem[{Connes et~al.(1998)Connes, Douglas, and Schwarz}]{Connes:1998}
\bibinfo{author}{A.~Connes}, \bibinfo{author}{M.~R. Douglas},
  \bibinfo{author}{A.~Schwarz}, \bibinfo{journal}{J. High Energy Phys.}
  \bibinfo{volume}{02} (\bibinfo{year}{1998}) \bibinfo{pages}{003}.
%Type = Article
\bibitem[{Jurco et~al.(2001)Jurco, M{\"o}ller, Schraml, Schupp, and
  Wess}]{Jurco.Moller:2001}
\bibinfo{author}{B.~Jurco}, \bibinfo{author}{L.~M{\"o}ller},
  \bibinfo{author}{S.~Schraml}, \bibinfo{author}{P.~Schupp},
  \bibinfo{author}{J.~Wess}, \bibinfo{journal}{Eur. Phys. J.}
  \bibinfo{volume}{C21} (\bibinfo{year}{2001}) \bibinfo{pages}{383--388}.
%Type = Article
\bibitem[{Jurco et~al.(2000)Jurco, Schupp, and Wess}]{Jurco:2000}
\bibinfo{author}{B.~Jurco}, \bibinfo{author}{P.~Schupp},
  \bibinfo{author}{J.~Wess}, \bibinfo{journal}{Nucl. Phys.}
  \bibinfo{volume}{B584} (\bibinfo{year}{2000}) \bibinfo{pages}{784--794}.
%Type = Article
\bibitem[{Jurco et~al.(2001)Jurco, Schupp, and Wess}]{Jurco:2001}
\bibinfo{author}{B.~Jurco}, \bibinfo{author}{P.~Schupp},
  \bibinfo{author}{J.~Wess}, \bibinfo{journal}{Nucl. Phys.}
  \bibinfo{volume}{B604} (\bibinfo{year}{2001}) \bibinfo{pages}{148--180}.
%Type = Article
\bibitem[{Seiberg and Witten(1999)}]{Seiberg:1999}
\bibinfo{author}{N.~Seiberg}, \bibinfo{author}{E.~Witten}, \bibinfo{journal}{J.
  High Energy Phys.} \bibinfo{volume}{1999} (\bibinfo{year}{1999})
  \bibinfo{pages}{032}.
%Type = Article
\bibitem[{Rieffel(1993)}]{Rieffel:1993}
\bibinfo{author}{M.~Rieffel}, \bibinfo{journal}{Mem. Amer. Math. Soc.}
  \bibinfo{volume}{506} (\bibinfo{year}{1993}).
%Type = Article
\bibitem[{Natsume(2001)}]{Natsume:2001}
\bibinfo{author}{T.~Natsume}, \bibinfo{journal}{Math. Phys. Stud.}
  \bibinfo{volume}{23} (\bibinfo{year}{2001}) \bibinfo{pages}{142--150}.
%Type = Article
\bibitem[{Natsume et~al.(2003)Natsume, Nest, and Ingo}]{Natsume:2003}
\bibinfo{author}{T.~Natsume}, \bibinfo{author}{R.~Nest},
  \bibinfo{author}{P.~Ingo}, \bibinfo{journal}{Lett. Math. Phys.}
  \bibinfo{volume}{66} (\bibinfo{year}{2003}) \bibinfo{pages}{73--89}.
%Type = Article
\bibitem[{Weinstein(1994)}]{Weinstein:1994a}
\bibinfo{author}{A.~Weinstein}, \bibinfo{journal}{Contemp. Math.}
  \bibinfo{volume}{179} (\bibinfo{year}{1994}) \bibinfo{pages}{261--270}.
%Type = Article
\bibitem[{Bonneau et~al.(2004)Bonneau, Gerstenhaber, Giaquinto, and
  Sternheimer}]{Bonneau:2004}
\bibinfo{author}{P.~Bonneau}, \bibinfo{author}{M.~Gerstenhaber},
  \bibinfo{author}{A.~Giaquinto}, \bibinfo{author}{D.~Sternheimer},
  \bibinfo{journal}{J. Math. Phys.} \bibinfo{volume}{45} (\bibinfo{year}{2004})
  \bibinfo{pages}{3703--3741}.
%Type = Article
\bibitem[{Omori et~al.(2000)Omori, Maeda, Miyazaki, and Yoshioka}]{Omori:2000}
\bibinfo{author}{H.~Omori}, \bibinfo{author}{Y.~Maeda},
  \bibinfo{author}{N.~Miyazaki}, \bibinfo{author}{A.~Yoshioka},
  \bibinfo{journal}{Math. Phys. Stud.} \bibinfo{volume}{22}
  (\bibinfo{year}{2000}) \bibinfo{pages}{233--246}.
%Type = Article
\bibitem[{Omori et~al.(2007)Omori, Maeda, Miyazaki, and Yoshioka}]{Omori:2007}
\bibinfo{author}{H.~Omori}, \bibinfo{author}{Y.~Maeda},
  \bibinfo{author}{N.~Miyazaki}, \bibinfo{author}{A.~Yoshioka},
  \bibinfo{journal}{Lett. Math. Phys.} \bibinfo{volume}{82}
  (\bibinfo{year}{2007}) \bibinfo{pages}{153--175}.
%Type = Article
\bibitem[{{DeWilde} and Lecomte(1983)}]{DeWilde:1983}
\bibinfo{author}{M.~{DeWilde}}, \bibinfo{author}{P.~B.~A. Lecomte},
  \bibinfo{journal}{Lett. Math. Phys.} \bibinfo{volume}{7}
  (\bibinfo{year}{1983}) \bibinfo{pages}{487--496}.
%Type = Inproceedings
\bibitem[{{DeWilde} and Lecomte(1988)}]{DeWilde:1988}
\bibinfo{author}{M.~{DeWilde}}, \bibinfo{author}{P.~B.~A. Lecomte}, in:
  \bibinfo{editor}{M.~Hazewinkel}, \bibinfo{editor}{M.~Gerstenhaber} (Eds.),
  \bibinfo{booktitle}{Deformation theory of algebras and structures and
  applications}, \bibinfo{publisher}{Kluwer Academic Press},
  \bibinfo{address}{Dordrecht}, \bibinfo{year}{1988}, pp.
  \bibinfo{pages}{897--960}.
%Type = Inproceedings
\bibitem[{Fedosov(1985)}]{Fedosov:1985}
\bibinfo{author}{B.~V. Fedosov}, in: \bibinfo{editor}{L.~D. Kudryavtsev} (Ed.),
  \bibinfo{booktitle}{Some problems in modern mathematics and their
  applications to problems in mathematical physics}, \bibinfo{publisher}{Moscow
  Phys.-Tech. Inst.}, \bibinfo{address}{Moscow}, \bibinfo{year}{1985}, pp.
  \bibinfo{pages}{129--136}.
%Type = Article
\bibitem[{Fedosov(1986)}]{Fedosov:1986}
\bibinfo{author}{B.~V. Fedosov}, \bibinfo{journal}{Sov. Phys. Dokl.}
  \bibinfo{volume}{31} (\bibinfo{year}{1986}) \bibinfo{pages}{877--878}.
%Type = Article
\bibitem[{Fedosov(1989)}]{Fedosov:1989}
\bibinfo{author}{B.~V. Fedosov}, \bibinfo{journal}{Sov. Phys. Dokl.}
  \bibinfo{volume}{34} (\bibinfo{year}{1989}) \bibinfo{pages}{319--321}.
%Type = Article
\bibitem[{Fedosov(1994)}]{Fedosov:1994}
\bibinfo{author}{B.~V. Fedosov}, \bibinfo{journal}{J. Diff. Geom.}
  \bibinfo{volume}{40} (\bibinfo{year}{1994}) \bibinfo{pages}{213--238}.
%Type = Article
\bibitem[{Kontsevich(2003)}]{Kontsevich:2003}
\bibinfo{author}{M.~Kontsevich}, \bibinfo{journal}{Lett. Math. Phys.}
  \bibinfo{volume}{66} (\bibinfo{year}{2003}) \bibinfo{pages}{157--216}.
%Type = Inproceedings
\bibitem[{Kontsevich(1997)}]{Kontsevich:1997}
\bibinfo{author}{M.~Kontsevich}, in: \bibinfo{editor}{D.~Sternheimer},
  \bibinfo{editor}{J.~Rawnsley}, \bibinfo{editor}{S.~Gutt} (Eds.),
  \bibinfo{booktitle}{Deformation theory and symplectic geometry},
  \bibinfo{number}{20}, \bibinfo{publisher}{Kluwer Academic Publisher},
  \bibinfo{address}{Dordrecht, Boston, London}, \bibinfo{year}{1997}, pp.
  \bibinfo{pages}{139--156}.
%Type = Article
\bibitem[{Kontsevich(1999)}]{Kontsevich:1999}
\bibinfo{author}{M.~Kontsevich}, \bibinfo{journal}{Lett. Math. Phys.}
  \bibinfo{volume}{48} (\bibinfo{year}{1999}) \bibinfo{pages}{35--72}.
%Type = Article
\bibitem[{Kontsevich(2001)}]{Kontsevich:2001}
\bibinfo{author}{M.~Kontsevich}, \bibinfo{journal}{Lett. Math. Phys.}
  \bibinfo{volume}{56} (\bibinfo{year}{2001}) \bibinfo{pages}{271--294}.
%Type = Article
\bibitem[{Nest and Tsygan(1995{\natexlab{a}})}]{Nest:1995a}
\bibinfo{author}{R.~Nest}, \bibinfo{author}{B.~Tsygan},
  \bibinfo{journal}{Commun. Math. Phys.} \bibinfo{volume}{172}
  (\bibinfo{year}{1995}{\natexlab{a}}) \bibinfo{pages}{223--262}.
%Type = Article
\bibitem[{Nest and Tsygan(1995{\natexlab{b}})}]{Nest:1995b}
\bibinfo{author}{R.~Nest}, \bibinfo{author}{B.~Tsygan}, \bibinfo{journal}{Adv.
  Math.} \bibinfo{volume}{113} (\bibinfo{year}{1995}{\natexlab{b}})
  \bibinfo{pages}{151--205}.
%Type = Article
\bibitem[{Bertelson et~al.(1997)Bertelson, Cahen, and Gutt}]{Bertelson:1997}
\bibinfo{author}{M.~Bertelson}, \bibinfo{author}{M.~Cahen},
  \bibinfo{author}{S.~Gutt}, \bibinfo{journal}{Class. Quant. Grav.}
  \bibinfo{volume}{14} (\bibinfo{year}{1997}) \bibinfo{pages}{A93--A107}.
%Type = Inproceedings
\bibitem[{Weinstein and Xu(1998)}]{Weinstein:1998}
\bibinfo{author}{A.~Weinstein}, \bibinfo{author}{P.~Xu}, in:
  \bibinfo{editor}{A.~Khovanskij}, \bibinfo{editor}{A.~Varchenko},
  \bibinfo{editor}{V.~Vassiliev} (Eds.), \bibinfo{booktitle}{Geometry of
  differential equations. {D}edicated to {V. I. Arnold} on the occasion of his
  60th birthday}, \bibinfo{publisher}{American Mathematical Society},
  \bibinfo{address}{Providence}, \bibinfo{year}{1998}, pp.
  \bibinfo{pages}{177--194}.
%Type = Article
\bibitem[{Torres-Vega and Frederick(1990)}]{Torres-Vega:1990}
\bibinfo{author}{G.~Torres-Vega}, \bibinfo{author}{J.~H. Frederick},
  \bibinfo{journal}{J. Chem. Phys.} \bibinfo{volume}{93} (\bibinfo{year}{1990})
  \bibinfo{pages}{8862--8874}.
%Type = Article
\bibitem[{Torres-Vega and Frederick(1993)}]{Torres-Vega:1993}
\bibinfo{author}{G.~Torres-Vega}, \bibinfo{author}{J.~H. Frederick},
  \bibinfo{journal}{J. Chem. Phys.} \bibinfo{volume}{98} (\bibinfo{year}{1993})
  \bibinfo{pages}{3103--3120}.
%Type = Article
\bibitem[{Li et~al.(2004)Li, Wei, and L{\"u}}]{Li:2004}
\bibinfo{author}{Q.~S. Li}, \bibinfo{author}{G.~M. Wei}, \bibinfo{author}{L.~Q.
  L{\"u}}, \bibinfo{journal}{Phys. Rev. A} \bibinfo{volume}{70}
  (\bibinfo{year}{2004}) \bibinfo{pages}{22105--22109}.
%Type = Article
\bibitem[{Chru{\'s}ci{\'n}ski and M{\l}odawski(2005)}]{Chruscinski:2005}
\bibinfo{author}{D.~Chru{\'s}ci{\'n}ski}, \bibinfo{author}{K.~M{\l}odawski},
  \bibinfo{journal}{Phys. Rev. A} \bibinfo{volume}{71} (\bibinfo{year}{2005})
  \bibinfo{pages}{52104--52109}.
%Type = Article
\bibitem[{Lee(1995)}]{Lee:1995}
\bibinfo{author}{H.~W. Lee}, \bibinfo{journal}{Phys. Rep.}
  \bibinfo{volume}{259} (\bibinfo{year}{1995}) \bibinfo{pages}{147--211}.
%Type = Book
\bibitem[{Libermann and Marle(1987)}]{Libermann:1987}
\bibinfo{author}{P.~Libermann}, \bibinfo{author}{C.~M. Marle},
  \bibinfo{title}{Symplectic Geometry and Analytical Mechanics},
  \bibinfo{publisher}{D. Reidel Publishing Company},
  \bibinfo{address}{Dordrecht, Holland}, \bibinfo{year}{1987}.
%Type = Book
\bibitem[{Fecko(2006)}]{Fecko:2006}
\bibinfo{author}{M.~Fecko}, \bibinfo{title}{Differential Geometry and {L}ie
  Groups for Physicists}, \bibinfo{publisher}{Cambridge University Press},
  \bibinfo{address}{New York}, \bibinfo{year}{2006}.
%Type = Article
\bibitem[{Wetterich(2010)}]{Wetterich:2010}
\bibinfo{author}{C.~Wetterich}, \bibinfo{journal}{Ann. Phys. (NY)}
  \bibinfo{volume}{325} (\bibinfo{year}{2010}) \bibinfo{pages}{1359--1389}.
%Type = Article
\bibitem[{B{\l}aszak and Szablikowski(2003)}]{Blaszak:2003}
\bibinfo{author}{M.~B{\l}aszak}, \bibinfo{author}{B.~Szablikowski},
  \bibinfo{journal}{J. Phys. A} \bibinfo{volume}{36} (\bibinfo{year}{2003})
  \bibinfo{pages}{12181--12203}.
%Type = Article
\bibitem[{Kupershmidt(1990)}]{Kupershmidt:1990}
\bibinfo{author}{B.~A. Kupershmidt}, \bibinfo{journal}{Lett. Math. Phys.}
  \bibinfo{volume}{20} (\bibinfo{year}{1990}) \bibinfo{pages}{19--31}.
%Type = Article
\bibitem[{Gracia-Bond{\'\i}a and V{\'a}rilly(1988)}]{Gracia-Bondia:1988}
\bibinfo{author}{J.~M. Gracia-Bond{\'\i}a}, \bibinfo{author}{J.~C.
  V{\'a}rilly}, \bibinfo{journal}{J. Math. Phys.} \bibinfo{volume}{29}
  (\bibinfo{year}{1988}) \bibinfo{pages}{869--879}.
%Type = Article
\bibitem[{Agarwal and Wolf(1970)}]{Agarwal:1970}
\bibinfo{author}{G.~S. Agarwal}, \bibinfo{author}{E.~Wolf},
  \bibinfo{journal}{Phys. Rev. D} \bibinfo{volume}{2} (\bibinfo{year}{1970})
  \bibinfo{pages}{2161--2186}.
%Type = Article
\bibitem[{Mehta(1964)}]{Mehta:1964}
\bibinfo{author}{C.~L. Mehta}, \bibinfo{journal}{J. Math. Phys.}
  \bibinfo{volume}{5} (\bibinfo{year}{1964}) \bibinfo{pages}{677--686}.
%Type = Article
\bibitem[{Bartlett and Moyal(1949)}]{Bartlett:1949}
\bibinfo{author}{M.~S. Bartlett}, \bibinfo{author}{J.~E. Moyal},
  \bibinfo{journal}{Proc. Cambridge Philos. Soc.} \bibinfo{volume}{45}
  (\bibinfo{year}{1949}) \bibinfo{pages}{545--553}.
%Type = Article
\bibitem[{Mancini et~al.(1996)Mancini, Man'ko, and Tombesi}]{Mancini:1996}
\bibinfo{author}{S.~Mancini}, \bibinfo{author}{V.~I. Man'ko},
  \bibinfo{author}{P.~Tombesi}, \bibinfo{journal}{Phys. Lett. A}
  \bibinfo{volume}{213} (\bibinfo{year}{1996}) \bibinfo{pages}{1}.
%Type = Article
\bibitem[{Man'ko et~al.(2006{\natexlab{a}})Man'ko, Marmo, Simoni, Stern,
  Sudarshan, and Ventriglia}]{Manko.Stern:2006}
\bibinfo{author}{V.~I. Man'ko}, \bibinfo{author}{G.~Marmo},
  \bibinfo{author}{A.~Simoni}, \bibinfo{author}{A.~Stern},
  \bibinfo{author}{E.~C.~G. Sudarshan}, \bibinfo{author}{F.~Ventriglia},
  \bibinfo{journal}{Phys. Lett. A} \bibinfo{volume}{351}
  (\bibinfo{year}{2006}{\natexlab{a}}) \bibinfo{pages}{1--12}.
%Type = Article
\bibitem[{Man'ko et~al.(2006{\natexlab{b}})Man'ko, Marmo, Simoni, and
  Ventriglia}]{Manko:2006}
\bibinfo{author}{V.~I. Man'ko}, \bibinfo{author}{G.~Marmo},
  \bibinfo{author}{A.~Simoni}, \bibinfo{author}{F.~Ventriglia},
  \bibinfo{journal}{Open Sys. Inf. Dyn.} \bibinfo{volume}{13}
  (\bibinfo{year}{2006}{\natexlab{b}}) \bibinfo{pages}{239--253}.
%Type = Article
\bibitem[{Ibort et~al.(2009)Ibort, Man'ko, Marmo, Simoni, and
  Ventriglia}]{Ibort:2009}
\bibinfo{author}{A.~Ibort}, \bibinfo{author}{V.~I. Man'ko},
  \bibinfo{author}{G.~Marmo}, \bibinfo{author}{A.~Simoni},
  \bibinfo{author}{F.~Ventriglia}, \bibinfo{journal}{Phys. Scr.}
  \bibinfo{volume}{79} (\bibinfo{year}{2009}) \bibinfo{pages}{065013}.
%Type = Book
\bibitem[{Pauli(1990)}]{Pauli:1990}
\bibinfo{author}{W.~Pauli}, \bibinfo{title}{General Principles of Quantum
  Mechanics}, \bibinfo{publisher}{Springer}, \bibinfo{year}{1990}.
%Type = Book
\bibitem[{Reichenbach(1942)}]{Reichenbach:1942}
\bibinfo{author}{H.~Reichenbach}, \bibinfo{title}{Philosophy of Quantum
  Mechanics}, \bibinfo{publisher}{University of California Press},
  \bibinfo{address}{Berkeley, California}, \bibinfo{year}{1942}.
%Type = Article
\bibitem[{Radon(1917)}]{Radon:1917}
\bibinfo{author}{J.~Radon}, \bibinfo{journal}{Ber. Verh. Sachs. Akad.}
  \bibinfo{volume}{69} (\bibinfo{year}{1917}) \bibinfo{pages}{262--277}.
%Type = Article
\bibitem[{Smithey et~al.(1993)Smithey, Beck, Raymer, and
  Faridani}]{Smithey:1993}
\bibinfo{author}{D.~T. Smithey}, \bibinfo{author}{M.~Beck},
  \bibinfo{author}{M.~G. Raymer}, \bibinfo{author}{A.~Faridani},
  \bibinfo{journal}{Phys. Rev. Lett.} \bibinfo{volume}{70}
  (\bibinfo{year}{1993}) \bibinfo{pages}{1244}.
%Type = Article
\bibitem[{Mlynek(1996)}]{Mlynek:1996}
\bibinfo{author}{J.~Mlynek}, \bibinfo{journal}{Phys. Rev. Lett.}
  \bibinfo{volume}{77} (\bibinfo{year}{1996}) \bibinfo{pages}{2933}.
%Type = Article
\bibitem[{Antonsen(1997)}]{Antonsen:1997b}
\bibinfo{author}{F.~Antonsen}  (\bibinfo{year}{1997}).
%Type = Article
\bibitem[{Galaviz et~al.(2008)Galaviz, Garcia-Compean, Przanowski, and
  Turrubiates}]{Galaviz:2008}
\bibinfo{author}{I.~Galaviz}, \bibinfo{author}{H.~Garcia-Compean},
  \bibinfo{author}{M.~Przanowski}, \bibinfo{author}{F.~J. Turrubiates},
  \bibinfo{journal}{Ann. Phys. (NY)} \bibinfo{volume}{323}
  (\bibinfo{year}{2008}) \bibinfo{pages}{267--290}.
%Type = Book
\bibitem[{Rudin(1970)}]{Rudin:1970}
\bibinfo{author}{W.~Rudin}, \bibinfo{title}{Real and Complex Analysis},
  \bibinfo{publisher}{McGraw-Hill Inc.}, \bibinfo{address}{New York},
  \bibinfo{year}{1970}.
%Type = Proceedings
\bibitem[{Halbout(2002)}]{Halbout:2002}
\bibinfo{editor}{G.~Halbout} (Ed.), \bibinfo{title}{Deformation quantization},
  volume~\bibinfo{volume}{1}, \bibinfo{publisher}{Walter de Gruyter},
  \bibinfo{address}{Berlin, New York}, \bibinfo{year}{2002}.

\end{thebibliography}

\end{document}